\definecolor{lightblue}{rgb}{0.5,0.5,1.0}
\definecolor{darkred}{rgb}{0.5,0,0}
\definecolor{darkgreen}{rgb}{0,0.5,0}
\definecolor{darkblue}{rgb}{0,0,0.5}
\tikzstyle{normalvertex}=[circle,fill=White,draw=Black]
\theoremstyle{definition}
\newtheorem{theorem}{Theorem}[section]
\newtheorem{lemma}[theorem]{Lemma}
\newtheorem{definition}[theorem]{Definition}
\newtheorem{corollary}[theorem]{Corollary}
\newtheorem{proposition}[theorem]{Proposition}
\DeclareMathOperator{\Aut}{Aut}
\DeclareMathOperator{\rk}{rk}
\DeclareMathOperator{\Sol}{Sol}
\DeclareMathOperator{\cl}{cl}
\DeclareMathOperator{\attr}{attr}
\DeclareMathOperator{\BP}{BP}
\DeclareMathOperator{\WL}{\texttt{WL}}
\DeclareMathOperator{\sel}{\texttt{sel}}
\DeclareMathOperator{\refine}{\texttt{ref}}
\DeclareMathOperator{\inv}{\texttt{inv}}
\numberwithin{equation}{section}
\numberwithin{figure}{section}
\title{An exponential lower bound for Individualization-Refinement algorithms for Graph Isomorphism}
\author{Daniel Neuen and Pascal Schweitzer\\ 
RWTH Aachen University\\
\texttt{\{neuen,schweitzer\}@informatik.rwth-aachen.de}
}
\newcounter{claimcounter}
\newenvironment{claim}[1][]{
  \renewcommand{\proof}{\smallskip\par\noindent\textit{Proof. }}
  \medskip\par\noindent%
  \ifthenelse{\equal{#1}{}}{%
    \setcounter{claimcounter}{0}\refstepcounter{claimcounter}\textit{Claim~\arabic{claimcounter}.}
  }{%
    \ifthenelse{\equal{#1}{resume}}{%
      \refstepcounter{claimcounter}\textit{Claim~\arabic{claimcounter}.}
    }{%
      \textit{Claim~#1.}
    }
  }
}{
  \par\medskip
}
\newcommand{\uend}{\hfill$\lrcorner$}
\newcommand{\notleftright}{\mathrel{\ooalign{$\Leftrightarrow$\cr\hidewidth$/$\hidewidth}}}
\begin{document}

\maketitle

\begin{abstract}
 The individualization-refinement paradigm provides a strong toolbox for testing isomorphism of two graphs and indeed, the currently fastest implementations of isomorphism solvers all follow this approach. 
 While these solvers are fast in practice, from a theoretical point of view, no general lower bounds concerning the worst case complexity of these tools are known. In fact, it is an open question whether individualization-refinement algorithms can achieve upper bounds on the running time similar to the more theoretical techniques based on a group theoretic approach.
 
 In this work we give a negative answer to this question and construct a family of graphs on which algorithms based on the individualization-refinement paradigm require exponential time.
 Contrary to a previous construction of Miyazaki, that only applies to a specific implementation within the individualization-refinement framework, our construction is immune to changing the cell selector, or adding various heuristic invariants to the algorithm.
 Furthermore, our graphs also provide exponential lower bounds in the case when the $k$-dimensional Weisfeiler-Leman algorithm is used to replace the standard color refinement operator and the arguments even work when the entire automorphism group of the inputs is initially provided to the algorithm.
\end{abstract}

\section{Introduction}

The individualization-refinement paradigm provides a strong toolbox for testing isomorphism of two graphs. To date, algorithms that implement the individualization-refinement paradigm constitute the fastest practical algorithms for the graph isomorphism problem and for the task of canonically labeling combinatorial objects.

Originally exploited by McKay's software package nauty~\cite{MR635936} as early as 1981, in a nutshell, the basic principle is to classify vertices using a refinement operator according to an isomorphism-invariant property. In a basic form one usually uses the so-called color refinement operator, also called 1-dimensional Weisfeiler-Leman algorithm, for this purpose. Whenever the refinement is not sufficient, vertices within a selected color class (usually called a cell) are individualized one by one in a backtracking manner as to artificially distinguish them from other vertices. This yields a backtracking tree, that is traversed to explore the structure of the input graphs.
Additional pruning with the use of invariants and the exploitation of automorphisms of the graphs makes the approach viable in practice, leading to the fastest isomorphism solvers currently available. The use of invariants also allows us to define a smallest leaf, which can be used to canonically label the graph, i.e., to rearrange the vertices in canonical fashion as to obtain a standard copy of the graph.

There are several highly efficient isomorphism software packages implementing the paradigm. Among them are nauty/traces~\cite{mckay}, bliss~\cite{bliss}, conauto~\cite{conauto} and saucy~\cite{saucy}. While they all follow the basic individualization-refinement paradigm, these algorithms differ drastically in 
design principles and algorithmic realization. In particular, they differ in the way the search tree is traversed, they use different low level subroutines, have diverse ways to perform tasks such as automorphism detection, and they use different cell selection strategies as well as vertex invariants and refinement operators. 

With Babai's \cite{DBLP:conf/stoc/Babai16} recent quasi-polynomial time algorithm for the graph isomorphism problem, the theoretical worst case complexity of algorithms for the graph isomorphism problem was drastically improved from a previous best~$e^{O(\sqrt{n\log n })}$ (see \cite{DBLP:conf/stoc/BabaiL83}) to~$O(n^{\log^c n})$ for some constant~$c\in \mathbb{N}$.
As an open question, Babai asks~\cite{DBLP:conf/stoc/Babai16} for the worst case complexity of algorithms based on individualization-refinement techniques. About this worst case complexity, very little had been known.

In 1995 Miyazaki~\cite{DBLP:conf/dimacs/Miyazaki95} constructed a family of graphs on which the then current implementation of nauty has exponential running time.
For this purpose these graphs are designed to specifically fool the cell selection process into exponential behavior.
However, as Miyazaki also argues, with a different cell selection strategy the examples can be solved in polynomial time within the individualization-refinement paradigm.

In this paper we provide general lower bounds for individualization-refinement algorithms with arbitrary combinations of cell selection, refinement operators, invariants and even given perfect automorphism pruning. 
More precisely, the graphs we provide yield an exponential 
size search tree (i.e., $2^{\Omega(n)}$ nodes) for any combination of refinement operator, invariants, and the cell selector which are not stronger than the~$k$-dimensional Weisfeiler-Leman algorithm for some fixed dimension~$k$. The natural class of algorithms for which we thus obtain lower bounds encompasses all software packages mentioned above even with various combinations of switches that can be turned on and off in the execution of the algorithm to tune the algorithms towards specific input graphs. 
Our graphs are asymmetric, i.e., have no non-trivial automorphisms, and thus no strategy for automorphism detection can help the algorithm to circumvent the exponential lower bound.

Our construction makes use of a construction of Cai-Fürer-Immerman \cite{cfi} and the multipede construction of Gurevich and Shelah \cite{DBLP:journals/jsyml/GurevichS96} that yields for every dimension~$k$ non-isomorphic finite rigid structures that are not distinguishable by the~$k$-dimensional Weisfeiler-Leman algorithm.  
In more detail, our construction starts with a bipartite base graph that is obtained by a simple random process.  With high probability such a graph has strong expansion properties ensuring a variant of the meagerness property of~\cite{DBLP:journals/jsyml/GurevichS96} suitable for our purposes. Additionally, with high probability the graph has an almost-disjointness property for neighborhoods of vertices from one bipartition class.
To the base graph we apply a bipartite variant of the construction of~\cite{cfi}.
By individualizing a small fraction of vertices, we can guarantee that the final graphs are rigid (have no non-trivial automorphisms).  
For our theoretical analysis, we define a closure operator that gives us control over the effect of the Weifeiler-Leman algorithm on the graphs. Due to the disjointness property this effect is limited.
Exploiting automorphism of subgraphs of the input, we then proceed to argue that there is an exponential number of colorings of the graph that cannot be distinguished. 
These statements can be combined to show that the search tree of every algorithm within the individualization-refinement framework has exponential size.

Some of the packages above have a mechanism called component recursion (see~\cite{DBLP:conf/tapas/JunttilaK11}). 
We show that even this strategy cannot yield improvements for our examples.
We should point out that component recursion was used in Goldberg's result~\cite{Goldberg1983229} which shows that with the right cell selection strategy and the use of component recursion (in that paper called sections) individualization-refinement algorithms have exponential upper bounds, matching our lower bounds.

We also should remark that, seen as colored graphs, our graphs have bounded color class size and as such isomorphism of the graphs can be decided in polynomial time using simple group theoretic techniques (see \cite{BabaiRandom, DBLP:conf/focs/FurstHL80}).

Since the software packages that follow the individualization-refinement paradigm are designed for practical purposes rather than to obtain theoretical worst case guarantees, the question lies at hand how meaningful the lower bounds provided in the paper are. However, in separate work \cite{benchmark-paper}, we investigate practical benchmark graphs. It turns out that constructions related to the ones discussed in this paper in fact yield graphs which, experimentally, pose by far the most challenging graph isomorphism instances available to date.

\section{Preliminaries}

\subsection{Graphs}

A \emph{graph} is a pair $G=(V,E)$ with vertex set $V = V(G)$ and edge relation $E = E(G)$.
In this paper all graphs are finite simple, undirected graphs. The \emph{neighborhood} of~$v\in V(G)$ is denoted~$N(v)$.
For a set $X \subseteq V$ let $N(X) = (\bigcup_{v \in X} N(v)) \setminus X$.

An \emph{isomorphism} from a graph $G$ to another graph $H$ is a bijective mapping $\varphi\colon V(G) \rightarrow V(H)$ which preserves the edge relation, that is~$\{v,w\} \in E(G)$ if and only if $\{\varphi(v),\varphi(w)\} \in E(H)$ for all~$v,w \in V(G)$.
Two graphs $G$ and $H$ are \emph{isomorphic} ($G \cong H$) if there is an isomorphism from~$G$ to~$H$.
We write~$G \stackrel{\varphi}{\cong} H$ to indicate that~$\varphi$ is an isomorphism from~$G$ to~$H$. The~\emph{isomorphism type} of a graph~$G$ is the class of graphs isomorphic to~$G$.
An \emph{automorphism} of a graph $G$ is an isomorphism from~$G$ to itself. By $\Aut(G)$ we denote the group of automorphisms of $G$.
A graph $G$ is \emph{rigid} (or \emph{asymmetric}) if its automorphism group $\Aut(G)$ is trivial, that is, the only automorphism of $G$ is the identity map.

A vertex coloring of a graph~$G$ is a map~$c\colon V(G) \rightarrow \mathcal{C}$ into some set of colors~$\mathcal{C}$.
Mostly, we will use vertex colorings into the natural numbers~$c\colon V(G) \rightarrow  \{1,\ldots,n\}\eqqcolon [n]$. 

Isomorphisms between two colored graphs~$(G,c)$ and~$(G',c')$ are required to preserve vertex colors. Slightly abusing notation we will sometimes not differentiate between~$G$ and~$(G,c)$, if the coloring is apparent from context.

We will also consider vertex colored graphs with a distinguished sequence of not necessarily distinct vertices~$(G,c,\bar{v})$, where~$\bar v\in V^{\ell}$ for some~$\ell \in \mathbb{N}$.
For a tuple $\bar v \in V^{\ell}$ we let $|\bar v| = \ell$ be the length of the tuple.
Two such graphs with distinguished sequences~$(G,c,(v_1,\ldots,v_t))$ and~$(G',c',(v'_1,\ldots,v'_{t'}))$ are isomorphic if~$t= t'$ and there is an isomorphism~$\varphi$ from~$G$ to~$G'$ preserving vertex colors and satisfying~$\varphi(v_i) = v'_{i}$ for all~$i\in \{1,\ldots,t\}$.
In analogy to the definition above we write~$(G,c,(v_1,\ldots,v_t)) \stackrel{\varphi}{\cong} (G',c',(v'_1,\ldots,v'_{t'}))$. 

\subsection{The Weisfeiler-Leman algorithm}

The $k$-dimensional Weisfeiler-Leman algorithm is a procedure that, given a graph~$G$ and a coloring of the~$k$-tuples of the vertices, computes an isomorphism-invariant refinement of the coloring.
Let~$\chi,\chi'\colon V^k \rightarrow \mathcal{C}$ be colorings of the~$k$-tuples of vertices of~$G$, where~$\mathcal{C}$ is some set of colors. 
We say $\chi$ \emph{refines} $\chi'$ ($\chi \preceq \chi'$) if for all $\bar{v},\bar{w} \in V^k$ we have \[\chi(\bar{v}) = \chi(\bar{w}) \;\Rightarrow\; \chi'(\bar{v}) = \chi'(\bar{w}).\]

Let $(G,c)$ be a colored graph (where $c \colon V(G) \rightarrow \{1,\dots,n\}$ is a coloring of the vertices) and let $k \geq 1$ be some integer.
We set $\chi^G_0 \colon V^k \rightarrow \mathcal{C}$ to be the coloring, where each $k$-tuple is colored by the isomorphism type of its underlying induced ordered subgraph.
More precisely, we define $\chi^G_0$ in such a way that $\chi^G_0(v_1,\dots,v_k) = \chi^G_0(w_1,\dots,w_k)$ if and only if for all $i \in \{1,\dots,k\}$ it holds that $c(v_i) = c(w_i)$ and for all $i,j \in \{1,\dots,k\}$ we have $v_i = v_j \Leftrightarrow w_i = w_j$ and $\{v_i,v_j\} \in E(G) \Leftrightarrow \{w_i,w_j\} \in E(G)$.
For $k > 1$ we recursively define for~$G$ the coloring~$\chi^G_{i+1}$ by setting~$\chi^G_{i+1}(v_1, \dots, v_k) \coloneqq (\chi^G_{i} (v_1, \dots, v_k); \mathcal{M})$, where~$\mathcal{M}$ is the multiset defined as
\[\big\{\!\!\big\{\!\big(\!\chi^G_{i}(w,v_2, \dots, v_k),\chi^G_{i}(v_1,w, \dots, v_k), \dots, \chi^G_{i}(v_1, \dots, v_{k-1}, w)\big) \! \mid \! w\in V     \!\big\}\!\!\big\}.\]
For~$k=1$ the definition is analogous but the multiset is defined as~${\mathcal{M}}:=  \{\!\! \{ \chi^G_i(w) \mid w\in N(v_1) \}\!\! \}$ i.e., iterating only over neighbors of~$v_1$.

By definition, every coloring~$\chi^G_{i+1}$ induces a refinement of the partition of the~$k$-tuples of the graph~$G$ with coloring~$\chi^G_{i}$. Thus, there is some minimal~$i$ such that the partition induced by the coloring~$\chi^G_{i+1}$ is not strictly finer than the one induced by the coloring~$\chi^G_i$ on~$G$. For this minimal~$i$, we call the coloring~$\chi^G_i$ the \emph{stable} coloring of~$G$ and denote it by~$\chi^G$. 

For~$k \in \mathbb{N}$, the \emph{$k$-dimensional Weisfeiler-Leman algorithm} takes as input a colored graph~$(G, c)$ and returns the colored graph~$(G, \chi^G)$. For two colored graphs~$(G,c)$ and~$(G',c')$, we say that the~$k$-dimensional Weisfeiler-Leman algorithm \emph{distinguishes}~$G$ and~$G'$ with respect to the initial colorings~$c$ and~$c'$ if there is some color~$C$ such that the sets
$\{\bar{v} \mid \bar{v} \in V^k(G), \chi^G(\bar{v}) = C\}$ and $\{\bar{w} \mid \bar{w} \in V^k(G'), \chi^{G'}(\bar{w}) = C\}$
have different cardinalities. We write~$G \simeq_k H$ if the $k$-dimensional Weisfeiler-Leman algorithm does not distinguish between~$G$ and~$H$.

We extend the definition to vertex-colored graphs with distinguished vertices.
Let~$G$ and~$H$ be graphs with vertex colorings~$c_G$ and~$c_H$ and let~$(x_1,\ldots,x_t)\in V(G)^t$ and~$(y_1,\ldots,y_t)\in V(H)^t$ be sequences of vertices.
Define~$\widehat{c_G}$ as the coloring given by~$\widehat{c_G}(v) = (c_G(v); \{i\in \{1,\ldots,t\} \mid v= x_i\})$. We call this the coloring obtained from~$c_G$ by individualizing~$(x_1,\ldots,x_t)$. Similarly we define~$\widehat{c_H}$ as~$(c_H(v); \{i\in \{1,\ldots,t\} \mid v= y_i\})$.
Then we say that~$(G,c_G,(x_1,\ldots,x_t))$ is not distinguished from~$(H,c_H,(y_1,\ldots,y_t))$, in symbols~$(G,c_G,(x_1,\ldots,x_t)) \simeq_k (H,c_H,(y_1,\ldots,y_t))$, if~$G$ and~$H$ are not distinguished by the~$k$-dimensional Weisfeiler-Leman algorithm with respect to the initial colorings~$\widehat{c_G}$ and~$\widehat{c_H}$.
We denote by~$\WL_k(G,c,\bar{v})$ the vertex coloring that is induced by the stable coloring with respect to the initial coloring~$\widehat{c_G}$, that is, $(\WL_k(G,c,\bar{v}))(w) = \chi^{G}(w,\dots,w)$ where $\chi^{G}$ is the stable coloring of the $k$-tuples with respect to the initial coloring~$\widehat{c_G}$.

There is a close connection between the Weisfeiler-Leman algorithm and fixed-point logic with counting. In fact the stable coloring computed by~$k$-dimensional Weisfeiler-Leman comprehensively captures the information that can be obtained in fixed-point logic with counting using at most~$k+1$ variables. We refer to~\cite{cfi,IL90} for more details. 

\paragraph{Pebble Games.} We will not require details about the information computed by the Weisfeiler-Leman algorithm and rather use the following pebble game that is known to capture the same information. 
Let $k \in \mathbb{N}$ be a fixed number. For graphs $G,H$ on the same number of vertices and with vertex colorings~$c_G$ and~$c_H$, respectively, we define the bijective $k$-pebble game on $G$ and $H$ as follows:
\begin{itemize}
 \item The game has two players called Spoiler and Duplicator
 \item The game proceeds in rounds. Each round is associated with a pair of positions $(\bar v,\bar w)$ with~$\bar v \in (V(G)\cup \{\bot\})^k$ and~$\bar w \in (V(G)\cup \{\bot\})^k$. 
 \item The initial position of the game is $((\bot,\dots,\bot),(\bot,\dots,\bot))$.
 \item Each round consists of the following steps. Suppose the current position of the game is $((v_1,\ldots,v_k),(w_1,\ldots,w_k))$.
  \begin{itemize}
   \item[(S)] Spoiler chooses some $i \in [k]$.
   \item[(D)] Duplicator picks a bijection $f\colon V(G) \rightarrow V(H)$.
   \item[(S)] Spoiler chooses $v \in V(G)$ and sets $w = f(v)$.
  \end{itemize}
  The new position is then the pair consisting of $\bar v' = ( v_1,\dots, v_{i-1},v, v_{i-1},\dots, v_k)$ and $\bar w' = ( w_1,\dots, w_{i-1},w, w_{i-1},\dots, w_k)$.
  
  \item Spoiler wins the game if for the current position~$((v_1,\ldots,v_k),(w_1,\ldots,w_k))$ the induced graphs are not isomorphic.
  More precisely, Spoiler wins if there is an~$i\in \{1,\ldots,k\}$ such that~$v_i=\bot \notleftright w_i = \bot$ or~$c_G(v_i)\neq c_H(w_i)$ or there are~$i,j\in \{1,\ldots,k\}$ such that~$v_i = v_j\notleftright w_i =w_j$ or~$\{v_i,v_j\} \in E(G)\notleftright \{w_i,w_j\} \in E(H)$.
  If the play never ends Duplicator wins.
\end{itemize}

We say that Spoiler (respectively Duplicator) wins the bijective $k$-pebble game $\BP_k(G,H)$ if Spoiler (respectively Duplicator) has a winning strategy for the game.

\begin{theorem}[cf.\ \cite{cfi,IL90}]
 Let $G, H$ be two graphs.
 Then $G \simeq_k H$ if and only if Duplicator wins the pebble game $\BP_{k+1}(G,H)$.
\end{theorem}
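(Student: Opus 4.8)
We follow the standard argument behind this equivalence and prove it by comparing the two processes round by round. Write $\chi^G_r$ for the coloring of $V(G)^k$ produced after $r$ refinement rounds of the $k$-dimensional Weisfeiler--Leman algorithm (started from $\chi^G_0$), and for a position $p$ of $\BP_{k+1}(G,H)$ say that $p$ is \emph{$r$-safe} if Duplicator has a strategy not to lose during the next $r$ rounds. The first step is to observe that ``$r$-safe'' depends only on the pebbled \emph{configuration}: it is an equivalence relation on pairs $((v_1,\dots,v_{k+1}),(w_1,\dots,w_{k+1}))$ with entries in $V\cup\{\bot\}$ (one may treat $\bot$ as a private isolated vertex of a fresh color present in both graphs), and therefore induces colorings $\rho_r$ of $(k{+}1)$-tuples with $p$ being $r$-safe iff its two sides receive equal $\rho_r$-color. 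Unwinding the game rules shows that $\rho_{r+1}$ arises from $\rho_r$ by attaching, for each coordinate $i\in[k{+}1]$, the multiset $\{\!\{\rho_r(\bar v[i/u])\mid u\}\!\}$, where $\bar v[i/u]$ denotes $\bar v$ with its $i$-th entry replaced by $u$; the base case $r=0$ together with the choice of $\bot$ makes $\rho_0$ the atomic isomorphism type.

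The heart of the proof is to compare $\rho_r$, restricted to configurations whose last pebble is unused, with the coloring $\chi^G_r$ of the corresponding $k$-tuple; this is exactly where the ``$+1$'' in the pebble number is spent. From a configuration with pebbles $1,\dots,k$ on $\bar v,\bar w$ and the $(k{+}1)$-st pebble free, Spoiler opens by playing the spare; Duplicator must commit to a bijection $f\colon V(G)\to V(H)$ \emph{before} learning which coordinate is attacked next, and once Spoiler places the spare on some $u$, the pair $(u,f(u))$ is frozen for the rest of the play (moving any further pebble onto $u$ would otherwise violate the equality atoms). Hence $f$ must preserve, for every $u$, the \emph{joint} tuple $(\chi^G_r(\bar v[1/u]),\dots,\chi^G_r(\bar v[k/u]))$ -- precisely the object whose multiset defines $\chi^G_{r+1}$ in the recursion above -- and conversely, when $\chi^G_r(\bar v)=\chi^H_r(\bar w)$ the equality of these joint multisets supplies a bijection that is safe whichever coordinate Spoiler then attacks. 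Running this through the induction (one direction matching rounds exactly, the other only up to a factor depending on $k$, which is immaterial below) yields: the configuration with pebbles $1,\dots,k$ on $(\bar v,\bar w)$ and the spare free is $r$-safe for every $r$ -- equivalently, Duplicator wins the infinite game from it -- if and only if $\chi^G(\bar v)=\chi^H(\bar w)$, where $\chi^G,\chi^H$ denote the stable colorings.

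Finally one passes to the empty starting position. Since there are at most $n^{k+1}$ colors, both refinement processes stabilize after polynomially many rounds, and surviving that many rounds is, by a routine argument (the relevant predicates live on finite sets), the same as winning the infinite game. From the initial all-$\bot$ position Spoiler may use his first $k$ moves to fill pebbles $1,\dots,k$ with an arbitrary $k$-tuple of $G$ while Duplicator, through her bijections, simultaneously fixes a matched $k$-tuple of $H$; combining this with the previous paragraph and a counting argument over the color classes of the (isomorphism-invariant) stable colorings shows that Duplicator wins $\BP_{k+1}(G,H)$ exactly when for every color $C$ the sets $\{\bar v\in V(G)^k\mid\chi^G(\bar v)=C\}$ and $\{\bar w\in V(H)^k\mid\chi^H(\bar w)=C\}$ have the same cardinality -- that is, exactly when $G\simeq_k H$.

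The step I expect to be the real obstacle is the second one: making precise that the single extra pebble upgrades the merely ``coordinatewise'' information the game naively sees to the ``joint'' information carried by $k$-dimensional Weisfeiler--Leman. The induction invariant has to be robust enough to survive the phase in which Spoiler parks the spare pebble and probes with the others (so that the tidy ``$k$ pebbles sitting on a tuple'' picture is temporarily lost), and the treatment of $\bot$-entries has to be uniform throughout; the remaining steps are routine back-and-forth bookkeeping.
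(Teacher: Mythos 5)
The paper does not actually prove this theorem: it is stated with a ``cf.\ \cite{cfi,IL90}'' attribution and used as a black box, so there is no paper proof to compare against.

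Your sketch follows the standard Cai--F\"urer--Immerman / Immerman--Lander argument, and the high-level skeleton (define $r$-round safety $\rho_r$ on $(k{+}1)$-configurations, relate $\rho_r$ restricted to spare-free configurations to the WL refinement $\chi^G_r$ on $k$-tuples, then pass to the empty position via a counting argument over color-class histograms) is the right one. Treating $\bot$ as a fresh isolated color is a clean way to make the positions uniform. But as you yourself say in the closing paragraph, the step you have not actually carried out is precisely the heart of the proof. Concretely: after Spoiler parks the spare on $(u,f(u))$ and then probes coordinate $i$, the resulting position has \emph{all} $k{+}1$ pebbles placed --- it is not of the form ``$k$ pebbles on a tuple, spare free,'' so your induction hypothesis does not apply to it directly. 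One has to (a) formulate the induction invariant on general $(k{+}1)$-configurations, not just spare-free ones, and prove it is equivalent, up to a bounded round count, to the spare-free statement (this is where ``a factor depending on $k$'' really enters), and (b) justify the phrase ``frozen for the rest of the play'': $(u,f(u))$ is only forced for as long as the spare actually sits on it, and the argument needs that Spoiler can re-probe each coordinate through $u$ before relocating the spare while still appealing to the same $\chi^G_r$-level. Also the final paragraph asserts, but does not prove, the passage from ``$k$ pebbles placed $\Leftrightarrow$ stable colors match'' to ``empty position $\Leftrightarrow$ color histograms on $V^k$ match'': the opening phase has configurations with fewer than $k$ real pebbles, and one needs the $\bot$-extended coloring to extend the correspondence there. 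None of these is wrong in spirit, but as written the proposal is an outline with an acknowledged open gap at exactly the step that makes the theorem non-trivial, rather than a proof.
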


\section{Individualization-refinement algorithms}

An extensive description of the paradigm of individualization-refinement algorithms is given in \cite{mckay}. These algorithms capture information about the structure of a graph by coloring the vertices. An initially uniform coloring is first refined in an isomorphism-invariant manner as follows.

A \emph{refinement operator} is an isomorphism-invariant function $\refine$ that takes a graph $G$, a coloring~$c$ and a sequence $\bar{v} = (v_1,\dots,v_\ell) \in V^{\ell}$ and outputs a coloring $\refine(G,c,\bar{v}) \preceq c$ such that $v_i$ has a unique color for every $i \in [\ell]$. 
In this context isomorphism-invariant means that $(G,c,\bar{v})  \stackrel{\varphi}{\cong} (G',c',\bar{v}')$ implies~$(G,\refine(G,c,\bar{v})) \stackrel{\varphi}{\cong} (G',\refine(G',c',\bar{v}'))$. A typical choice for such a refinement would be the 1-dimensional Weisfeiler-Leman algorithm described above, where the vertices in~$\bar{v}$ are artificially given special colors.

A vertex with a unique color is called a \emph{singleton} and a coloring is called \emph{discrete} if all vertices are singletons. Due to the isomorphism-invariance, every isomorphism must preserve the refined colors. Thus, 
in case the refinement operator produces a discrete coloring on a graph~$G$, it is trivial to check whether this graph is isomorphic to another graph~$G'$. Indeed, the refinement of~$G'$ must also be discrete and there is at most one color preserving bijection between the vertex sets which can be trivially checked for being an isomorphism.
However, if the coloring of~$G$ is not discrete we need to do more work. In this case we select a color class, usually called a \emph{cell}, and then individualize a single vertex from the class.  Here \emph{individualization} means to refine the coloring by making the vertex a singleton. 
Since such an operation is not necessarily isomorphism-invariant, we branch over all choices of this vertex within the chosen cell. 
To the coloring with the newly individualized vertex, we apply the refinement operator again and proceed in a recursive fashion. To explain this in more detail we first need to clarify how the cell is chosen. 

Let $G=(V,E)$ be a graph and $c\colon V(G) \rightarrow [n]$ be a coloring of the vertices.
A \emph{cell selector} is an isomorphism-invariant function $\sel$ which takes as input a graph $G$ and a coloring $c$ and either outputs $\sel(G,c) \in [n]$ with $|c^{-1}(\sel(G,c))| \geq 2$ if such a color exists or $\sel(G,c) = \perp$ otherwise. 
In this context isomorphism-invariant means that $(G,c)\cong (G',c')$ implies~$\sel(G,c) =\sel(G',c')$. The performance of an individualization-refinement algorithm can drastically depend on the cell selection strategy. A typical strategy would be to take the first class of smallest size.

Let $G = (V,E)$ be a graph with an initial coloring $c_0\colon V(G) \rightarrow [n]$.
Let $\sel$ be a cell selector and $\refine$ a refinement operator.
Inductively define the search tree $\mathcal{T}^{\refine,\sel}(G,c_0)$ as follows.
The root of the tree is labeled with the empty sequence $\varepsilon$.
Let $\bar{v} = (v_1,\dots,v_\ell)$ be a node of the search tree.
Let $c = \refine(G,c_0,\bar{v})$ be the coloring computed by the refinement operator for the current sequence and let $i = \sel(G,c)$ be the color selected by the cell selector.
If $i = \perp$ then $\bar{v}$ is a leaf of the search tree and the coloring $c$ is discrete.
Otherwise, for each $w \in c^{-1}(i)$, there is child node labeled with $(v_1,\dots,v_\ell,w)$. The vertices of the search tree are referred to as \emph{nodes} and we identify them with the sequence of vertices they are labeled with.

\paragraph{Pruning with invariants.}
Together a cell selector and a refinement operator are sufficient to build a correct isomorphism test. 
Indeed, two graphs are isomorphic if and only if they have isomorphic leaves in their search tree. For these leaves, due to having a discrete coloring, isomorphism is trivial to check.
However, there are two further ingredients that are crucial for the efficiency of practical individualization-refinement algorithms. These are the use of node invariants and the exploitation of automorphisms. 
Let $\Omega$ be a totally ordered set.
A \emph{node invariant} is an isomorphism-invariant function $\inv$ taking a graph $G$, a coloring $c$ and a sequence $\bar{v} = (v_1,\dots,v_\ell) \in V^{\ell}$ and outputs an element $\inv(G,c,\bar v) \in \Omega$ such that for all vertex sequences $\bar v, \bar v' \in V^{*}$ of equal length~$\ell$
\begin{itemize}
 \item[(i)] if $\inv(G,c,(v_1,\dots,v_\ell)) < \inv(G,c,(v'_1,\dots,v'_\ell))$ then it also holds for all $w,w'\in V$ that $\inv(G,c,(v_1,\dots,v_\ell,w)) < \inv(G,c,(v'_1,\dots,v'_\ell,w'))$ 
and
 \item[(ii)] if $\refine(G,c,\bar v)$ and $\refine(G,c,\bar v')$ are discrete and $\inv(G,c,\bar v) = \inv(G,c,\bar v')$ then $(G,c,\bar v) \cong (G,c,\bar v')$.
\end{itemize}
Here isomorphism-invariant means that $(G,c,\bar{v})  \stackrel{}{\cong} (G',c',\bar{v}')$ implies~$\inv(G,c,\bar{v}) = \inv(G',c',\bar{v}')$.

Let $\inv$ be a node invariant and define
\[\mathcal{I} = \{\bar v \in V(\mathcal{T}^{\refine,\sel}(G,c_0)) \mid \nexists\, \bar v' \in V(\mathcal{T}^{\refine,\sel}(G,c_0))\colon |\bar v| = |\bar v'| \wedge \left(\inv(G,c_0,\bar v') < \inv(G,c_0,\bar v)\right)\}.\]
Finally, define the search tree $\mathcal{T}_{\inv}^{\refine,\sel}(G,c_0) \coloneqq  \mathcal{T}^{\refine,\sel}(G,c_0)[\mathcal{I}]$ as the subtree induced by the node set $\mathcal{I}$.
Observe that Property (i) implies that  $\mathcal{T}_{\inv}^{\refine,\sel}(G,c_0)$ is indeed a tree.  By using the invariant we thus cut off the parts of the search tree that do not have a nodes that are minimal among all nodes on their level. However, due to isomorphism invariance, the property that two graphs 
are isomorphic if and only if they have isomorphic leaves remains.

The use of an invariant also makes it easy to define a canonical labeling using a leaf for which the invariant is smallest.
For the purpose of obtaining our lower bounds we will not require detailed information on the concept of a canonical labeling and rather refer to~\cite{mckay}.

\paragraph{Pruning with automorphisms.} The second essential ingredient needed for the practicality of individualization-refinement algorithms is the exploitation of automorphisms. Indeed, if for two nodes in~$\mathcal{T}_{\inv}^{\refine,\sel}(G)$, labeled with~$\bar{v}$ and~$\bar{v}'$, respectively,   we have~$(G,c,\bar{v})  \stackrel{}{\cong} (G',c',\bar{v}')$ then it is sufficient to explore only one of the subtrees corresponding to the two nodes (we refer to \cite{mckay} for correctness arguments). 
Thus automorphisms that are detected by the algorithm can be used to cut off further parts of the search tree.
An efficient strategy for the detection of automorphisms is an essential part of individualization-refinement algorithms and here the various packages differ drastically (see~\cite{mckay}). Making our lower bounds only stronger, in this paper we take the following standpoint.
We will assume that all automorphisms of the input graph are provided to the algorithm in the beginning at no cost. In fact the following lower bound will be sufficient for our purposes.

\begin{proposition}\label{prop:size:of:tree:bounds:run:time}
The running time of an individualization-refinement algorithm with cell selector~$\sel$, refinement operator~$\refine$ and invariant~$\inv$ on a graph~$G$ is bounded from below by~$|\mathcal{T}_{\inv}^{\refine,\sel}(G)|/ |\Aut(G)|$.
\end{proposition}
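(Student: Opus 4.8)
The plan is to regard the running time as at least the number of nodes of $\mathcal{T}_{\inv}^{\refine,\sel}(G)$ that the algorithm actually processes, and to show that automorphism pruning — the only device that lets the algorithm process fewer nodes than $\mathcal{T}_{\inv}^{\refine,\sel}(G)$ contains — cannot push this number below $|\mathcal{T}_{\inv}^{\refine,\sel}(G)|/|\Aut(G)|$. The starting observation is that the group $\Aut(G,c_0)$ of colour-preserving automorphisms (a subgroup of $\Aut(G)$, writing $c_0$ for the initial colouring) acts on the node set of $\mathcal{T}_{\inv}^{\refine,\sel}(G)$ by tree automorphisms: since $\refine$, $\sel$ and $\inv$ are isomorphism-invariant, for $\varphi\in\Aut(G,c_0)$ and a node $\bar v=(v_1,\dots,v_\ell)$ the tuple $\varphi(\bar v)=(\varphi(v_1),\dots,\varphi(v_\ell))$ is again a node of $\mathcal{T}^{\refine,\sel}(G,c_0)$, lies in $\mathcal{I}$ whenever $\bar v$ does (as $\varphi$ permutes the nodes of each level and preserves $\inv$-values), and $\varphi$ respects the parent–child relation. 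Moreover, the condition $(G,c,\bar v)\cong(G,c,\bar v')$ triggering automorphism pruning, with $c$ the refined colouring at the respective node, is — by isomorphism-invariance of $\refine$ and because $\refine(G,c_0,\bar v)\preceq c_0$ — equivalent to $\bar v$ and $\bar v'$ lying in the same $\Aut(G,c_0)$-orbit. Hence the pruning rule can be phrased as follows: the algorithm processes a set $P$ of nodes of $\mathcal{T}_{\inv}^{\refine,\sel}(G)$ that contains the root, and such that for every processed non-leaf node $\bar u$ and every child $\bar w$ of $\bar u$ in $\mathcal{T}_{\inv}^{\refine,\sel}(G)$ either $\bar w\in P$, or the subtree at $\bar w$ was pruned, in which case some $\bar w'\in P$ lies in the $\Aut(G,c_0)$-orbit of $\bar w$.

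The key step is the claim that every node $\bar v$ of $\mathcal{T}_{\inv}^{\refine,\sel}(G)$ lies in the $\Aut(G,c_0)$-orbit of some node of $P$. I would prove this by induction on the depth of $\bar v$. The root is in $P$ and forms its own orbit. For a node $\bar v$ at depth $d+1$ with parent $\bar u$, the induction hypothesis yields $\bar u^{*}\in P$ and $\varphi\in\Aut(G,c_0)$ with $\varphi(\bar u)=\bar u^{*}$. Then $\varphi(\bar v)$ is a child of $\bar u^{*}$ that still belongs to $\mathcal{T}_{\inv}^{\refine,\sel}(G)$ (by $\Aut(G,c_0)$-invariance of the tree and of $\mathcal{I}$); in particular $\bar u^{*}$ is not a leaf and, upon processing $\bar u^{*}$, the algorithm inspects the child $\varphi(\bar v)$. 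By the pruning rule, either $\varphi(\bar v)\in P$, giving the desired orbit representative of $\bar v$, or the subtree at $\varphi(\bar v)$ was pruned in favour of some $\bar w'\in P$ in the orbit of $\varphi(\bar v)$, hence in the orbit of $\bar v$. This establishes the claim.

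To conclude, the claim shows that every $\Aut(G,c_0)$-orbit on the nodes of $\mathcal{T}_{\inv}^{\refine,\sel}(G)$ meets $P$, so $|P|$ is at least the number of such orbits. Since every orbit has size at most $|\Aut(G,c_0)|\le|\Aut(G)|$, the number of orbits is at least $|\mathcal{T}_{\inv}^{\refine,\sel}(G)|/|\Aut(G)|$, whence $|P|\ge|\mathcal{T}_{\inv}^{\refine,\sel}(G)|/|\Aut(G)|$. As processing each node of $P$ costs at least one step of computation, the running time is bounded below by $|\mathcal{T}_{\inv}^{\refine,\sel}(G)|/|\Aut(G)|$.

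The main obstacle I anticipate is not the orbit counting but committing to a sufficiently precise model of the individualization-refinement paradigm so that the reformulated pruning rule is literally justified — in particular the facts that automorphism pruning drops a node only in favour of a \emph{retained} node of the same isomorphism type (and hence never drops an entire orbit), and that no other mechanism of the paradigm deletes search-tree nodes. A secondary subtlety is that the node witnessing the pruning of $\bar w$ need not be a sibling of $\bar w$, which is why the argument is organized around the global processed set $P$ and an induction transporting information from a node to its parent, rather than a local orbit count among siblings. Both points are routine once one fixes the conventions of \cite{mckay}, but they should be made explicit.
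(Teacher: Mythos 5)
The paper states Proposition~\ref{prop:size:of:tree:bounds:run:time} without proof, treating it as immediate from the preceding discussion of automorphism pruning; your argument is a correct and careful formalization of exactly that intended reasoning. In particular, the key steps you isolate — that $\Aut(G,c_0)$ acts on $\mathcal{T}_{\inv}^{\refine,\sel}(G)$ by tree automorphisms (using isomorphism-invariance of $\refine$, $\sel$, $\inv$), that pruning drops a node only in favour of a retained node in the same orbit, and that the processed set therefore meets every orbit, whence the orbit count $\geq |\mathcal{T}_{\inv}^{\refine,\sel}(G)|/|\Aut(G,c_0)| \geq |\mathcal{T}_{\inv}^{\refine,\sel}(G)|/|\Aut(G)|$ — are precisely what the authors take for granted and defer to~\cite{mckay}.
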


The program nauty has an extensive selection of refinement operators/invariants that can be activated via switches.
To name a couple, there are various options to count for each vertex the number of vertices reachable by paths with vertex colors of a certain type, options to count substructures such as triangles, quadrangles, cliques up to size 10, independent sets up to size 10, and even options to count the number of Fano planes (the projective plane with 7 points and 7 lines). Finally, the user can implement their own invariant via a provided interface.

Our goal is to make a comprehensive statement about individualization-refinement algorithms independent of the choices for~$\sel$,~$\refine$, and~$\inv$. 
However, there is an intrinsic limitation here. For example a complete invariant that can distinguish any two non-isomorphic graphs would yield a polynomial-size search tree.
Likewise would a refinement operator that refines every coloring into the orbit partition under the automorphism group.
However, we do not know how to compute these two examples efficiently.
In fact computing either of these is at least as hard as the isomorphism problem itself. 

Of course it is nonsensical to allow that an individualization-refinement algorithm uses a subroutine that already solves the graph isomorphism problem.
It becomes apparent that there is a limitation to the operators we can allow. However, within this limitation we strive to be as general as possible.
With this in mind, throughout this paper we require that the information computed by refinement operators, invariants and cell selectors can be captured by a fixed dimension of the Weisfeiler-Leman algorithm.
This is the case for all available choices in all the practical algorithms. In what follows, we describe the requirement more formally.

We say a cell selector $\sel$ is \emph{$k$-realizable} if $\sel(G,c,\bar v) = \sel(G',c',\bar v')$ whenever $(G,c,\bar v) \simeq_k (G',c',\bar v')$.
Similarly a node invariant is \emph{$k$-realizable} if $\inv(G,c,\bar v) = \inv(G',c',\bar v')$
whenever $(G,c,\bar v) \simeq_k (G',c',\bar v')$.
Intuitively this means that whenever the $k$-dimensional Weisfeiler-Leman algorithm cannot distinguish between the graphs associated with two nodes of the refinement tree then the cell selector and the node invariant have to behave in the same way on both nodes.
Finally a refinement operator is \emph{$k$-realizable} if~$\WL_k(G,c,\bar{v})\preceq \refine(G,c,\bar{v}) $ holds for all triples~$(G,c,\bar{v})$.

We want to stress the fact that all the operators used in all practical implementations (e.g.\ nauty/traces, bliss, conauto, etc.) are $k$-realizable for some small constant $k \in \mathbb{N}$. In fact, from a theoretical point of view it would always be better to directly use the Weisfeiler-Leman algorithm as a refinement operator, since it is polynomial-time computable. Let us remark that the only reason why the individualization-refinement algorithms do not do this is the excessive running time and space consumption.

Based on these definitions we can now formulate our main result, which implies an exponential lower bound for individualization-refinement algorithms within the framework.

\begin{theorem}
 \label{thm:main}
 For every constant $k \in \mathbb{N}$ there is a family of rigid graphs $(G_n)_{n \in \mathbb{N}}$ with $|V(G_n)| \leq n$ such that for every $k$-realizable cell selector $\sel$, every $k$-realizable refinement operator $\refine$, and every $k$-realizable node invariant $\inv$ it holds that
 \begin{equation}
  |\mathcal{T}_{\inv}^{\refine,\sel}(G_n)| \in 2^{\Omega(n)}.
 \end{equation}
\end{theorem}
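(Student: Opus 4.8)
The plan is to construct the graphs $G_n$ via a composition of the Cai--Fürer--Immerman (CFI) gadget construction with a carefully chosen bipartite base graph, so that the resulting graph has two properties that together force an exponential search tree. The first property is that the base graph, obtained by a random bipartite process, simultaneously has strong expansion (a ``meagerness''-type property in the spirit of Gurevich--Shelah) and an almost-disjointness property for the neighbourhoods of vertices in one bipartition class; by a standard first-moment/union-bound argument these hold with high probability, so a suitable base graph exists. Applying a bipartite variant of the CFI construction to such a base graph, and then individualizing a small, cleverly chosen set of vertices, yields a rigid graph $G_n$ (so $|\Aut(G_n)| = 1$ and, by Proposition~\ref{prop:size:of:tree:bounds:run:time}, the running time is bounded below by $|\mathcal{T}_{\inv}^{\refine,\sel}(G_n)|$ itself).

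Next I would set up the combinatorial machinery controlling the $k$-dimensional Weisfeiler--Leman algorithm. Define a closure operator $\cl$ on vertex sets: starting from a set $X$ of individualized vertices, $\cl(X)$ adds vertices that become ``forced'' or ``locally determined'' given $X$. The almost-disjointness property is exactly what is needed to show that closures grow slowly, i.e.\ $|\cl(X)|$ stays linear in $|X|$ with a small multiplicative constant, and in particular the closure of any set of size $o(n)$ still misses a linear-size ``free'' part of the graph. Via the bijective $(k+1)$-pebble game characterization (the Theorem cited from \cite{cfi,IL90}), I would prove a key indistinguishability lemma: for any node $\bar v$ of the search tree whose closure is small, and for suitable ``twisted'' CFI-type modifications supported on the free part, Duplicator wins the bijective $(k+1)$-pebble game, so $\WL_k$ — and hence any $k$-realizable refinement operator, cell selector, or invariant — cannot distinguish $\bar v$ from its twisted variant. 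This is where the expansion/meagerness of the base graph does the work: it guarantees that $k+1$ pebbles cannot ``surround'' any separator cutting off the free part, which is the classical mechanism making CFI graphs WL-hard, adapted here to the setting with individualized vertices.

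The final step assembles these into the lower bound on $|\mathcal{T}_{\inv}^{\refine,\sel}(G_n)|$. Consider any level of the search tree at depth $\ell = c\cdot n$ for a small constant $c$; at this depth every node has individualized $\ell$ vertices, whose closure is still of size $o(n)$ by the closure estimate, hence there remains a linear-size free part supporting exponentially many mutually indistinguishable ``states'' (concretely, $2^{\Omega(n)}$ choices of CFI-twist on the free part, all giving isomorphic subgraphs by an automorphism of a subgraph, yet not separated by $\WL_k$). Because the invariant $\inv$ is $k$-realizable it assigns all these nodes the same value, so the invariant-pruning step cannot discard them; because $G_n$ is rigid, automorphism pruning cannot discard them either; and because each node individualizes a genuinely new vertex, the branching degrees multiply. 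Carefully counting along the path from the root to depth $\ell$ — each individualization of a cell in the free part contributes a branching factor of at least $2$ that survives all pruning — yields $|\mathcal{T}_{\inv}^{\refine,\sel}(G_n)| \in 2^{\Omega(n)}$, as claimed.

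I expect the main obstacle to be the indistinguishability lemma: proving that Duplicator wins the bijective $(k+1)$-pebble game on $G_n$ versus its free-part twists \emph{even after $\Omega(n)$ vertices have been individualized}. The subtlety is that individualization can be adversarial (the cell selector is arbitrary, only $k$-realizable), so Duplicator's bijection strategy must be robust against any such history; making the closure operator capture precisely the right invariant of the individualized set, and showing it is preserved under Duplicator's moves while still leaving a large free part, is the technical heart of the argument.
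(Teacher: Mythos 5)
Your proposal follows the same high-level route as the paper (random bipartite base graph, multipede/CFI construction, small individualization set for rigidity, a closure operator to control $\WL_k$, a bijective pebble-game indistinguishability lemma, and the observation via Proposition~\ref{prop:size:of:tree:bounds:run:time} and Lemma~\ref{la:tree-nodes-from-equivalent-tuples} that $\simeq_k$-equivalent tuples cannot be pruned in a rigid graph). However, two points are off or missing.

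First, the roles of the two random-graph properties are reversed in your description. In the paper the \emph{meagerness/expansion} is what bounds the growth of the $d$-closure (Lemma~\ref{la:size-closure}), while the \emph{almost-disjointness} of neighbourhoods, i.e.\ item~\ref{item:meager-existence-3} of Theorem~\ref{thm:meager-existence}, is the hypothesis $|N(v)\cap(N(v_1)\cup\dots\cup N(v_k))|\leq d-k$ that makes Duplicator's strategy in the bijective $(k+1)$-pebble game go through (it guarantees an unpebbled, unfixed neighbour of each relevant~$v$; see Claim~1 in the proof of Lemma~\ref{la:equivalent-tuples-from-automorphisms}). Second, the final counting step is not quite right as stated. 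Asserting that ``each individualization of a cell in the free part contributes a branching factor of at least $2$ that survives all pruning'' does not follow: the cell selected by a $k$-realizable $\sel$ is a colour class of $\refine$, which is \emph{coarser} than $\WL_k$, so its elements need not be pairwise $\simeq_k$, and nothing forces the selector to pick cells inside the free part. What actually works (and what you also gesture at) is the single-node accounting used in Theorem~\ref{thm:lower-bound-from-meager}: one must first establish that the search tree actually reaches the desired depth --- the paper's Claim~3 there shows that if the projection of a leaf were too short, rigidity plus the existence of $\simeq_k$-equivalent tuples (Lemma~\ref{la:number-equivalent-tuples}) would contradict discreteness of the refinement --- and then, for a node $\bar y$ at that depth, conclude that its $2^{\Omega(n)}$ many $\simeq_k$-equivalent tuples all lie at the same level of the tree by Lemma~\ref{la:tree-nodes-from-equivalent-tuples}. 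Your proposal does not argue the tree is deep and substitutes a branching-factor heuristic for the equivalent-tuples count; both are gaps that the paper's Claim~3 and Lemma~\ref{la:tree-nodes-from-equivalent-tuples} are precisely there to fill.
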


Together with Proposition~\ref{prop:size:of:tree:bounds:run:time} this implies exponential lower bounds on the running time of individualization-refinement algorithms.

For the theorem we construct graphs which have large search trees. To prove that these search trees are large, we use the following lemma stating that for every two tuples $\bar v, \bar w \in V^{\ell}$ for which $(G,\bar v) \simeq_k (G,\bar w)$ holds, either both tuples are nodes in the search or neither of them is.

\begin{lemma}
 \label{la:tree-nodes-from-equivalent-tuples}
 Suppose $k \in \mathbb{N}$ and let $\sel$ be a $k$-realizable cell selector, $\inv$ a $k$-realizable node invariant and $\refine$ a $k$-realizable refinement operator.
 Furthermore, let $G$ be a graph and suppose $\bar v \in V(\mathcal{T}_{\inv}^{\refine,\sel}(G))$.
 Let $m = |\bar v|$.
 Then $\bar w \in V(\mathcal{T}_{\inv}^{\refine,\sel}(G))$ for every $\bar w \in V(G)^{m}$ with $(G,\bar v) \simeq_k (G,\bar w)$.
\end{lemma}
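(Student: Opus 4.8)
The plan is to prove the statement by induction on $m = |\bar v|$, traversing the search tree from the root downwards. The base case $m = 0$ is immediate since the empty tuple $\varepsilon$ is the root of $\mathcal{T}_{\inv}^{\refine,\sel}(G)$ and there is nothing to check. For the inductive step, write $\bar v = (\bar v_0, v)$ where $\bar v_0$ has length $m-1$, and let $\bar w = (w_1,\dots,w_m) \in V(G)^m$ be any tuple with $(G,\bar v) \simeq_k (G,\bar w)$. The key observation is that $k$-dimensional Weisfeiler-Leman indistinguishability of a pair of $m$-tuples implies indistinguishability of the corresponding $(m-1)$-prefixes: formally, $(G,\bar v) \simeq_k (G,\bar w)$ implies $(G, \bar v_0) \simeq_k (G, (w_1,\dots,w_{m-1}))$. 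This should follow directly from the definition of the individualized colorings $\widehat{c}$ (forgetting the last individualized vertex only coarsens the initial coloring, and WL-indistinguishability is preserved under such a weakening — alternatively one argues via the pebble game of the theorem cited, since a Duplicator strategy for the $(k+1)$-pebble game on the $m$-individualized graphs restricts to one on the $(m-1)$-individualized graphs). Set $\bar w_0 = (w_1,\dots,w_{m-1})$. By the inductive hypothesis, $\bar w_0 \in V(\mathcal{T}_{\inv}^{\refine,\sel}(G))$.

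Next I would show that $\bar w$ is in fact a child of $\bar w_0$ in the unpruned tree $\mathcal{T}^{\refine,\sel}(G)$. Let $c = \refine(G, c_0, \bar v_0)$ and $c' = \refine(G, c_0, \bar w_0)$. Since $\refine$ is $k$-realizable and $(G,\bar v_0) \simeq_k (G,\bar w_0)$, we have $\WL_k(G,c_0,\bar v_0) \preceq c$ and $\WL_k(G,c_0,\bar w_0) \preceq c'$; moreover $k$-realizability of the cell selector gives $\sel(G, c) = \sel(G, c')$ provided the colorings $c, c'$ are themselves $\simeq_k$-indistinguishable when used as initial colorings — here one must be a little careful, but the point is that the WL-stable coloring already refines $c$ and $c'$ identically (as ordered colorings modulo the correspondence witnessing $\simeq_k$), so the cell selected by $\sel$ is the same color $i$ in both. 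Now $v$ lies in the cell $c^{-1}(i)$ (it was chosen as a child of $\bar v_0$), and because $(G, \bar v_0) \simeq_k (G, \bar w_0)$ the cell $c^{-1}(i)$ and $c'^{-1}(i)$ have the same size and, since $(G, \bar v_0, v) = (G,\bar v) \simeq_k (G,\bar w) = (G, \bar w_0, w_m)$, the vertex $w_m$ lies in $c'^{-1}(i)$. Hence $\bar w = (\bar w_0, w_m)$ is a legitimate child node in $\mathcal{T}^{\refine,\sel}(G)$.

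Finally I would check that $\bar w$ survives the invariant-based pruning, i.e.\ $\bar w \in \mathcal{I}$. Since $\bar v \in \mathcal{T}_{\inv}^{\refine,\sel}(G)$, it is minimal on its level: $\inv(G,c_0,\bar v) \le \inv(G,c_0,\bar u)$ for every node $\bar u$ of length $m$. By $k$-realizability of $\inv$ and $(G,\bar v) \simeq_k (G,\bar w)$ we get $\inv(G,c_0,\bar w) = \inv(G,c_0,\bar v)$, so $\bar w$ is also minimal on level $m$, hence $\bar w \in \mathcal{I}$. Combined with the previous paragraph (which shows $\bar w$ is a node of the unpruned tree whose parent $\bar w_0$ survives pruning), this gives $\bar w \in V(\mathcal{T}_{\inv}^{\refine,\sel}(G))$, completing the induction.

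I expect the main obstacle to be the bookkeeping around $k$-realizability when passing from ``$(G,\bar v_0) \simeq_k (G,\bar w_0)$'' to ``$\sel$ picks the same color and the refined colorings agree as ordered colorings'': one has to make precise that $\simeq_k$-indistinguishability of the individualized graphs yields a color-preserving correspondence strong enough that applying $\refine$ (hence at least $\WL_k$) to both produces equivalent configurations, so that $\sel$, being $k$-realizable, returns literally the same color name and the membership $v \in c^{-1}(i)$ transfers to $w_m \in c'^{-1}(i)$. The cleanest way around this is to phrase everything in terms of the pebble game and note that a single WL-refinement step, individualization, and cell selection all commute with a Duplicator winning strategy; the remaining steps are routine.
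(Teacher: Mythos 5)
Your proposal is correct and follows essentially the same route as the paper's proof: induction on $m$, passing to the $(m-1)$-prefixes and invoking the inductive hypothesis (both tacitly using that the parent of a surviving node survives), then using $k$-realizability of $\refine$ and $\sel$ to see that the same cell is selected and that the last coordinate of $\bar w$ lands in it, and finally using $k$-realizability of $\inv$ to conclude $\bar w$ is not pruned. The extra caution you raise about transferring cell selection is legitimate but is exactly what the paper's definition of a $k$-realizable cell selector is designed to handle, so there is no real gap.
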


\begin{proof}
 We prove the statement by induction on $m \in \mathbb{N}$.
 For $m = 0$ the statement trivially holds since $\bar v = \bar w = \varepsilon$.
 So suppose $m > 0$.
 Let $\bar v' = (\bar v_1,\dots,\bar v_{m-1})$ be the tuple obtained from $\bar v$ by deleting the last entry and similarly define $\bar w' = (\bar w_1,\dots,\bar w_{m-1})$.
 Clearly, $(G,\bar v') \simeq_k (G,\bar w')$ and $\bar v' \in V(\mathcal{T}_{\inv}^{\refine,\sel}(G))$.
 So by induction hypothesis it follows that $\bar w' \in V(\mathcal{T}_{\inv}^{\refine,\sel}(G))$.
 Let $c_1 = \refine(G,\bar v')$ and $c_2 = \refine(G,\bar w')$.
 Because $\sel$ and~$\refine$ are $k$-realizable we get that $\sel(G,c_1) = \sel(G,c_2) \eqqcolon i$.
 So $\bar v_m \in c_1^{-1}(i)$ and also $\bar w_m \in c_2^{-1}(i)$ since $(G,\bar v) \simeq_k (G,\bar w)$.
 Thus, $\bar w \in V(\mathcal{T}^{\refine,\sel}(G))$.
 Furthermore, $\inv(G,\bar v) = \inv(G,\bar w)$ which implies that $\bar w \in V(\mathcal{T}_{\inv}^{\refine,\sel}(G))$.
\end{proof}

In the light of the lemma, to prove our lower bound it suffices to construct a graph~$G$ whose search tree has a node $\bar v$ with an exponential number of equivalent tuples.
To argue the existence of these we, roughly proceed in two steps.
First, we show that to obtain a discrete partition we have to individualize a linear number of vertices, in other words, we show that the search tree is of linear height. 
Thus, there is a node $\bar v$ in the corresponding search such that $|\bar v|$ is linear in the number of vertices of $G$.
Then, in a second step, we show that if~$|\bar v|$ is sufficiently large, there are exponentially many equivalent tuples.
To find such equivalent tuples we prove a limitation of the effect of the $k$-dimensional Weisfeiler-Leman algorithm after individualizing the vertices from~$\bar v$. Intuitively we identify a subgraph containing~$\bar v$ which encapsulates the effect of the Weisfeiler-Leman algorithm.
We then exploit the existence of many automorphisms of this subgraph to demonstrate the existence of many equivalent tuples.

\section{The multipede construction}

We describe a construction of a graph~$R(G)$ from a bipartite base graph~$G$. The construction is a combination of the Cai-Fürer-Immerman construction~\cite{cfi} giving graphs with large Weisfeiler-Leman dimension and the construction of Gurevich and Shelah of multipedes \cite{DBLP:journals/jsyml/GurevichS96} that yields rigid structures with such properties.

\paragraph{The Cai-Fürer-Immerman gadget.} For a non-empty finite set $S$ we define the CFI gadget $X_S$ to be the following graph.
For each $w \in S$ there are vertices $a_w$ and $b_w$ and for every $A \subseteq S$ with $|A|$ even there is a vertex $m_A$.
For every $A \subseteq S$ with $|A|$ even there are edges $\{a_w,m_A\} \in E(X_S)$ for all $w \in A$ and $\{b_w,m_A\} \in E(X_S)$ for all $w \in S\setminus A$.
As an example the graph $X_3 := X_{[3]}$ is depicted in Figure \ref{fig:cfi-gadget}.
The graph is colored so that~$\{a_w,b_w\}$ forms a color class for each~$w$ and so that~$\{m_A\mid A \subseteq S \text{ and } |A| \text{ even}\}$ forms a color class.

Let $X \subseteq S$ and $\gamma \in \Aut(X_S)$.
We say that $\gamma$ \emph{swaps} exactly the pairs of $X$ if $\gamma(a_w) = b_w$ for $w \in X$ and $\gamma(a_w) = a_w$ for $w \in S \setminus X$.

\begin{lemma}[\cite{cfi}]
 \label{la:cfi-gadget}
 Let $X \subseteq S$.
 Then there is an automorphism $\gamma \in \Aut(X_S)$ swapping exactly the pairs of $X$ if and only if $|X|$ is even.
 Additionally, if such an automorphism exists, it is unique.
\end{lemma}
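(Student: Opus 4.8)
The plan is to analyse how any color-preserving automorphism of $X_S$ must act and to reduce the whole statement to a parity computation with symmetric differences. Since $\{a_w,b_w\}$ is a color class for each $w\in S$ and $\{m_A\mid |A|\text{ even}\}$ is a single color class, every $\gamma\in\Aut(X_S)$ must map each pair $\{a_w,b_w\}$ onto itself (either fixing both endpoints or swapping them) and must permute the vertices $m_A$ among themselves. Thus an automorphism is specified by the set $X\subseteq S$ of pairs it swaps together with the induced permutation of the $m_A$'s, and the content of the lemma is precisely to determine for which $X$ such a $\gamma$ exists and to show that the permutation of the $m_A$'s is then forced.

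For the forward direction I would first note that the vertices $m_A$ are pairwise distinguished by their neighborhoods, since $m_A$ is adjacent to $a_w$ exactly when $w\in A$; explicitly $N(m_A)=\{a_w\mid w\in A\}\cup\{b_w\mid w\in S\setminus A\}$. Assuming $\gamma$ swaps exactly the pairs of $X$, a direct case analysis (according to whether $w\in A$ and whether $w\in X$) shows that $\gamma$ maps $N(m_A)$ to $\{a_w\mid w\in A\triangle X\}\cup\{b_w\mid w\in S\setminus(A\triangle X)\}$, which is exactly $N(m_{A\triangle X})$. Hence $\gamma(m_A)$ is forced to be the vertex $m_{A\triangle X}$, and in particular this vertex must exist, i.e.\ $|A\triangle X|$ must be even. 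Since $|A|$ is even and $|A\triangle X|=|A|+|X|-2|A\cap X|$, this forces $|X|$ to be even. The same argument shows that the action of $\gamma$ on the $m_A$'s is uniquely determined (it is $A\mapsto A\triangle X$) while its action on the pairs is $X$ by hypothesis, giving the uniqueness (``additionally'') part.

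For the converse, assuming $|X|$ is even I would simply define $\gamma$ by $\gamma(a_w)=b_w$, $\gamma(b_w)=a_w$ for $w\in X$, $\gamma(a_w)=a_w$, $\gamma(b_w)=b_w$ for $w\in S\setminus X$, and $\gamma(m_A)=m_{A\triangle X}$. Because $|X|$ is even, $A\triangle X$ has even cardinality whenever $A$ does, and $A\mapsto A\triangle X$ is an involution on the even subsets of $S$, so $\gamma$ is a well-defined color-preserving bijection. Preservation of adjacency then reduces to the equivalences $w\in A\Leftrightarrow w\notin A\triangle X$ for $w\in X$ and $w\in A\Leftrightarrow w\in A\triangle X$ for $w\notin X$, both immediate from the definition of symmetric difference, together with the fact that the only edges of $X_S$ run between an $a/b$-vertex and an $m$-vertex. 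I do not expect a genuine obstacle here; the only step requiring care is the bookkeeping in the neighborhood computation of the forward direction, where one tracks, case by case, which of $a_w$ and $b_w$ belongs to $\gamma(N(m_A))$, and everything else is routine.
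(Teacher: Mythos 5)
Your proof is correct and complete. The paper does not prove this lemma itself but cites it from Cai--F\"urer--Immerman~\cite{cfi}; your argument---reducing the action on the middle vertices to the involution $A\mapsto A\triangle X$, using the fact that the neighborhoods $N(m_A)$ are pairwise distinct to force this map, and extracting the parity constraint from $|A\triangle X|\equiv|A|+|X|\pmod 2$---is exactly the standard argument for this fact and handles both directions and the uniqueness claim cleanly.
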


\begin{figure}
 \centering
 \begin{tikzpicture}
  \node[style=normalvertex,fill=red,label=right:{$a_1$}] (a1) at (6,2) {};
  \node[style=normalvertex,fill=red,label=right:{$b_1$}] (b1) at (6,3) {};
  
  \node[style=normalvertex,fill=blue,label=left:{$a_2$}] (a2) at (0,0) {};
  \node[style=normalvertex,fill=blue,label=left:{$b_2$}] (b2) at (0,1) {};
  
  \node[style=normalvertex,fill=green,label=left:{$a_3$}] (a3) at (0,4) {};
  \node[style=normalvertex,fill=green,label=left:{$b_3$}] (b3) at (0,5) {};
  
  \node[style=normalvertex,label=85:{{\small $m_\emptyset$}}] (v1) at (3,4) {};
  \node[style=normalvertex,label=85:{{\small $m_{\{2,3\}}$}}] (v2) at (3,3) {};
  \node[style=normalvertex,label=85:{{\small $m_{\{1,3\}}$}}] (v3) at (3,2) {};
  \node[style=normalvertex,label=85:{{\small $m_{\{1,2\}}$}}] (v4) at (3,1) {};
  
  \path
   (v1) edge (b1)
   (v1) edge (b2)
   (v1) edge (b3)
   (v2) edge (b1)
   (v2) edge (a2)
   (v2) edge (a3)
   (v3) edge (a1)
   (v3) edge (b2)
   (v3) edge (a3)
   (v4) edge (a1)
   (v4) edge (a2)
   (v4) edge (b3);
  
 \end{tikzpicture}
 \caption{Cai-F\"{u}rer-Immerman gadget $X_3$}
 \label{fig:cfi-gadget}
\end{figure}
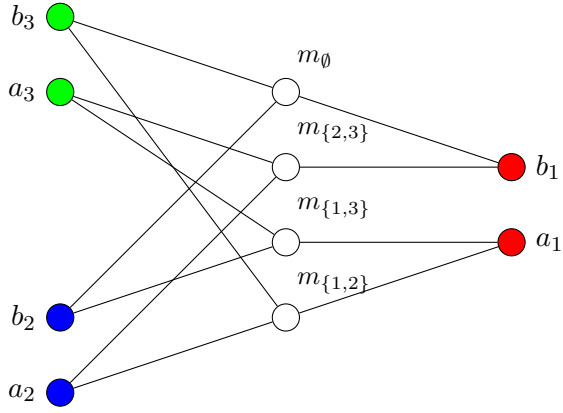

\paragraph{The multipede graphs.}

Let $G=(V,W,E)$ be a bipartite graph.
We define the multipede graph construction as follows.
We replace every $w \in W$ by two vertices $a(w)$ and $b(w)$.
For each $w \in W$ let $F(w) = \{a(w),b(w)\}$ and for $X \subseteq W$ define $F(X) = \bigcup_{w \in X} F(w)$.
We then replace every $v \in V$ by the CFI gadget $X_{N(v)}$ and identify the vertices $a_w$ and $b_w$ with $a(w)$ and $b(w)$ for all $w \in N(v)$.
The middle vertices will be denoted by $m_A(v)$ for $A \subseteq N(v)$ with $|A|$ even and the set of all middle vertices of $v \in V$ is denoted by $M(v)$.
For $Y \subseteq V$ we define $M(Y) = \bigcup_{v \in Y}M(v)$.
We also define a vertex coloring for the graphs, however we only specify the color classes, since the actual names of the  colors will be irrelevant to us.
In this manner, we require that for each $w \in W$, the pair of vertices $F(w)$ forms a color class, and for each $v \in V$ the set of corresponding middle vertices $M(v)$ forms a color class.
The resulting (vertex colored) graph will be denoted by $R(G)$. An example of this construction is shown in Figure~\ref{fig:bipartite:cfi:constr}.

For $I \subseteq W$ we further define the graph $R^{I}(G)$ similar to $R(G)$ but refine the coloring so that for each~$w\in I$ both~$\{a(w)\}$ and~$\{b(w)\}$ form a color class.
Hence, $R(G) = R^{\emptyset}(G)$.

\begin{figure}
\centering
\begin{tikzpicture}
\tikzstyle{normalvertex}=[circle, draw]
		\node [style=normalvertex] (0) at (-11, -2) {};
		\node [style=normalvertex] (1) at (-8, -2) {};
		\node [style=normalvertex] (2) at (-4, -2) {};
		\node [style=normalvertex] (3) at (-12, -0) {};
		\node [style=normalvertex] (4) at (-10, -0) {};
		\node [style=normalvertex] (5) at (-8, -0) {};
		\node [style=normalvertex] (6) at (-6, -0) {};
		\node [style=normalvertex] (7) at (-2, -0) {};
		\node [style=normalvertex] (8) at (-4, -0) {};
		\node [style=normalvertex,fill=red] (9) at (-11.75, -4) {};
		\node [style=normalvertex,fill=red] (10) at (-12.25, -4) {};
		\node [style=normalvertex,fill=green] (11) at (-10.25, -4) {};
		\node [style=normalvertex,fill=green] (12) at (-9.75, -4) {};
		\node [style=normalvertex,fill=blue] (13) at (-8.25, -4) {};
		\node [style=normalvertex,fill=blue] (14) at (-7.75, -4) {};
		\node [style=normalvertex] (15) at (-11.25, -6) {};
		\node [style=normalvertex] (16) at (-11.75, -6) {};
		\node [style=normalvertex] (17) at (-10.75, -6) {};
		\node [style=normalvertex] (18) at (-10.25, -6) {};
		\node [style=normalvertex,fill=gray!80] (19) at (-8.75, -6) {};
		\node [style=normalvertex,fill=gray!80] (20) at (-8.25, -6) {};
		\node [style=normalvertex,fill=gray!80] (21) at (-7.75, -6) {};
		\node [style=normalvertex,fill=gray!80] (22) at (-7.25, -6) {};
		\node [style=normalvertex,fill=orange] (23) at (-6.25, -4) {};
		\node [style=normalvertex,fill=orange] (24) at (-5.75, -4) {};
		\node [style=normalvertex,fill=cyan] (25) at (-4.25, -4) {};
		\node [style=normalvertex,fill=cyan] (26) at (-3.75, -4) {};
		\node [style=normalvertex,fill=violet] (27) at (-2.25, -4) {};
		\node [style=normalvertex,fill=violet] (28) at (-1.75, -4) {};
		\node [style=normalvertex,fill=black!80] (29) at (-4.75, -6) {};
		\node [style=normalvertex,fill=black!80] (30) at (-4.25, -6) {};
		\node [style=normalvertex,fill=black!80] (31) at (-3.75, -6) {};
		\node [style=normalvertex,fill=black!80] (32) at (-3.25, -6) {};
	
		\draw (3) to (0);
		\draw (0) to (4);
		\draw (0) to (5);
		\draw (4) to (1);
		\draw (1) to (5);
		\draw (1) to (6);
		\draw (6) to (2);
		\draw (2) to (7);
		\draw (8) to (2);
		\draw (10) to (16);
		\draw (10) to (15);
		\draw (9) to (17);
		\draw (9) to (18);
		\draw (11) to (16);
		\draw (11) to (17);
		\draw (12) to (15);
		\draw (12) to (18);
		\draw (13) to (16);
		\draw (13) to (18);
		\draw (14) to (15);
		\draw (17) to (14);
		\draw (19) to (11);
		\draw (11) to (20);
		\draw (21) to (12);
		\draw (12) to (22);
		\draw (13) to (19);
		\draw (13) to (21);
		\draw (14) to (20);
		\draw (14) to (22);
		\draw (23) to (19);
		\draw (23) to (22);
		\draw (24) to (20);
		\draw (24) to (21);
		\draw (23) to (29);
		\draw (23) to (30);
		\draw (24) to (31);
		\draw (24) to (32);
		\draw (25) to (29);
		\draw (25) to (31);
		\draw (26) to (30);
		\draw (26) to (32);
		\draw (27) to (29);
		\draw (27) to (32);
		\draw (28) to (30);
		\draw (28) to (31);
	
\node at (-13,0) {$W$};

\node at (-13,-2) {$V$};

\node at (-14.5,-1) {$G$};
\node at (-14.5,-5) {$R(G)$};

\node at (-11,-2.5) {$v_1$};

\node at (-8,-2.5) {$v_2$};

\node at (-4,-2.5) {$v_3$};

\foreach \x in {1,...,6}{
\node at (-14+2*\x,0.5) {$w_\x$};
}

\foreach \x in {0,3,7}{
\draw[dashed]  (-12.1+\x,-5.65) rectangle (-9.9+\x,-6.35);
}
\foreach \x in {1,...,6}{

\node at (-14+2*\x,-3.5) {\small{$a(w_\x)\, b(w_\x)$}};
}
\end{tikzpicture}
 \caption[The bipartite~CFI-construction]{The figure depicts a base graph~$G$ on the top and the graph~$R(G)$ obtained by applying the multipede construction on the bottom.}
 \label{fig:bipartite:cfi:constr}
\end{figure}
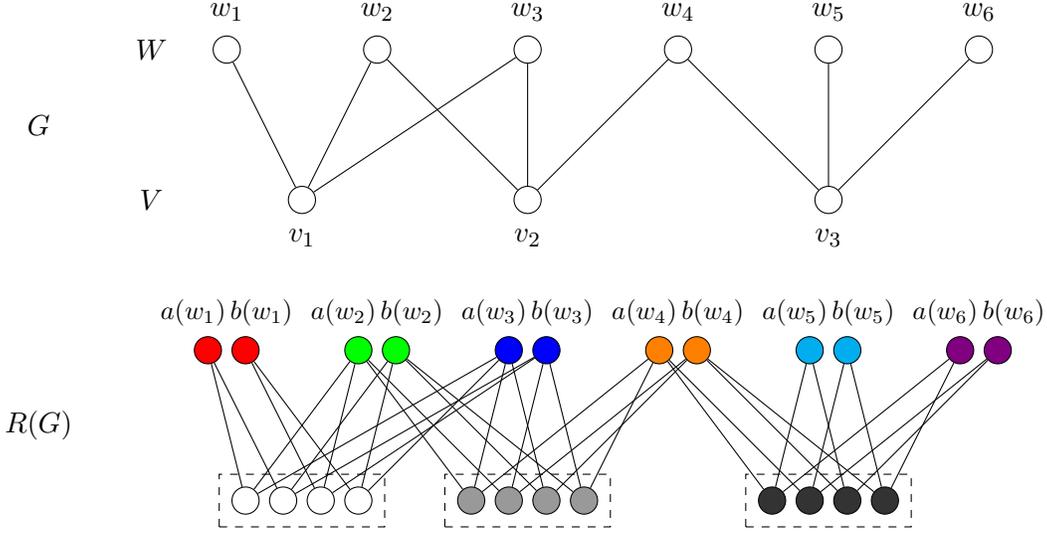

Since we are aiming to construct rigid graphs we start by identifying properties of $G$ that correspond to $R(G)$ having few automorphisms.

\begin{definition}
 Let $G=(V,W,E)$ be a bipartite graph.
 We say $G$ is \emph{odd} if for every $\emptyset \neq X \subseteq W$ there exists some $v \in V$ such that $|N(v) \cap X|$ is odd.
\end{definition}

\begin{lemma}
 \label{la:odd-is-rigid}
 Let $G=(V,W,E)$ be an odd bipartite graph.
 Then $R^{I}(G)$ is rigid for every $I \subseteq W$.
\end{lemma}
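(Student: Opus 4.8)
The plan is to take an arbitrary $\varphi \in \Aut(R^I(G))$ and argue that $\varphi$ must be the identity. Since $\varphi$ preserves the vertex coloring, it maps each color class to itself; in particular $\varphi(F(w)) = F(w)$ for every $w \in W$ (and $\varphi$ fixes $a(w)$ and $b(w)$ individually whenever $w \in I$), and $\varphi(M(v)) = M(v)$ for every $v \in V$. Consequently, for each $w \in W$ either $\varphi$ fixes the pair $F(w)$ pointwise or it swaps $a(w)$ and $b(w)$; let $X \subseteq W$ be the set of those $w$ for which $\varphi$ swaps $a(w)$ and $b(w)$.

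The main step is a gadget-by-gadget analysis. For $v \in V$, I would first observe that the vertex set $\{a(w),b(w) \mid w \in N(v)\} \cup M(v)$ induces in $R^I(G)$ precisely a copy of the CFI gadget $X_{N(v)}$: no middle vertex of $v$ is adjacent to anything outside this set, and there are no edges among the $a(w),b(w)$ nor among the middle vertices of $v$. Because $\varphi$ preserves the classes $F(w)$ for $w \in N(v)$ and the class $M(v)$, it maps this induced subgraph onto itself, hence restricts to an automorphism of $X_{N(v)}$, and by the definition of $X$ this restriction swaps exactly the pairs of $N(v) \cap X$. Lemma~\ref{la:cfi-gadget} then tells us that $|N(v) \cap X|$ must be even. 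Since $v \in V$ was arbitrary, $|N(v) \cap X|$ is even for all $v$, and because $G$ is \emph{odd} this forces $X = \emptyset$.

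Finally I would conclude rigidity. With $X = \emptyset$, the automorphism $\varphi$ fixes every vertex $a(w)$ and $b(w)$. For a middle vertex $m_A(v)$, its neighborhood $\{a(w) \mid w \in A\} \cup \{b(w) \mid w \in N(v)\setminus A\}$ is then fixed pointwise; as distinct middle vertices of $v$ have distinct neighborhoods and $\varphi(M(v)) = M(v)$, we get $\varphi(m_A(v)) = m_A(v)$. Hence $\varphi$ fixes all vertices. (Alternatively, this last paragraph can be shortened by invoking the uniqueness clause of Lemma~\ref{la:cfi-gadget} with $X=\emptyset$ to see that $\varphi$ restricted to each gadget is already the identity.) I do not expect a genuine obstacle here: the only real content is identifying $\varphi\!\restriction_{X_{N(v)}}$ as a gadget automorphism swapping exactly $N(v)\cap X$, after which Lemma~\ref{la:cfi-gadget} and the definition of oddness immediately finish the argument; note also that the refinement of the coloring on $I$ only adds constraints and is never needed.
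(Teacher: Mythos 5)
Your proof is correct and follows essentially the same approach as the paper's: reduce to a set $X$ of swapped pairs, use the CFI-gadget lemma to show each $|N(v)\cap X|$ is even, invoke oddness to conclude $X=\emptyset$, and deduce that $\varphi$ is the identity. The only superficial difference is that the paper phrases the middle step as a direct contradiction (assume $X\neq\emptyset$, pick a $v$ with $|N(v)\cap X|$ odd) while you establish the parity fact for all $v$ first; you are also slightly more explicit in the final step about why the middle vertices are fixed.
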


\begin{proof}
 Let $\gamma \in \Aut(R^{I}(G))$ be an automorphism.
 Due to the coloring of the vertices, we know $\gamma$ stabilizes every set $F(w)$ for $w \in W$ and every set $M(v)$ for $v \in V$.
 
 Let $X = \{w \in W \mid \gamma(a(w)) = b(w)\}$.
 Suppose towards a contradiction that $X \neq \emptyset$.
 Since $G$ is odd there is some $v \in V$ such that $|N(v) \cap X|$ is odd.
 Then $\gamma$ restricts to an automorphism of the gadget $X_{N(v)}$ swapping an odd number of the outer pairs.
 This contradicts the properties of CFI gadgets (cf.\ Lemma \ref{la:cfi-gadget}).
 
 So $X = \emptyset$ and thus $\gamma(a(w)) = a(w)$ for all $w \in W$.
 From this it easily follows that $\gamma$ is the identity mapping.
\end{proof}

For a bipartite graph $G = (V,W,E)$ let $A_G \in \mathbb{F}_2^{V \times W}$ be the matrix with $A_{v,w} = 1$ if and only if $\{v,w\} \in E(G)$. We denote by $A^{T}$ the transpose and by $\rk(A)$ the~$\mathbb{F}_2$-rank of a matrix $A$.

\begin{lemma}
 \label{la:size-automorphism-group}
 Let $G=(V,W,E)$ be a bipartite graph.
 Then $|\Aut(R(G))| = 2^{|W| - \rk(A_G)}$.
\end{lemma}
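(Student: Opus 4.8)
The plan is to exhibit an explicit group homomorphism whose kernel and image account for the claimed count. Consider an automorphism $\gamma \in \Aut(R(G))$. By the coloring (each pair $F(w)$ is a color class and each $M(v)$ is a color class), $\gamma$ must stabilize every set $F(w)$ and every set $M(v)$. Hence $\gamma$ is determined by, on the one hand, a vector $x = x(\gamma) \in \mathbb{F}_2^W$ recording for which $w$ the automorphism swaps $a(w)$ and $b(w)$, and on the other hand, its action on the middle vertices $M(v)$. The first step is to show that the map $\gamma \mapsto x(\gamma)$ is a well-defined group homomorphism $\Phi\colon \Aut(R(G)) \to \mathbb{F}_2^W$. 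Well-definedness is immediate; that it is a homomorphism follows because composing two automorphisms composes the corresponding swap-patterns coordinatewise over $\mathbb{F}_2$.

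Next I would identify the image of $\Phi$. I claim $\operatorname{im}(\Phi) = \ker(A_G)$, the kernel of the linear map $x \mapsto A_G x$ over $\mathbb{F}_2$. For the inclusion $\operatorname{im}(\Phi) \subseteq \ker(A_G)$: if $\gamma$ swaps exactly the pairs in $X = \operatorname{supp}(x)$, then for each $v \in V$ the restriction of $\gamma$ to the gadget $X_{N(v)}$ is an automorphism swapping exactly the pairs of $N(v)\cap X$, so by Lemma~\ref{la:cfi-gadget} the quantity $|N(v)\cap X|$ must be even, i.e.\ the $v$-th coordinate of $A_G x$ is $0$; since this holds for all $v$, $A_G x = 0$. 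Conversely, given $x \in \ker(A_G)$ with support $X$, for each $v$ we have $|N(v)\cap X|$ even, so by Lemma~\ref{la:cfi-gadget} there is a (unique) automorphism $\gamma_v$ of $X_{N(v)}$ swapping exactly the pairs of $N(v)\cap X$; since these gadgets overlap only in the shared $a(w),b(w)$ vertices and the $\gamma_v$ agree there (all swap $w$ iff $w\in X$), they glue to an automorphism $\gamma$ of $R(G)$ with $\Phi(\gamma) = x$. Thus $|\operatorname{im}(\Phi)| = |\ker(A_G)| = 2^{|W| - \rk(A_G)}$.

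Finally I would show $\Phi$ is injective, i.e.\ $\ker(\Phi)$ is trivial. This is exactly the content of the last paragraph of the proof of Lemma~\ref{la:odd-is-rigid}: if $\gamma \in \ker(\Phi)$ then $\gamma$ fixes $a(w)$ and $b(w)$ for every $w \in W$, hence fixes all outer vertices of every gadget $X_{N(v)}$; since a middle vertex $m_A(v)$ is the unique middle vertex of $M(v)$ adjacent to precisely $\{a_w : w \in A\} \cup \{b_w : w \in N(v)\setminus A\}$, $\gamma$ must also fix every middle vertex, so $\gamma = \mathrm{id}$. Combining, $|\Aut(R(G))| = |\ker(\Phi)|\cdot|\operatorname{im}(\Phi)| = 2^{|W|-\rk(A_G)}$.

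The only real work is verifying that the local gadget automorphisms $\gamma_v$ glue into a global automorphism — one must check compatibility on shared vertices (trivial, since membership in $X$ is global) and that the gadgets share no other vertices. The injectivity and the image-containment direction are essentially recycled from Lemmas~\ref{la:cfi-gadget} and~\ref{la:odd-is-rigid}; I expect no surprises there.
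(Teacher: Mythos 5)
Your proposal is correct and follows essentially the same approach as the paper: both establish a bijection between $\Aut(R(G))$ and $\Sol(A_G) = \ker(A_G)$ via the map $\gamma \mapsto x_\gamma$, using Lemma~\ref{la:cfi-gadget} for the evenness condition and the gluing of local gadget automorphisms. Your framing of the map as a group homomorphism with trivial kernel is a minor cosmetic elaboration, not a different argument.
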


\begin{proof}
 For a matrix $A \in \mathbb{F}_2^{m \times n}$ we define the set $\Sol(A) = \{x \in \mathbb{F}_2^{n} \mid Ax = 0\}$.
 To show the lemma it suffices to argue that
 $|\Aut(R(G))| = |\Sol(A_G)|$.
 
 For $\gamma \in \Aut(G)$ we define the vector $x_\gamma \in \mathbb{F}_2^{W}$ by setting $(x_\gamma)_w = 1$ if and only if $\gamma(a(w)) = b(w)$.
 Observe that the mapping $\gamma \mapsto x_\gamma$ is injective. Furthermore $A_Gx_\gamma = 0$ since for each $v \in V$ the automorphism $\gamma$ swaps an even number of neighbors of $v$.
  
 For the backward direction let $x \in \Sol(A_G)$.
 Then, for each $v \in V$, the set $\{w \in N(v) \mid x_w = 1\}$ has even cardinality.
 Thus, by the properties of the CFI-gadgets (cf.\ Lemma \ref{la:cfi-gadget}), there is a unique automorphism $\gamma \in \Aut(R(G))$ that swaps exactly those pairs $(a(w),b(w))$ for which $x_w = 1$.
\end{proof}

The arguments show that a bipartite graph $G=(V,W,E)$ is odd if and only if~$\rk(A_G)=|W|$.

\begin{corollary}
 \label{cor:reduce-size-odd}
 Let $G=(V,W,E)$ be an odd bipartite graph.
 Then there is some $V' \subseteq V$ with $|V'| \leq |W|$ such that the induced subgraph $G[V' \cup W]$ is odd.
\end{corollary}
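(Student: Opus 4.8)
The corollary asks us to prune an odd bipartite graph $G = (V,W,E)$ down to one with at most $|W|$ vertices on the $V$-side while preserving oddness. By the remark immediately preceding the corollary, oddness of a bipartite graph is equivalent to the statement $\rk(A_G) = |W|$, i.e.\ the rows of $A_G$ (indexed by $V$) span all of $\mathbb{F}_2^W$. So the plan is to translate the combinatorial goal into a purely linear-algebraic one and then apply the standard fact that a spanning set of a vector space contains a basis.

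The key steps are as follows. First I would invoke the equivalence: since $G$ is odd, $\rk(A_G) = |W|$, so the set of row vectors $\{r_v : v \in V\} \subseteq \mathbb{F}_2^W$, where $r_v$ is the $v$-th row of $A_G$, spans $\mathbb{F}_2^W$. Second, from a spanning set of the $|W|$-dimensional space $\mathbb{F}_2^W$ one can extract a subset $\{r_v : v \in V'\}$ that is a basis; hence $|V'| = \dim \mathbb{F}_2^W = |W| \le |W|$ and the rows indexed by $V'$ still span $\mathbb{F}_2^W$. Third, observe that the matrix $A_{G[V' \cup W]}$ is exactly the submatrix of $A_G$ consisting of the rows indexed by $V'$ (deleting vertices of $V$ does not change adjacencies between the remaining $V$-vertices and $W$), so $\rk(A_{G[V' \cup W]}) = |W|$. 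Finally, applying the equivalence in the other direction, $G[V' \cup W]$ is odd, which completes the proof.

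I do not expect any real obstacle here; the statement is essentially a restatement of "a spanning set contains a basis" once the dictionary between oddness and full row rank (already established in the excerpt via Lemmas~\ref{la:odd-is-rigid} and~\ref{la:size-automorphism-group}) is in place. The only point that warrants a sentence of care is checking that $A_{G[V'\cup W]}$ really is the row-restriction of $A_G$ — i.e.\ that no edges are lost or gained among the surviving vertices when passing to the induced subgraph — which is immediate from the definition of induced subgraph and of the biadjacency matrix. One could alternatively avoid the linear algebra and argue directly: pick any inclusion-minimal $V' \subseteq V$ such that $G[V' \cup W]$ is still odd, and show minimality forces $|V'| \le |W|$ by a rank/independence argument; but routing through $\rk(A_G) = |W|$ is cleaner and is exactly the language the preceding lemmas set up.
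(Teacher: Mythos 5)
Your proof is correct and matches the paper's intended argument exactly: the paper states the corollary immediately after observing that $G$ is odd if and only if $\rk(A_G)=|W|$, and leaves the proof implicit as the standard fact that a spanning set of rows contains a basis of size $|W|$, which is precisely what you spell out.
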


We can also use~$\rk(A_G)$ to quantify how many vertices must be individualized to make~$G$ rigid.

\begin{lemma}
 \label{la:make-rigid}
 Let $G=(V,W,E)$ be a bipartite graph.
 Then there is $I \subseteq W$ with $|I| \leq |W| - \rk(A_G)$ such that $R^{I}(G)$ is rigid.
\end{lemma}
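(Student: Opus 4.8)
The plan is to build the set $I$ greedily: start with $I_0 = \emptyset$ and, as long as $R^{I_j}(G)$ is not rigid, pick some $w \in W$ not yet in $I_j$ whose individualization strictly decreases the "size of the automorphism group", and set $I_{j+1} = I_j \cup \{w\}$. The key is to reformulate the automorphism group of $R^I(G)$ algebraically, just as in Lemma~\ref{la:size-automorphism-group}. Concretely, for $I \subseteq W$ one checks (by exactly the argument of Lemma~\ref{la:size-automorphism-group}, noting that an automorphism must additionally fix each singleton color class $\{a(w)\}, \{b(w)\}$ for $w \in I$) that $\Aut(R^I(G))$ is in bijection with the set $\Sol(A_G) \cap \{x \in \mathbb{F}_2^W \mid x_w = 0 \text{ for all } w \in I\}$. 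Write this solution space as $S_I$; it is an $\mathbb{F}_2$-vector space, $S_\emptyset = \Sol(A_G)$ has dimension $|W| - \rk(A_G)$, and $S_I = S_\emptyset \cap \{x : x|_I = 0\}$.

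From this algebraic description the greedy step is transparent. If $R^{I_j}(G)$ is not rigid, then $S_{I_j}$ contains a nonzero vector $x$; pick any coordinate $w \in W$ with $x_w = 1$ (necessarily $w \notin I_j$, since $x|_{I_j} = 0$), and put $w$ into $I_{j+1}$. Then $x \in S_{I_j} \setminus S_{I_{j+1}}$, so $S_{I_{j+1}}$ is a proper subspace of $S_{I_j}$, and hence $\dim S_{I_{j+1}} \leq \dim S_{I_j} - 1$. Since $\dim S_\emptyset = |W| - \rk(A_G)$ and dimensions are nonnegative, after at most $|W| - \rk(A_G)$ steps we reach an index $t \leq |W| - \rk(A_G)$ with $S_{I_t} = \{0\}$, i.e.\ $R^{I_t}(G)$ has trivial automorphism group. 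Setting $I = I_t$ gives $|I| = t \leq |W| - \rk(A_G)$ and $R^I(G)$ rigid, as required.

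The only real content is the algebraic identification $\Aut(R^I(G)) \leftrightarrow S_I$; once that is in place the argument is a one-line dimension count. That identification, however, is essentially a verbatim repeat of the proof of Lemma~\ref{la:size-automorphism-group}: the forward map sends $\gamma$ to the vector $x_\gamma$ with $(x_\gamma)_w = 1$ iff $\gamma(a(w)) = b(w)$, which is injective and lands in $\Sol(A_G)$ because $\gamma$ swaps an even number of neighbors at each $v \in V$ (Lemma~\ref{la:cfi-gadget}), and additionally satisfies $(x_\gamma)_w = 0$ for $w \in I$ because $\gamma$ fixes the singleton classes $\{a(w)\}$; conversely, each $x \in S_I$ yields a unique automorphism swapping exactly the pairs with $x_w = 1$, which respects the refined coloring since those $w$ lie outside $I$. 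I expect no genuine obstacle here; the main thing to be careful about is simply to state the refined-coloring bijection cleanly rather than re-deriving the CFI gadget facts, which are already available from Lemma~\ref{la:cfi-gadget}.
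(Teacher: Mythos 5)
Your proposal is correct and rests on the same algebraic core as the paper's proof, namely the identification of $\Aut(R^{I}(G))$ with $\Sol(A_G)\cap\{x\in\mathbb{F}_2^{W}\mid x|_I=0\}$ and a dimension count; the paper phrases this as a one-shot choice of $I$ by extending the row space of $A_G$ to a spanning set using a minimal set of standard basis vectors, whereas you arrive at the same set via a greedy dimension-decreasing argument, which is a cosmetic difference only.
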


\begin{proof}
 Let $B = \{e_w \in \mathbb{F}_2^{W} \mid w \in W\}$ be the standard basis for $\mathbb{F}_2^{W}$ (that is, $(e_w)_u = 1$ if and only if $w= u$).
 Furthermore, for $v \in V$, let $(A_G)_v$ be the $v$-th row of $A_G$ and let $B_I \subseteq B$ be a minimal subset of $B$ such that $B_I \cup \{((A_G)_v)^{T} \mid v \in V\}$ spans the entire space $\mathbb{F}_2^{W}$.
 Finally, let $I = \{w \in W \mid e_w \in B_I\}$.
 Clearly, $|I| \leq |W| - \rk(A_G)$.

 We argue that  $R^{I}(G)$ is rigid.
 Let $\gamma \in \Aut(R^{I}(G))$ and let $x_\gamma \in \mathbb{F}_2^{W}$ be the vector obtained by setting $(x_\gamma)_w = 1$ if and only if $\gamma(a(w)) = b(w)$.
 Then $(e_w)^{T}x_\gamma = 0$ for all $w \in I$.
 Furthermore $(A_G)_vx_\gamma = 0$ for all $v \in V$ by the same argument as in the proof of Lemma \ref{la:size-automorphism-group}.
 Since $B_I \cup \{((A_G)_v)^{T} \mid v \in V\}$ spans the entire space $\mathbb{F}_2^{W}$ it follows by the standard linear algebra arguments that $x_\gamma = 0$.
 Thus $\gamma$ is the identity.
\end{proof}

\section{The Weisfeiler-Leman refinement and the $d$-closure}

We wish to understand the effect of the Weisfeiler-Leman refinement on graphs obtained with the construction from the previous section. To this end we define the~$d$-closure.

\begin{definition}
 Let $d \in \mathbb{N}$. For $X \subseteq W$ define the \emph{$d$-attractor} of~$X$ as \[\attr^{d}(X) \coloneqq X \cup \bigcup_{v \in V\colon |N(v) \setminus X| \leq d} N(v).\]
 A set $X \subseteq W$ is \emph{$d$-closed} if $X = \attr^{d}(X)$.
 The \emph{$d$-closure} of $X$ is the unique minimal superset which is $d$-closed, that is \[\cl_G^{d}(X) = \bigcap_{\substack{
 X' \supseteq X,\\ X' \text{ is $d$-closed}}}X'.\]
\end{definition}

As observed in~\cite{DBLP:journals/jsyml/GurevichS96} the 1-closure describes exactly the information the 1-dimensional Weisfeiler-Leman captures.

\begin{lemma}
 \label{la:color-refinement-on-bipartite-cfi}
 Let $G=(V,W,E)$ be a bipartite graph and suppose $I \subseteq W$.
 Then \[(R^{I}(G),a(w)) \not\simeq_1 (R^{I}(G), b(w)) \;\;\Leftrightarrow\;\; w \in \cl_G^{1}(I)\] for all $w \in W$.
\end{lemma}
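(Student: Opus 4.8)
The plan is to prove both implications of the equivalence by relating the $1$-dimensional Weisfeiler-Leman refinement (equivalently, color refinement) on $R^I(G)$ to the $1$-attractor operation on subsets of $W$. The key conceptual point is that after individualizing a vertex $w\in W$ (i.e.\ giving $a(w)$ and $b(w)$ distinct colors), color refinement propagates this ``distinguishedness'' exactly along CFI gadgets whose outer pairs have become all-but-one distinguished — this is precisely the $1$-attractor step. So I would first set up the bookkeeping: for a coloring $c$ of $R^I(G)$, let $D(c)=\{w\in W \mid c(a(w))\neq c(b(w))\}$ be the set of ``split'' pairs, and similarly track which middle vertices $M(v)$ have been split apart. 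I would then argue, by a straightforward induction on the rounds of color refinement, that the stable coloring $\chi$ of $R^I(G)$ (starting from the initial coloring in which exactly the pairs in $I$ are split) satisfies $D(\chi)=\cl_G^1(I)$, together with the companion statement about middle vertices.

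For the two halves of the induction, here is what I expect to drive each round. (Upward, $\subseteq \cl_G^1(I)$:) I must show color refinement never splits more pairs than the closure allows. The mechanism that can split a pair $F(w)$ is that $a(w)$ and $b(w)$ acquire different multisets of neighbor colors; since the only neighbors of $a(w),b(w)$ lie in $\bigcup_{v\in N(w)} M(v)$, a split of $F(w)$ can only be triggered by some gadget $X_{N(v)}$ with $w\in N(v)$ in which the middle vertices have already been separated finely enough to tell $a(w)$ from $b(w)$. Using Lemma~\ref{la:cfi-gadget} and the combinatorial structure of $X_S$, this happens only once all outer pairs of $v$ other than (possibly) $w$ itself are already split — i.e.\ $|N(v)\setminus D(\chi')|\le 1$ for the current coloring $\chi'$ — which is exactly the condition $w\in\attr^1(D(\chi'))$ appearing in the definition of the $1$-attractor. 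Iterating gives $D(\chi)\subseteq\cl_G^1(I)$. (Downward, $\supseteq\cl_G^1(I)$:) conversely, whenever $|N(v)\setminus D(\chi')|\le 1$, I must show color refinement actually does split the pairs and middle vertices forced by the attractor. If some $w\in N(v)$ has $|N(v)\setminus D(\chi')|\le 1$ with the single unsplit pair being $w$, then the middle vertices $m_A(v)$ of the gadget already lie in distinct cells according to which of the split neighbors they are adjacent to, and reading off the adjacency to these middle vertices forces $a(w)$ and $b(w)$ into different cells. This is again just the explicit description of $X_{N(v)}$; the asymmetry between the two ``sides'' of $w$ (the $a_w$-side versus the $b_w$-side) among the $m_A(v)$ with $A$ even is what makes the two endpoints of $F(w)$ distinguishable. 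Both directions are local, gadget-by-gadget arguments and essentially already appear in~\cite{DBLP:journals/jsyml/GurevichS96}; I would cite that and give the adaptation to the colored bipartite variant $R^I(G)$.

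Finally I would conclude the lemma: by the definition of $\simeq_1$ via the stable coloring (resp.\ via $\BP_2$), $(R^I(G),a(w))\not\simeq_1(R^I(G),b(w))$ holds if and only if the stable coloring of $R^I(G)$ already separates $a(w)$ from $b(w)$ — since individualizing $a(w)$ versus $b(w)$ can only matter if the two were in the same cell of the stable coloring, in which case color refinement started from either individualization, being trapped by automorphism-invariance, cannot distinguish the graphs. (Here one uses that $F(w)$ is a color class, so either $\chi$ already splits $F(w)$, in which case pinning $a(w)$ behaves the same as pinning $b(w)$ up to the swap automorphism of the relevant gadget chain; or $\chi$ does not split it, and then the two pinned graphs are isomorphic via the CFI swap automorphism whose existence is governed by $\cl_G^1$.) Combining this with $D(\chi)=\cl_G^1(I)$ yields $(R^I(G),a(w))\not\simeq_1(R^I(G),b(w))\Leftrightarrow w\in D(\chi)\Leftrightarrow w\in\cl_G^1(I)$, as claimed.

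I expect the main obstacle to be the downward direction of the induction: verifying that color refinement genuinely performs the split whenever the attractor condition is met requires carefully checking, inside the gadget $X_{N(v)}$ with the outer pairs partially individualized, that the middle vertices really do end up in adjacency-distinguishing cells and that this asymmetry then propagates back to the unsplit outer pair. One has to be slightly careful that the middle-vertex color class $M(v)$ is refined in the ``right'' way before concluding anything about $F(w)$, so the induction should be on a joint invariant tracking both the split outer pairs and the refinement state of the $M(v)$'s, rather than on $D(\cdot)$ alone.
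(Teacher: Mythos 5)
Your reduction to an identity of the form $D(\chi_{\mathrm{stable}})=\cl_G^{1}(I)$, where $\chi_{\mathrm{stable}}$ is the stable $1$-WL coloring of $R^{I}(G)$ itself and $D(\cdot)$ collects the split pairs, is a sensible intermediate target, and the two gadget-by-gadget inductions you sketch for the inclusions are essentially the right ones. The gap is in the last step, where you pass from this identity back to the statement of the lemma. You assert that $(R^{I}(G),a(w))\not\simeq_1(R^{I}(G),b(w))$ holds if and only if $\chi_{\mathrm{stable}}$ separates $a(w)$ from $b(w)$, and you justify the nontrivial direction by claiming that ``the two pinned graphs are isomorphic via the CFI swap automorphism.'' That claim is false: when $G$ is odd, Lemma~\ref{la:odd-is-rigid} says $R^{I}(G)$ is rigid, so there is no automorphism sending $a(w)$ to $b(w)$, and yet the lemma asserts $(R^{I}(G),a(w))\simeq_1(R^{I}(G),b(w))$ for every $w\notin\cl_G^{1}(I)$ (take, for instance, any odd $G$ with all $V$-degrees at least $2$ and $I=\emptyset$, so that $\cl_G^{1}(\emptyset)=\emptyset$ while $R(G)$ has trivial automorphism group).

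More fundamentally, the equivalence you want to read off is not a general fact about $1$-WL. While $\chi^{G}(u)\neq\chi^{G}(v)$ does imply $(G,u)\not\simeq_1(G,v)$, the converse implication $\chi^{G}(u)=\chi^{G}(v)\Rightarrow(G,u)\simeq_1(G,v)$ fails for arbitrary graphs: in $G=C_6\cup C_3\cup C_3$ with $u$ on the $6$-cycle and $v$ on a $3$-cycle, all vertices get the same stable $1$-WL color, yet individualizing $u$ versus $v$ yields different color-class profiles. So the forward direction of the lemma (that $w\notin\cl_G^{1}(I)$ forces $(R^{I}(G),a(w))\simeq_1(R^{I}(G),b(w))$) genuinely requires an argument about the two \emph{pinned} graphs, not about the stable coloring of the unpinned $R^{I}(G)$, and not via a global automorphism. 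The paper's route is to exhibit an explicit equitable partition extending the individualizations and check stability directly; alternatively one can give a Duplicator strategy in $\BP_2$ in the spirit of Lemma~\ref{la:equivalent-tuples-from-automorphisms}, constructing a fresh ``local swap'' bijection in each round rather than a single automorphism. Your proof as written does not supply either of these and so does not establish the forward direction.
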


\begin{proof}[Proof sketch]
 The backward direction follows by an inductive argument from the properties of the CFI gadgets.
 For the forward direction it is easy to check that the corresponding partition on $F(W)$ directly extends to a stable partition for the graph $R^{I}(G)$.
\end{proof}

Let $G=(V,W,E)$ be a bipartite graph. Slightly abusing notation, for a set $X \subseteq W$ define $N^{-1}(X) \coloneqq \{v \in V \mid N(v) \subseteq X\}$.
For $X \subseteq W$ define $R(G)[[X]] \coloneqq  R^{}(G)[F(X) \cup M(N^{-1}(X))]$ (for a graph $G$ and a set $X \subseteq V(G)$ the graph $G[X]$ is the induced subgraph of $G$ with vertex set $X$).

In this work we essentially argue that the previous lemma can be used to show that for a 1-closed set~$X\subseteq W$ and a sequence of vertices $\bar x = (x_1,\dots,x_m)$ from~$F(W)$, for every automorphism $\varphi \in \Aut(R(G)[[X]])$ we have $(R(G),\bar x) \simeq_1 (R(G), \varphi(\bar x))$.
The 1-closure thus gives us a method to find tuples that cannot be distinguished by the 1-dimension Weisfeiler-Leman algorithm.
However, we require such a statement also for higher dimensions.
Obtaining a similar statement characterizing the effect of the~$k$-dimensional Weisfeiler-Leman seems to be much more intricate and it is easy to see that the $d$-closure does not achieve this.
However, under some additional assumptions, we show that the forward direction of the previous lemma still holds and thus, the $d$-closure gives us a tool to control the effect of $k$-dimensional Weisfeiler-Leman which is sufficient for our purposes.

\begin{lemma}
 \label{la:equivalent-tuples-from-automorphisms}
 Let $k,d \in \mathbb{N}$ and suppose $d \geq k$.
 Let $G=(V,W,E)$ be a bipartite graph and $X = \{w_1,\dots,w_m\} \subseteq W$ be a $d$-closed set.
 Furthermore suppose that for distinct $v,v_1,\dots,v_k \in V$ we have~$|N(v) \cap (N(v_1)\cup N(v_2)\cup \dots \cup N(v_k))| \leq d-k$.
 Let $\bar x = (x_1,\dots,x_m)$ be a sequence of vertices with $x_i \in F(w_i)$ and let $\varphi \in \Aut(R(G)[[X]])$.
 Then $(R(G),\bar x) \simeq_k (R(G), \varphi(\bar x))$.
\end{lemma}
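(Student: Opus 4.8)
The plan is to use the characterization of $\simeq_k$ by the bijective $(k+1)$-pebble game and to describe a winning strategy for Duplicator in $\BP_{k+1}(H_1,H_2)$, where $H_1$ is $R(G)$ with $\bar x$ individualized and $H_2$ is $R(G)$ with $\varphi(\bar x)$ individualized. It is convenient to describe Duplicator's bijections as \emph{twist maps}: for $g\colon W\to\mathbb{F}_2$ let $\pi_g$ fix every $F(w)$ setwise, exchange $a(w)$ and $b(w)$ exactly when $g(w)=1$, and send $m_A(v)\mapsto m_{A\,\triangle\,(N(v)\cap g^{-1}(1))}(v)$; the last clause is well defined precisely when $|N(v)\cap g^{-1}(1)|$ is even, in which case we say $g$ is \emph{valid at~$v$}. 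Restricted to a gadget at which it is valid, $\pi_g$ is a gadget automorphism (Lemma~\ref{la:cfi-gadget}); consequently $\pi_g$ preserves all colors and restricts to a partial isomorphism on any set of vertices lying in $F(W)$ together with middle classes of gadgets at which $g$ is valid. As a first step I would apply the description of $\Aut(R(G'))$ from the proof of Lemma~\ref{la:size-automorphism-group} to $G'=G[N^{-1}(X)\cup X]$ (whose multipede graph is exactly $R(G)[[X]]$) to see that $\varphi$ is the restriction of $\pi_y$ for some $y\in\mathbb{F}_2^{W}$ with $y^{-1}(1)\subseteq X$ that is valid at every $v$ with $N(v)\subseteq X$. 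In particular $\pi_y(\bar x)=\varphi(\bar x)$, and since on each $F(w_i)$ the map $\pi_y$ acts exactly as $\varphi$, the bijection $\pi_y$ matches the individualization colors of $H_1$ with those of $H_2$.

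Duplicator maintains the invariant that the current position of the game is the graph of $\pi_g$ restricted to the pebbled vertices, for some $g$ with $g^{-1}(1)\cap X=y^{-1}(1)$ that is valid at every gadget currently carrying a pebble on a middle vertex. The start position satisfies this with $g=y$, and by the previous paragraph the invariant prevents Spoiler from ever winning. For the inductive step, suppose Spoiler picks the pebble index $i$; after vacating it at most $k$ pebbles remain, determining a set $P_V\subseteq V$ of gadgets carrying a middle pebble and a set $P_W\subseteq W$ of vertices $w$ with $F(w)$ pebbled, with $|P_V|+|P_W|\le k$, and the twist $g$ is still valid at every gadget of $P_V$. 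Duplicator now builds one bijection $f$, which is itself not a twist map: on every $F(w)$ it uses $\pi_g$; on $M(v)$ for $v$ at which $g$ is valid it uses $\pi_g$; and on $M(v)$ for a gadget $v$ at which $g$ is \emph{not} valid it proceeds as follows. Such a $v$ satisfies $N(v)\not\subseteq X$, so since $X$ is $d$-closed we have $|N(v)\setminus X|\ge d+1$, and moreover $v\notin P_V$ (else $g$ would be valid at $v$). Padding $P_V$ to a set of exactly $k$ gadgets distinct from $v$ (possible since $|V|$ is large) and invoking the overlap hypothesis yields $|N(v)\cap\bigcup_{v'\in P_V}N(v')|\le d-k$, hence
\[
\Big|N(v)\setminus\Big(X\cup\bigcup_{v'\in P_V}N(v')\cup P_W\Big)\Big|\;\ge\;(d+1)-(d-k)-|P_W|\;\ge\;1,
\]
so Duplicator can pick some $w^{\ast}_v$ in this set and use $\pi_{g+e_{w^{\ast}_v}}$ on $M(v)$; this is valid at $v$ because flipping a single coordinate of $N(v)$ changes the parity of the intersection.

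It remains to verify that, whichever vertex $z\in V(H_1)$ Spoiler now pebbles, the new position obeys the invariant. If $z\in F(W)$, or if $z$ lies in a middle class of a gadget at which $g$ is valid, then $g$ itself witnesses the invariant. If $z$ lies in $M(v)$ for a gadget $v$ at which $g$ is not valid, set $g'\coloneqq g+e_{w^{\ast}_v}$: it agrees with $g$ on $X$ (as $w^{\ast}_v\notin X$), so $g'^{-1}(1)\cap X=y^{-1}(1)$; it agrees with $g$ on $P_W$ and on $N(v')$ for every $v'\in P_V$ (by the choice of $w^{\ast}_v$), so all pairs from the remaining old pebbles are still consistent with $\pi_{g'}$ and $g'$ is still valid on $P_V$; it is consistent with the new pebble by the definition of $f$ on $M(v)$; and it is valid at $v$. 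Thus Duplicator re-establishes the invariant, survives every round, and wins $\BP_{k+1}(H_1,H_2)$, which gives $(R(G),\bar x)\simeq_k(R(G),\varphi(\bar x))$.

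The crux — and the only place the hypotheses enter — is the inductive step. Because Duplicator must fix a bijection before Spoiler reveals the vertex, a correction coordinate $w^{\ast}_v$ has to be available for \emph{every} gadget $v$ that could receive the new middle pebble, avoiding the (at most $k$) gadgets and pairs already pebbled; the $d$-closedness of $X$ supplies the $d+1$ boundary neighbors of $v$ outside $X$, and the $(d-k)$-bounded overlap hypothesis ensures at least one of them is still free. One also has to observe that the single bijection $f$, which on each $F(w^{\ast}_v)$ uses the \emph{un-flipped} map $\pi_g$ although it uses $\pi_{g+e_{w^{\ast}_v}}$ on the corresponding $M(v)$, remains consistent with all possible continuations — which works exactly because no vertex of $F(w^{\ast}_v)$ is itself pebbled, so the old pebbles impose no constraint there.
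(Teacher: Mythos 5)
Your proposal is correct and takes essentially the same approach as the paper's proof: a winning strategy for Duplicator in $\BP_{k+1}$, maintaining the invariant that the current position is the restriction of a twist/automorphism of $R(G)$ extending $\varphi$ that is compatible with all currently pebbled gadgets, and using $d$-closedness together with the $(d-k)$-overlap hypothesis to find a free correction coordinate $w^\ast_v$ for each gadget where the twist is invalid. Your twist-map notation $\pi_g$ and the explicit bookkeeping of $g^{-1}(1)\cap X$ are just a more spelled-out version of the paper's $\alpha$, $T$, and $B_a$, and Claim~1 of the paper is exactly your inequality $(d+1)-(d-k)-|P_W|\ge 1$.
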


\begin{proof}
 We prove that Duplicator has a winning strategy in the bijective $(k+1)$-pebble game $\BP_{k+1}((R(G),\bar x),(R(G), \varphi(\bar x)))$ played on $(R(G),\bar x)$ and $(R(G), \varphi(\bar x))$.
 Towards this end we say a vertex $v \in V$ (respectively $w \in W$) is \emph{pebbled} if there exists $a \in M(v)$ (respectively $a \in F(w)$) which is pebbled.
 Furthermore we say that a vertex $w \in W$ is \emph{fixed} if there is some pebbled $v \in V$ with $\{v,w\} \in E(G)$.
 For a tuple $\bar a \in V(R(G))^{\leq k}$ of length at most~$k$ of pebbled vertices let
 \begin{align*}
  \cl(\bar a) = F(X) &\cup M(N^{-1}(X)) \cup \{M(v) \mid v \in V\colon v \text{ is pebbled}\}\\
                     &\cup \{F(w) \mid w \in W\colon w \text{ is pebbled or fixed}\}.
 \end{align*}
 Now let $\ell \leq k$, $\bar a, \bar b \in V(R(G))^{\ell}$ and suppose there is an isomorphism $\alpha\colon \cl(\bar a) \rightarrow \cl(\bar b)$ from $R(G)[\cl(\bar a)]$ to~$R(G)[\cl(\bar b)]$ mapping~$\bar x$ to~$\varphi(\bar x)$ and~$\bar  a$ to~$\bar  b$.
 Observe that $\alpha$ extends $\varphi$ and for $\ell = 0$ we can choose $\alpha = \varphi$.
 \begin{claim}
  For every unpebbled $v \in V$ with $N(v) \nsubseteq X$ there is some $w \in N(v) \setminus X$ which is neither pebbled nor fixed.
  \proof
  Consider $N(v) \setminus X$.
  Since $X$ is $d$-closed we conclude that $|N(v) \setminus X| \geq d+1$.
  By the assumption of the lemma there are at most~$d-k$ elements in~$N(v)$ that are fixed. 
  Thus, $N(v) \setminus X$ contains at least $k+1$ elements which are not fixed. Furthermore, there are at most~$k$ vertices in~$N(v)$ that are pebbled. Thus there is at least one element that is neither pebbled nor fixed.
  \uend
 \end{claim}
 For each unpebbled $v \in V$ with $N(v) \nsubseteq X$ choose a $w_v \in N(v) \setminus X$ that is neither pebbled nor fixed. Furthermore let $T = \{w \in X \mid \alpha(a(w)) = b(w)\}$. 
 For every~$a\in M(V)$ we define 
 \[B_a \coloneqq  \begin{cases}
         A \mathbin{\triangle} (T \cap N(v)) &\text{if } |T \cap N(v)| \text{ even, and}\\
         A \mathbin{\triangle} (T \cap N(v)) \mathbin{\triangle} \{w_v\} &\text{otherwise,}
        \end{cases}\]
 where~$A\subseteq N(a)$ is the set with~$m_A(v)= a$ and~$\mathbin{\triangle}$ denotes the symmetric difference. 
 We can now define the bijection
 \[f\colon V(R(G)) \rightarrow V(R(G))\colon a \mapsto \begin{cases}
                                                        \alpha(a) &\text{if } a \in \cl(\bar a)\\
                                                        a &\text{if } a \in F(W) \setminus \cl(\bar a)\\
                                                        m_{B_a}(v) &\text{if } a \in M(V) \setminus \cl(\bar a).
                                                       \end{cases}\]
 
 It is easy to check that for every $a \in V(R(G))$ there is an isomorphism $\alpha\colon \cl(\bar a,a) \rightarrow \cl(\bar b,f(a))$ 
 from $R(G)[\cl(\bar a,a)]$ to~$R(G)[\cl(\bar b,b)]$ mapping~$\bar x$ to~$\varphi(\bar x)$ and~$(\bar a,a)$ to~$(\bar b,b)$.
\end{proof}

\section{Meager graphs}

Searching for graphs in which we have control over the size of the~$d$-closure of a set we generalize the notion of an $\ell$-meager graph~\cite{DBLP:journals/jsyml/GurevichS96}.

\begin{definition}
 Let $G = (V,W,E)$ be a bipartite graph 
 and let $0 < \alpha < 1$.
 The graph $G$ is \emph{$(\ell,\alpha)$-meager} if for every $\emptyset \neq X \subseteq W$ with $|X| \leq \ell$ it holds that~$|N^{-1}(X)| < \alpha|X|$.
\end{definition}

Meager graphs have two properties that are advantageous for our course.
The first property is that for sufficiently small~$X \subseteq W$ the graph~$\Aut(R(G)[[X]])$ has many automorphisms.

\begin{lemma}
 \label{la:restricted-automorphism-size}
 Let $G=(V,W,E)$ be $(\ell,\alpha)$-meager and $X \subseteq W$ with $|X| \leq \ell$.
 Then \[|\Aut(R(G)[[X]])| \geq 2^{(1-\alpha)|X|}\;.\]
\end{lemma}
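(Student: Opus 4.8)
The plan is to reduce the statement to Lemma~\ref{la:size-automorphism-group} by recognizing $R(G)[[X]]$ as the multipede graph $R(G')$ of a suitable bipartite subgraph $G'$ of $G$, and then to estimate the relevant $\mathbb{F}_2$-rank using the meagerness hypothesis.

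First I would put $V' = N^{-1}(X)$ and let $G' = G[V' \cup X]$ be the bipartite graph with bipartition classes $V'$ (on the $V$-side) and $X$ (on the $W$-side). The key observation is that $R(G)[[X]]$ and $R(G')$ coincide as colored graphs. Indeed, by definition $R(G)[[X]] = R(G)[F(X) \cup M(N^{-1}(X))]$, and for every $v \in V' = N^{-1}(X)$ we have $N_G(v) \subseteq X$, so in $R(G)$ the gadget attached to $v$ is $X_{N_G(v)}$ with all of its outer vertices already in $F(X)$; this is precisely the gadget used for $v$ when constructing $R(G')$. One also has to check that no edges are lost or created when passing between the induced subgraph and $R(G')$, but this is immediate: every edge of $R(G)$ runs between an outer vertex in some $F(w)$ and a middle vertex in some $M(v)$ with $w \in N(v)$, and such an edge survives the restriction to $F(X) \cup M(N^{-1}(X))$ exactly when $w \in X$ and $v \in N^{-1}(X)$. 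The color classes $F(w)$ for $w \in X$ and $M(v)$ for $v \in V'$ match up by construction.

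Next, applying Lemma~\ref{la:size-automorphism-group} to $G'$ gives $|\Aut(R(G)[[X]])| = |\Aut(R(G'))| = 2^{|X| - \rk(A_{G'})}$, where $A_{G'} \in \mathbb{F}_2^{V' \times X}$ is the biadjacency matrix of $G'$. It therefore suffices to show $\rk(A_{G'}) < \alpha|X|$ (the case $X = \emptyset$ being trivial, as both sides equal $1$). Since $A_{G'}$ has only $|V'| = |N^{-1}(X)|$ rows, $\rk(A_{G'}) \leq |N^{-1}(X)|$, and since $G$ is $(\ell,\alpha)$-meager and $\emptyset \neq X$ with $|X| \leq \ell$, meagerness gives $|N^{-1}(X)| < \alpha|X|$. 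Combining these yields $\rk(A_{G'}) < \alpha|X|$, hence $|X| - \rk(A_{G'}) > (1-\alpha)|X|$ and thus $|\Aut(R(G)[[X]])| > 2^{(1-\alpha)|X|}$.

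I do not anticipate a genuine obstacle: the only point requiring a moment of care is the identification $R(G)[[X]] = R(G')$ — in particular that the gadgets, the identified outer vertices, and the vertex coloring all agree — after which the argument is a one-line rank bound plugged into Lemma~\ref{la:size-automorphism-group}.
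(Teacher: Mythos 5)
Your proof is correct and takes essentially the same approach as the paper: the paper's one-line argument applies Lemma~\ref{la:size-automorphism-group} to the induced bipartite subgraph on $N^{-1}(X) \cup X$ (implicitly identifying it with $R(G)[[X]]$) and then bounds $\rk(A_{G'}) \leq |N^{-1}(X)| < \alpha|X|$ by meagerness, which is precisely the chain you spell out in more detail.
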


\begin{proof}
 By Lemma \ref{la:size-automorphism-group} for $X \subseteq W$ with $|X| \leq \ell$ we have that \[|\Aut(R(G)[[X]])| \geq 2^{|X| - |N^{-1}(X)|}\geq 2^{(1-\alpha)|X|}\;.\qedhere\]
\end{proof}

The second property that is advantageous for us is that in a meager graph the size of the~$d$-closure of a set~$X$ is only by a constant factor larger than~$|X|$ itself.

\begin{lemma}
 \label{la:size-closure}
 Suppose $d \in \mathbb{N}$ and $d\alpha < 1$.
 Let $G = (V,W,E)$ be $(\ell,\alpha)$-meager and suppose $\emptyset \neq X \subseteq W$ with $|X| \leq \ell(1 - d\alpha) -d +1$.
 Then $|\cl_G^{d}(X)| < \frac{1}{1-d\alpha}|X|$.
\end{lemma}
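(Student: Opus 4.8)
The plan is to construct the $d$-closure incrementally, one ``attracted'' vertex at a time, and to keep the growing set small by invoking the meagerness bound on $N^{-1}$ at every step.

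First I would record a greedy description of $\cl_G^{d}(X)$. Put $Y_0 = X$ and, as long as there is a vertex $v \in V$ with $N(v) \nsubseteq Y_j$ but $|N(v) \setminus Y_j| \le d$, pick one such vertex, call it $v_{j+1}$, and set $Y_{j+1} = Y_j \cup N(v_{j+1})$. A routine argument shows that this process stops after some $t \le |V|$ steps (once a vertex is picked its whole neighbourhood lies in the current set, so it is never picked again) and that the terminal set satisfies $Y_t = \cl_G^{d}(X)$: it is $d$-closed by the stopping condition, it contains $X$, and an easy induction shows that every $Y_j$ is contained in every $d$-closed superset of $X$. Two elementary consequences of the construction will be used repeatedly: (a)~$|Y_{j+1} \setminus Y_j| \le d$ directly from the selection rule, so that $|Y_j| \le |X| + jd$ for all $j$; and (b)~the chosen vertices $v_1,\dots,v_j$ are pairwise distinct and satisfy $N(v_\ell) \subseteq Y_j$ for all $\ell \le j$, hence $\{v_1,\dots,v_j\} \subseteq N^{-1}(Y_j)$ and in particular $j \le |N^{-1}(Y_j)|$.

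The heart of the proof is an induction on $j$ establishing that $|Y_j| \le \ell$ for all $j \le t$ (we may assume $d \ge 1$, as $\cl_G^0(X)=X$). The base case follows from $|X| \le \ell(1-d\alpha) - d + 1 \le \ell$. For the inductive step, assume $|Y_j| \le \ell$; since $Y_j \supseteq X \ne \emptyset$, $(\ell,\alpha)$-meagerness yields $|N^{-1}(Y_j)| < \alpha |Y_j|$, and combining this with~(b) and $|Y_j| \le |X| + jd$ from~(a) gives $j < \alpha(|X| + jd)$, i.e.\ $j < \tfrac{\alpha |X|}{1-d\alpha}$. Feeding this back into $|Y_j| \le |X| + jd$ produces $|Y_j| < \tfrac{|X|}{1-d\alpha} \le \ell - \tfrac{d-1}{1-d\alpha}$, the last inequality being the hypothesis on $|X|$ after multiplying through by $1-d\alpha>0$. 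Therefore $|Y_{j+1}| \le |Y_j| + d < \ell - \tfrac{d-1}{1-d\alpha} + d \le \ell + 1$, using $\tfrac{d-1}{1-d\alpha} \ge d-1$ since $0 < 1-d\alpha \le 1$; as $|Y_{j+1}|$ is an integer, $|Y_{j+1}| \le \ell$, closing the induction.

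Once $|Y_t| \le \ell$ is known, running the same chain of inequalities for $j = t$ gives $|Y_t| < \tfrac{|X|}{1-d\alpha}$, and since $\cl_G^{d}(X) = Y_t$ this is exactly the claimed bound. The only genuinely delicate point is the bookkeeping inside the induction that keeps every partial closure $Y_j$ strictly below $\ell$, so that the meagerness hypothesis is applicable at each step; this is precisely what the slightly technical bound $|X| \le \ell(1-d\alpha) - d + 1$, rather than the cleaner $|X| \le \ell(1-d\alpha)$, is tailored to provide, since it leaves exactly enough slack to absorb one final batch of up to $d$ freshly attracted vertices without crossing the threshold $\ell$.
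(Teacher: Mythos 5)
Your proof is correct and follows essentially the same strategy as the paper's: construct the closure via an increasing chain of sets, use meagerness to bound the number of steps, and keep the partial closures small enough that meagerness remains applicable throughout. The only real difference is presentational — you run an explicit induction establishing $|Y_j| \le \ell$ for every $j$, whereas the paper argues by contradiction about a single carefully chosen index $j=\lceil \alpha|X|/(1-d\alpha)\rceil$; both exploit the same $-d+1$ slack in the hypothesis on $|X|$ to keep the partial closures from exceeding $\ell$.
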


\begin{proof}
 Let $X_0\subsetneq \dots \subsetneq X_m$ be a sequence of sets such that $X_0 = X$ and $X_{i+1} = X_i \cup N(v_i)$ for some $v_i \in V$ with $|N(v_i) \setminus X_i| \leq d$ and such that $X_m$ is $d$-closed.
 Clearly, for every $i \in [m]$ it holds that $|N^{-1}(X_i)| \geq i$.
 Suppose that $m \geq \frac{\alpha}{1-d\alpha}|X|$ and set $j = \lceil\frac{\alpha}{1-d\alpha}|X|\rceil$.
 Then $|X_j| \leq |X| + dj \leq \lfloor\ell(1 - d\alpha)\rfloor - d + 1 + d\lceil\frac{\alpha}{1-d\alpha}\ell(1 - d\alpha)\rceil \leq \lfloor\ell(1 - d\alpha)\rfloor - d + 1 + d \lceil\ell \alpha\rceil  =  \ell + \lfloor -\ell d\alpha\rfloor - d + 1 + \lceil\ell d\alpha\rceil + d - 1 = \ell$. Hence the meagerness is applicable to~$X_j$. We conclude
 $j\leq |N^{-1}(X_j)| < \alpha|X_j| \leq \alpha(|X| + dj)$ implying~$j < \lceil \frac{\alpha}{1-d\alpha}|X|\rceil$. But this contradicts the definition of~$j$.
 So $m < \frac{\alpha}{1-d\alpha}|X|$ and thus, $|\cl_G(X)| = |X_m| \leq |X| + dm < \frac{1}{1-d\alpha}|X|$.
\end{proof}

We now concern ourselves with the existence of meager graphs. 
However, we require several additional properties. Indeed, 
in the light of Lemma~\ref{la:equivalent-tuples-from-automorphisms} we also want certain neighborhoods to be almost disjoint.
Moreover we want the graph $R(G)$ to only have few automorphisms, which by Lemma~\ref{la:make-rigid} translates into the matrix~$A_G$ having large rank.

\begin{theorem}
 \label{thm:meager-existence}
 There exists $r_0 \in \mathbb{N}$ such that for every $r \in \mathbb{N}$ with $r \geq r_0$ and every sufficiently large $n \in \mathbb{N}$ 
 there is an $\left(\frac{n}{10r},\frac{3}{r}\right)$-meager graph $G=(V,W,E)$ with
 \begin{enumerate}
  \item $|V| = |W| = n$,
  \item $\deg(v) = r$ for all $v \in V$,
  \item \label{item:meager-existence-3} $|N(v_1) \cap N(v_2)| < 3$ for all distinct $v_1,v_2 \in V$ and
  \item \label{item:meager-existence-4} $\rk(A_G) \geq (1 - 2^{-r})n$. 
 \end{enumerate}
\end{theorem}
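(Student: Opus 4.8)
The natural approach is the probabilistic method: fix $r \geq r_0$ (with $r_0$ to be determined), sample a random $r$-regular bipartite graph $G = (V,W,E)$ with $|V| = |W| = n$ — most conveniently via a union of $r$ independent uniformly random perfect matchings between $V$ and $W$, or equivalently by assigning to each $v \in V$ a uniformly random $r$-subset of $W$ — and show that each of the four desired properties holds with probability tending to $1$ (or at least strictly greater than $3/4$) as $n \to \infty$. A union bound then yields a graph satisfying all four simultaneously. Property~2 is built into the model, so only the meagerness, the almost-disjointness (item~\ref{item:meager-existence-3}), and the rank bound (item~\ref{item:meager-existence-4}) need work.

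For almost-disjointness, I would estimate $\Pr[|N(v_1) \cap N(v_2)| \geq 3]$ for a fixed pair: this is $O((r^2/n)^3) = O(r^6/n^3)$, and summing over the $\binom{n}{2} = O(n^2)$ pairs gives $O(r^6/n) \to 0$. For meagerness, I would do a first-moment computation over all "bad" configurations: for each $s$ with $1 \leq s \leq \ell := n/(10r)$, bound the expected number of pairs $(X, Y)$ with $X \subseteq W$, $|X| = s$, $Y \subseteq V$, $|Y| = \lceil \alpha s \rceil$ (with $\alpha = 3/r$), and $N(v) \subseteq X$ for all $v \in Y$. The probability that a fixed $v$ has $N(v) \subseteq X$ is roughly $(s/n)^r$ (at most $\binom{s}{r}/\binom{n}{r}$), so the expected count is at most $\sum_{s} \binom{n}{s} \binom{n}{\alpha s} (s/n)^{r \alpha s}$. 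Using $\binom{n}{s} \leq (en/s)^s$ one bounds each summand by $\big( (en/s)^{1+\alpha} \cdot (s/n)^{r\alpha} \big)^{s}$, and since $s/n \leq 1/(10r)$ the factor $(s/n)^{r\alpha - 1 - \alpha}$ is smaller than any fixed negative power once $r$ (hence $r\alpha = 3$) beats $1 + \alpha$; this is exactly where $r \geq r_0$ enters, making the geometric-type sum over $s$ go to $0$. (This is the standard expander-style argument underlying meagerness in \cite{DBLP:journals/jsyml/GurevichS96}, adapted to the parameters stated.)

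For the rank bound, the cleanest route is to note that $\operatorname{rk}(A_G) = |W| - \dim \operatorname{Sol}(A_G^{T})$ where $\operatorname{Sol}(A_G^{T}) = \{x \in \mathbb{F}_2^{W} : A_G x = 0\}$ consists of the $x$ such that every $v \in V$ sees an even number of coordinates of $x$. I would show $\mathbb{E}[|\operatorname{Sol}(A_G^{T})|] \leq 1 + 2^{-rn} \cdot (\text{something } o(2^{n}))$, more precisely that for each fixed nonzero $x$ of weight $s$, $\Pr[A_G x = 0] \leq (\text{prob. one random } r\text{-set meets an } s\text{-set evenly})^{n} \leq (1/2 + o(1))^{n}$ when $s$ is not too close to $0$ or $n$, with separate handling of very small and very large weights (for which one uses that $G$ itself has no isolated-type obstructions, or a more careful binomial estimate). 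Summing $2^{s}$-type bounds over weights gives $\mathbb{E}[\dim \operatorname{Sol}(A_G^{T})] \leq 2^{-r} n$ after optimizing constants, and Markov's inequality then gives $\operatorname{rk}(A_G) \geq (1 - 2^{-r})n$ with probability bounded away from $0$; combined with the other three events, whose failure probabilities each tend to $0$, a union bound finishes the proof.

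The main obstacle is the meagerness estimate: one has to be careful that the parameters line up so that the exponent $r\alpha = 3$ is large enough to dominate the entropy terms $\binom{n}{s}\binom{n}{\alpha s}$ uniformly over the whole range $1 \leq s \leq n/(10r)$, and in particular that the threshold $\ell = n/(10r)$ and the slack $\frac{3}{r}$ are chosen consistently with the use of this theorem in Lemma~\ref{la:size-closure} (which needs $d\alpha < 1$, i.e. $d < r/3$). Getting an honest constant $r_0$ out of the first-moment bound — rather than just an asymptotic statement — requires tracking these constants explicitly, but no new idea beyond the standard probabilistic expander argument is needed.
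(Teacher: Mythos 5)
Your overall plan coincides with the paper's: the same random model (each $v\in V$ independently picks a uniform $r$-subset of $W$), a first-moment bound for each of the four properties, and a union bound. For items 2 and 3 your estimates match the paper's (Lemma~\ref{la:hyperedge-intersection}); for meagerness the paper first proves an explicit expansion statement (Lemma~\ref{la:expander}: the graph is a $(\frac{1}{10r},\frac{r}{2})$-expander with probability $\geq 8/9$) and then derives meagerness in two lines by contradiction, whereas you inline essentially the same first-moment computation on bad pairs $(X,Y)$ — a harmless reparametrization.

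Where you genuinely diverge is item~\ref{item:meager-existence-4}. The paper does not prove the rank bound at all; it invokes Calkin's theorem (\cite[Theorem~1.1]{calkin97}), which states precisely that a random $n\times n$ matrix over $\mathbb{F}_2$ with i.i.d.\ uniform rows of weight $k$ has $\rk \geq (1-2^{-k})n$ asymptotically almost surely for fixed $k\geq K$. Your sketch of a self-contained proof has a concrete misstep: the bound $\mathbb{E}\big[|\Sol(A_G)|\big] \leq 1 + 2^{-rn}\cdot o(2^n)$ cannot hold, because the low-weight kernel vectors already force $\mathbb{E}\big[|\Sol(A_G)|\big]$ to be exponentially large in $n$. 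For $x$ of weight $s$ we have $\Pr[A_Gx=0]=p_s^n$ with $p_s = 1 - rs/n + O((rs/n)^2)$ for small $s$, so $p_s^n \approx e^{-rs}$, and summing $\binom{n}{s}e^{-rs}$ over small $s$ gives $\approx \exp(ne^{-r})$. The correct route is to show $\log_2\mathbb{E}\big[|\Sol(A_G)|\big] = o(2^{-r}n)$ (which needs $e^{-r} < 2^{-r}\ln 2$, true once $r$ is large enough, together with a uniform Krawtchouk-type bound on $\binom{n}{s}p_s^n$ over the whole range of $s$, including the troublesome middle weights) and then apply Markov to $|\Sol(A_G)|$ itself, not to an ``expected dimension.'' That is essentially the content of Calkin's proof and is nontrivial; your sketch does not supply it, and the expedient choice, as in the paper, is to cite the result.
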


Our proof of the theorem, which covers the rest of this section, makes use of the fact that bipartite expander graphs are meager. 

\begin{definition}
 Let $G=(V,W,E)$ be a bipartite graph with $|V| \geq |W|$.
 We call $G$ a \emph{$(\gamma,\beta)$-expander} if for every $Y \subseteq V$ with $|Y| \leq \gamma|V|$ it holds that $|N(Y)| \geq \beta|Y|$.
\end{definition}

One method to obtain bipartite expanders is by considering the following random process. Let $10 \leq r \in \mathbb{N}$ be a fixed number.
Given vertex sets $V$ and $W$ with $r \leq |W|/4$ and $|V| = |W|$ we obtain a bipartite graph $G = (V,W,E)$ by choosing independently and uniformly at random, for every $v \in V$ a set of $r$ distinct neighbors in~$W$.
We refer to~\cite[Section 4]{DBLP:journals/fttcs/Vadhan12} and \cite[Chapter 5.3]{DBLP:books/cu/MotwaniR95} for more information on expanders, including variants of the following lemma.

\begin{lemma}
 \label{la:expander}
 For~$r$ sufficiently large, $\Pr(G \text{ is a } \text{$\left(\frac{1}{10 r},\frac{r}{2}\right)$\text{-expander}}) \geq \frac{8}{9}$.
\end{lemma}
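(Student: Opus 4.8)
The plan is to carry out a standard first-moment (union bound) argument over the "bad" events that some small set $Y \subseteq V$ has a small neighborhood. Fix $r$ and set $\gamma = \frac{1}{10r}$ and $\beta = \frac{r}{2}$. For a fixed subset $Y \subseteq V$ with $|Y| = t \le \gamma|V|$ and a fixed set $Z \subseteq W$ with $|Z| = \lceil \beta t\rceil - 1 < \beta t$, the probability that $N(Y) \subseteq Z$ is the probability that every one of the $t$ vertices in $Y$ chooses all $r$ of its neighbors inside $Z$. Since the choices are independent across $v \in V$ and each $v$ picks an $r$-subset of $W$ uniformly at random, this probability is $\left(\binom{|Z|}{r} \big/ \binom{|W|}{r}\right)^{t} \le \left(|Z|/|W|\right)^{rt}$, using $\binom{|Z|}{r}/\binom{|W|}{r} \le (|Z|/|W|)^r$. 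With $|W| = n$ and $|Z| < \beta t \le \beta \gamma n = n/20$ this is at most $(1/20)^{rt}$ — actually we should keep it as $(\beta t / n)^{rt}$ for the summation.

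Next I would union-bound over all choices of $Y$ and $Z$ of the relevant sizes and all $t$ from $1$ to $\lfloor \gamma n\rfloor$. The number of such pairs is at most $\binom{n}{t}\binom{n}{\lceil\beta t\rceil}$. Using the standard estimate $\binom{n}{t} \le (en/t)^t$, the total failure probability is bounded by
\[
\sum_{t=1}^{\lfloor \gamma n\rfloor} \binom{n}{t}\binom{n}{\lceil\beta t\rceil}\left(\frac{\beta t}{n}\right)^{rt}
\le \sum_{t=1}^{\lfloor \gamma n\rfloor} \left(\frac{en}{t}\right)^{t}\left(\frac{en}{\beta t}\right)^{\beta t}\left(\frac{\beta t}{n}\right)^{rt}.
\]
Writing $s = t/n \le \gamma$, each summand has the form $\left(\text{(something)}\cdot s^{\,r - 1 - \beta}\right)^{t}$ up to the polynomial factors coming from $e$ and $\beta$, and since $r - 1 - \beta = r - 1 - r/2 = r/2 - 1$ is large for large $r$ while $s \le \gamma = \frac{1}{10r}$ is small, the base of the exponential is at most, say, $1/2$ once $r$ is sufficiently large; I would verify this by a direct (routine) calculation separating the $t=1$ term from $t \ge 2$. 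Then the geometric-type sum is bounded by a quantity tending to $0$ as $n \to\infty$, in particular it is at most $\frac{1}{9}$ for $r$ large and $n$ sufficiently large, giving $\Pr(G \text{ is a } (\gamma,\beta)\text{-expander}) \ge 1 - \frac{1}{9} = \frac{8}{9}$.

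The main technical obstacle — and it is only a bookkeeping obstacle rather than a conceptual one — is getting the exponents to line up so that the sum over $t$ is genuinely summable and small: one must be careful that the $\binom{n}{\lceil\beta t\rceil}$ term contributes an exponent $\beta t \log(en/(\beta t))$ that is not swamped by the gain $rt\log(n/(\beta t))$ from the probability bound. The key cancellation is that the probability term supplies a factor $(n/(\beta t))^{rt}$ whereas the two binomials together eat only $(n/t)^t (n/(\beta t))^{\beta t}$, and since $r > \beta + 1 = r/2 + 1$ the net exponent of $n/(\beta t)$ is positive and proportional to $r$; combined with the smallness of $s = t/n$ this forces the base below $1$. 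A clean way to present it is to bound the entire sum crudely by $\sum_{t\ge 1} 2^{-t} \le 1 < 9/8$ after fixing $r$ large enough, which avoids any delicate optimization. I would also note that the constant $\frac{8}{9}$ is not sharp — any constant bounded away from $1$ suffices and is obtained the same way — and that this is exactly the kind of estimate worked out in \cite[Section 4]{DBLP:journals/fttcs/Vadhan12} and \cite[Chapter 5.3]{DBLP:books/cu/MotwaniR95}, so one may alternatively just cite those sources for the precise form needed.
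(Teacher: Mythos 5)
Your proposal is correct and follows essentially the same route as the paper's proof: a union bound over pairs $(Y,Z)$ with $|Y|\le\gamma n$ and $|Z|$ just below $\beta|Y|$, the estimate $\binom{|Z|}{r}/\binom{n}{r}\le(|Z|/n)^r$, the bound $\binom{n}{k}\le(en/k)^k$, and the observation that the resulting base $(s/n)^{r/2-1}e^{1+r/2}(r/2)^{r/2}$ is bounded away from $1$ because $s/n\le\gamma=\tfrac{1}{10r}$ and the exponent $r/2-1$ is large for large $r$. The paper tightens your "base at most $1/2$" step by computing $x=\gamma^{r/2-1}e^{1+r/2}(r/2)^{r/2}=10er(e/20)^{r/2}<1/10$ and then summing the geometric series $\sum_{s\ge1}x^s=x/(1-x)\le 1/9$, which is the only bookkeeping you deferred.
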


\begin{proof}
 Let $n = |V| = |W|$. For $Y \subseteq V$ and $X \subseteq W$ let~$p_{Y,X}$ denote the probability that that $N(Y) \subseteq X$.
 Then
 \[p_{Y,X} \leq \left(\frac{|X|}{n}\right)^{r\cdot |Y|}.\]
 Furthermore let $\beta = \frac{r}{2}$ and $\gamma = \frac{1}{10 r}$.
 Let $p$ be the probability that $G$ is not a $(\gamma,\beta)$-expander.
 Then, using the inequality~$\binom{n}{k}\leq (ne/k)^k$, we get
 \begin{align*}
  p &\leq \sum_{\substack{Y \subseteq V\\|Y| \leq \gamma\cdot n}}\;\sum_{\substack{X \subseteq W\\|X| = \lfloor \beta|Y|\rfloor}} p_{Y,X}\\
    &\leq \sum_{s = 1}^{\lfloor \gamma\cdot n\rfloor} \sum_{\substack{Y \subseteq V\\|Y| =s}}\;\sum_{\substack{X \subseteq W\\|X| =  \lfloor\beta|Y|\rfloor}} \left(\frac{|X|}{n}\right)^{r\cdot |Y|}\\
    &\leq \sum_{s = 1}^{\lfloor \gamma\cdot n\rfloor} \binom{n}{s}\binom{n}{\lfloor \beta s \rfloor} \left(\frac{\beta s}{n}\right)^{r\cdot s}\\
    &\leq \sum_{s = 1}^{\lfloor \gamma\cdot n\rfloor} \left(\frac{ne}{s}\right)^{s} \left(\frac{ne}{\beta s}\right)^{\beta \cdot s} \left(\frac{\beta s}{n}\right)^{r\cdot s}\\
    &= \sum_{s = 1}^{\lfloor \gamma\cdot n\rfloor} \left[\left(\frac{ne}{s}\right) \left(\frac{ne}{\beta s}\right)^{\beta} \left(\frac{\beta s}{n}\right)^{r}\right]^{s}
    \\   
    &= \sum_{s = 1}^{\lfloor \gamma\cdot n\rfloor} \left[\left(\frac{s}{n}\right)^{r-\beta-1} e^{1+\beta} \beta^{r-\beta}\right]^{s} \\
    &= \sum_{s = 1}^{\lfloor \gamma\cdot n\rfloor} \left[\left(\frac{s}{n}\right)^{r/2-1} e^{1+r/2} (r/2)^{r/2}\right]^{s}\\
    &\leq \sum_{s = 1}^{\lfloor \gamma\cdot n\rfloor} \left[\gamma^{r/2-1} e^{1+r/2} (r/2)^{r/2}\right]^{s}\;.
 \end{align*}
 Now let $x = \gamma^{r/2-1} e^{1+r/2} (r/2)^{r/2}$.
 If~$r$ is sufficiently large then $x = (10r)^{1-r/2} e^{1+r/2} (r/2)^{r/2}= 10er (e/20)^{r/2} <1/10$ . 
 It follows that \[p \leq \sum_{s = 1}^{\infty} x^{s} = \frac{x}{1-x} \leq \frac{1}{9}. \qedhere\]
\end{proof}

\begin{theorem}[cf.\ {\cite[Theorem 1.1]{calkin97}}]
 \label{thm:rank-expander}
 For $n \geq k$ let $S_{n,k} = \{v \in \mathbb{F}_2^{n} \mid |\{i \in [n] \mid v_i = 1\}| = k\}$.
 Furthermore let $A \in \mathbb{F}_2^{n \times n}$ be a random matrix where the rows are drawn uniformly and independently from $S_{n,k}$.
 There is a $K \in \mathbb{N}$ such that for every fixed $k \geq K$ it holds that
 \begin{equation*}
  \lim_{n \rightarrow \infty} \Pr(\rk(A) \geq (1 - 2^{-k})n) = 1.
 \end{equation*}

\end{theorem}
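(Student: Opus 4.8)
The plan is a first-moment (Markov) bound on the dimension of the kernel of $A$. Let $N = \dim_{\mathbb{F}_2}\ker A$, so that $\rk(A) = n - N$ and $|\ker A| = 2^{N}$. For a fixed nonzero $u \in \mathbb{F}_2^{n}$ of Hamming weight $w$, the events ``$\langle r_i,u\rangle = 0$'' over the rows $r_1,\dots,r_n$ are independent, and each has the same probability $q_w$ (depending only on $w$, since the law of a single row is invariant under permuting coordinates); hence $\Pr(Au = 0) = q_w^{\,n}$. Therefore
\[
\mathbb{E}\big[2^{N}\big] \;=\; \mathbb{E}\big[|\ker A|\big] \;=\; \sum_{u \in \mathbb{F}_2^{n}}\Pr(Au = 0) \;=\; \sum_{w=0}^{n}\binom{n}{w}\,q_w^{\,n}.
\]
Since $\{\rk(A) < (1-2^{-k})n\} = \{N > 2^{-k}n\} = \{2^{N} > 2^{2^{-k}n}\}$, Markov's inequality gives $\Pr(\rk(A) < (1-2^{-k})n) \le 2^{-2^{-k}n}\,\mathbb{E}[2^{N}]$. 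So it suffices to prove that $\mathbb{E}[2^{N}] \le 2^{(c_k + o(1))n}$ for some constant $c_k < 2^{-k}$ whenever $k \ge K$; the bound then tends to $0$, and since $\rk(A) = n - N$ the theorem follows.

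The next step is to estimate $q_w$. Writing $q_w = \tfrac12(1+\rho_w)$ with
\[
\rho_w \;=\; \mathbb{E}_{r}\big[(-1)^{|r\cap \operatorname{supp}(u)|}\big] \;=\; \binom{n}{k}^{-1}[x^{k}](1-x)^{w}(1+x)^{n-w},
\]
a standard estimate (compare \cite{calkin97}; e.g.\ via the above identity together with the cancellation in the alternating sum, or by exposing the $k$ elements of $r$ one at a time) shows that, uniformly for $0 \le w \le n/2$,
\[
q_w \;\le\; \tfrac12 + \tfrac12\big(1-\tfrac{2w}{n}\big)^{k} + o(1),
\]
and in particular $q_w = \tfrac12 + o(1)$ as soon as $w$ lies within $o(n)$ of $n/2$; the weights in that narrow window contribute only $2^{o(n)}$ to the sum and are discarded. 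For $w > n/2$ one invokes the involution $u \mapsto \mathbf{1} + u$: if $k$ is even all rows have even weight, so $\rho_w = \rho_{n-w}$ and these terms merely repeat the range $w < n/2$; if $k$ is odd then $q_w = 1 - q_{n-w}$ and the same kind of bound applies. Consequently, up to a factor $2$ and an additive $2^{o(n)}$,
\[
\mathbb{E}[2^{N}] \;\le\; 2^{o(n)}\sum_{1 \le w \le n/2}\binom{n}{w}\Big(\tfrac12 + \tfrac12\big(1-\tfrac{2w}{n}\big)^{k}\Big)^{n}.
\]

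Finally I would carry out the calculus estimate. Using $\binom{n}{w} \le 2^{nH(w/n)}$ with $H$ the binary entropy, the $w$-th summand is at most $2^{n\,g_k(w/n)}$, where
\[
g_k(t) \;=\; H(t) - 1 + \log_2\!\big(1 + (1-2t)^{k}\big), \qquad t \in (0,\tfrac12].
\]
It remains to show $\sup_{t\in(0,1/2]} g_k(t) < 2^{-k}$ for all $k \ge K$. One checks $g_k(0^{+}) = g_k(\tfrac12) = 0$ while $g_k'(0^{+}) = +\infty$, and that on any interval $[\delta,\tfrac12]$ one has $g_k(t) < 0$ once $k$ is large (there $H(t)-1$ is a negative constant for $t$ away from $\tfrac12$, and $(1-2t)^{k}\log_2 e < 2^{-k}$ for $t$ near $\tfrac12$); so the supremum is attained at some $t_k \to 0$. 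A short computation bounding $H(t) \le t\log_2(e/t)$ and $\log_2\tfrac{1+(1-2t)^{k}}{2}$ through $(1-2t)^{k} \le e^{-2kt}$ and the linearization of $1-e^{-s}$ valid for small $s$ then yields $t_k = (1+o_k(1))\,e^{-k}$ and $g_k(t_k) = (1+o_k(1))\,e^{-k}\log_2 e$. Since $e^{-k}\log_2 e < 2^{-k}$ precisely when $(2/e)^{k} < \ln 2$, this holds for all large $k$, and we may take $c_k = \sup_{t} g_k(t) < 2^{-k}$. Plugging this into the Markov bound from the first paragraph completes the proof.

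The hardest part is the last step: establishing $\sup_{t} g_k(t) < 2^{-k}$, which relies on correctly locating the maximizing scale $t_k$ of order $e^{-k}$ — it is neither of constant order nor exponentially smaller — and on keeping the constant $\log_2 e$ under control so that it is beaten by the gap between $e^{-k}$ and $2^{-k}$. A secondary technical point is the uniformity of the $q_w$ estimate near $w = n/2$, where the clean bound $q_w \le \tfrac12 + \tfrac12(1-2w/n)^{k}$ degenerates; there the separate cruder bound $q_w = \tfrac12 + o(1)$ does the job, since that range of weights contributes subexponentially.
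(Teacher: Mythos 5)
The paper gives no proof of Theorem~\ref{thm:rank-expander} at all --- it is imported wholesale via the citation to Calkin --- so your self-contained first-moment argument is a genuinely different route, and as far as I can check it is correct. The skeleton is sound: rank-nullity plus Markov applied to $|\ker A|=2^{N}$, the factorization $\Pr(Au=0)=q_w^{\,n}$ from row-independence, the Krawtchouk-type estimate $q_w=\tfrac12\bigl(1+(1-2w/n)^{k}\bigr)+O_k(1/n)$ (for fixed $k$ the error is a uniform \emph{additive} $O(2^{k}k^{2}/n)$, obtained by comparing each term $\binom{n-k}{w-j}/\binom{n}{w}$ to $t^{j}(1-t)^{k-j}$ before the alternating sum cancels, so raising to the $n$-th power costs only a constant factor), the entropy bound, and the optimization of $g_k$; and you have correctly located the crux, namely that the maximizer sits at $t_k\sim e^{-k}$ with value $\sim e^{-k}\log_2 e$, which beats $2^{-k}$ once $(2/e)^{k}<\ln 2$. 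Two points deserve explicit care in a full write-up. First, in the linearization $\log_2\tfrac{1+(1-2t)^{k}}{2}\le -\tfrac12\bigl(1-(1-2t)^{k}\bigr)\log_2 e$ you must keep the full lower bound $1-(1-2t)^{k}\ge 2kt(1-kt)$: the superficially similar bound $\ge kt$ loses a factor $2$ in the linear coefficient, pushes the maximizer out to $e^{-k/2}$, and the comparison with $2^{-k}$ fails --- this is exactly the ``constant under control'' issue you flag. Second, the intermediate range $k^{-2}\lesssim t\lesssim 1/\log k$ is covered neither by the small-$t$ linearization (where $k^{2}t$ is no longer negligible) nor by the ``$H(t)-1$ is a negative constant'' argument; there one should use $1-e^{-2kt}\ge \tfrac{2kt}{1+2kt}$ to conclude $g_k(t)<0$. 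With those details filled in, your argument proves exactly the statement the paper needs from scratch, whereas the citation to Calkin invokes a sharper independence/rank threshold for random constant-weight vectors that is stronger than required here.
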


\begin{lemma}
 \label{la:hyperedge-intersection}
 Asymptotically almost surely there are no distinct vertices $v_1,v_2 \in V$ such that $|N(v_1) \cap N(v_2)| \geq 3$.
\end{lemma}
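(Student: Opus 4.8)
The plan is a routine first-moment (union-bound) argument; this establishes item~\ref{item:meager-existence-3} of Theorem~\ref{thm:meager-existence}. Recall that in the random process we have $|W| = n$ and that for each $v \in V$ the neighborhood $N(v)$ is an independent, uniformly chosen $r$-element subset of $W$. Fix two distinct vertices $v_1, v_2 \in V$. Conditioning on the choice of $N(v_1)$, the event $|N(v_1) \cap N(v_2)| \geq 3$ is contained in the union, over all $3$-element subsets $T \subseteq N(v_1)$, of the events $T \subseteq N(v_2)$. Since $N(v_2)$ is a uniform random $r$-subset of an $n$-element set, for each fixed $T$ we have
\[
\Pr(T \subseteq N(v_2)) = \frac{\binom{n-3}{r-3}}{\binom{n}{r}} = \frac{r(r-1)(r-2)}{n(n-1)(n-2)} \leq \Big(\frac{r}{n-2}\Big)^{3}.
\]
Taking a union bound over the $\binom{r}{3} \leq r^{3}/6$ possible sets $T$ then yields $\Pr\big(|N(v_1) \cap N(v_2)| \geq 3\big) \leq r^{6}/\big(6(n-2)^{3}\big)$.

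Next I would take a union bound over all $\binom{n}{2}$ unordered pairs of distinct vertices in $V$, obtaining
\[
\Pr\big(\exists\ \text{distinct } v_1, v_2 \in V :\; |N(v_1) \cap N(v_2)| \geq 3\big) \;\leq\; \binom{n}{2}\cdot\frac{r^{6}}{6(n-2)^{3}} \;=\; O\!\Big(\frac{r^{6}}{n}\Big).
\]
Because $r$ is a fixed constant in the setting of Theorem~\ref{thm:meager-existence} (we work in the regime $r \geq r_0$), the right-hand side tends to $0$ as $n \to \infty$, which is precisely the claim.

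There is essentially no obstacle here: the only care required is in the elementary binomial estimates and in observing that $r$ is held constant, so that the final bound is indeed $o(1)$. One could equally well run the argument with ordered pairs, or replace the exact ratio $\binom{n-3}{r-3}/\binom{n}{r}$ by the cruder bound $(r/n)^{3}$, affecting only the constants.
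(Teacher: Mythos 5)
Your proof is correct and follows essentially the same first-moment strategy as the paper: a union bound over all $\binom{n}{2}$ pairs of vertices in $V$, combined with an estimate of $O(r^{6}/n^{3})$ on the probability that a fixed pair has $|N(v_1)\cap N(v_2)| \geq 3$. The only cosmetic difference is local: the paper bounds the per-pair probability by summing the hypergeometric tail $\sum_{s=0}^{r-3}\binom{|W|-r}{s}\binom{r}{r-s}/\binom{|W|}{r}$, whereas you bound it by a union over the $\binom{r}{3}$ triples $T\subseteq N(v_1)$ of the event $T\subseteq N(v_2)$; both give $O(n^{-3})$ for fixed $r$, so the conclusions match.
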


\begin{proof}
 Let $p$ be the probability that there are distinct $v_1,v_2 \in V$ with $|N(v_1) \cap N(v_2)| \geq 3$.
 Furthermore pick $c_3 > 0$ such that $\binom{n}{r-3} \leq c_3 \cdot n^{-3} \cdot \binom{n}{r}$ for all~$n\in \mathbb{N}$.
 Then, for~$n=|V| = |W|$ sufficiently large,
 \begin{align*}
  p &\leq |V|^{2} \sum_{s=0}^{r-3} \frac{\binom{|W|-r}{s}\binom{r}{r-s}}{\binom{|W|}{r}}
     \leq |V|^{2} \sum_{s=0}^{r-3} \frac{\binom{|W|}{r-3}\binom{r}{r-s}}{\binom{|W|}{r}}\\
    &\leq |V|^{2} (r-2) \binom{r}{\lfloor r/2 \rfloor} c_3 \cdot |W|^{-3}\\
    &\leq \frac{c_4}{|W|}
 \end{align*}
 for some constant $c_4 > 0$.
\end{proof}

\begin{proof}[Proof of Theorem \ref{thm:meager-existence}]
 Let $r$ be sufficiently large and let $G = (V,W,E)$ be a random bipartite graph as described above with $|W|/4 \geq r$ and $n= |V| = |W|$.
 By Lemma \ref{la:hyperedge-intersection}, Condition \ref{item:meager-existence-3} is satisfied asymptotically almost surely and Theorem \ref{thm:rank-expander} implies that Condition \ref{item:meager-existence-4} is satisfied asymptotically almost surely.
 So it remains to show $G$ is $\left(\frac{n}{10r},\frac{3}{r}\right)$-meager with a positive probability.
 By Lemma \ref{la:expander}, with probability at least~$8/9$, the graph $G$ is a $\left(\frac{1}{10 r},\frac{r}{2}\right)$-expander.
 Suppose there was an $\emptyset \neq X \subseteq W$ with $|X| \leq \frac{n}{10 r}$ for which $N^{-1}(X) \geq \frac{3}{r} \cdot |X|$. 
 Then we could choose $Y \subseteq N^{-1}(X)$ with~$|Y| = \frac{3}{r} \cdot |X|\leq \frac{n}{10 r}$ (assuming~$r\geq 3$). But then  $|X| \geq |N(Y)| \geq \frac{r}{2}\cdot |Y| \geq \frac{r}{2} \cdot \frac{3}{r} \cdot |X| > |X|$ which is a contradiction. 
\end{proof}

\section{Lower bounds for individualization-refinement algorithms}

In the previous section we have proven the existence of meager graphs with various additional properties. We show now that applying the multipede construction to such graphs yields examples where the search tree of individualization-refinement algorithms is large.

%

\begin{lemma}
 \label{la:number-equivalent-tuples}
 Let $k,d \in \mathbb{N}$ and suppose $d \geq k$ and $d\alpha < 1$.
 Let $G = (V,W,E)$ be $(\ell,\alpha)$-meager and let $X = \{w_1,\dots,w_m\} \subseteq W$ be a subset of cardinality $m \leq (1-d\alpha)\ell - d + 1$.
 Furthermore suppose that for all distinct $v,v_1,\dots,v_k \in V$ we have~$|N(v) \cap (N(v_1)\cup N(v_2)\cup \dots \cup N(v_k))| \leq d-k $.
 Let $\bar x = (x_1,\dots,x_m)$ be a sequence with $x_i \in F(w_i)$. Then
 \[|\{\bar y \in  F(W)^{m} \mid (R(G),\bar x) \simeq_k (R(G),\bar y)\}| \geq 2^{\frac{1-\alpha(d+1)}{1-d\alpha}m}\;.\]
\end{lemma}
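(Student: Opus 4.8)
The plan is to replace $X$ by its $d$-closure $X^{\ast}\coloneqq\cl_G^{d}(X)$, to use the many automorphisms of the small subgraph $R(G)[[X^{\ast}]]$ to manufacture tuples that are $k$-equivalent to $\bar x$, and finally to count how many genuinely distinct length-$m$ tuples arise this way by an orbit--stabilizer argument. First I would check that $X^{\ast}$ is well-behaved: since $m=|X|\leq(1-d\alpha)\ell-d+1$, Lemma~\ref{la:size-closure} applies and gives $|X^{\ast}|<\frac{1}{1-d\alpha}m$, so in particular $|X^{\ast}|\leq\ell$, the graph $R(G)[[X^{\ast}]]$ is defined, and $X^{\ast}$ is $d$-closed. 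I would enumerate $X^{\ast}=\{w_1,\dots,w_m,w_{m+1},\dots,w_s\}$ extending the given enumeration of $X$ (with $s=|X^{\ast}|$), and extend $\bar x$ arbitrarily to $\bar x^{\ast}=(x_1,\dots,x_s)$ with $x_i\in F(w_i)$ for all $i\leq s$. The almost-disjointness hypothesis of the present lemma is exactly the one required by Lemma~\ref{la:equivalent-tuples-from-automorphisms}, so that lemma yields $(R(G),\bar x^{\ast})\simeq_k(R(G),\varphi(\bar x^{\ast}))$ for every $\varphi\in\Aut(R(G)[[X^{\ast}]])$.

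Next I would pass from $\bar x^{\ast}$ to its length-$m$ prefix $\bar x$ (and from $\varphi(\bar x^{\ast})$ to the corresponding prefix $\varphi(\bar x)$). For this one needs that dropping the last $s-m$ pebbles on both sides preserves $k$-equivalence. This is a monotonicity property of the bijective $(k+1)$-pebble game: individualizing fewer vertices only coarsens the two initial colorings, and the coarser color at a vertex is obtained from the finer one by the same map on both sides, so any position that is a Spoiler win under the coarser colorings is already a Spoiler win under the finer ones; hence Duplicator's winning strategy for the finer game is still winning for the coarser one. This gives $(R(G),\bar x)\simeq_k(R(G),\varphi(\bar x))$ for every $\varphi\in\Aut(R(G)[[X^{\ast}]])$, and all these tuples lie in $F(X)^m\subseteq F(W)^m$. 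So it remains to lower-bound the size of the orbit $\mathcal{O}=\{\varphi(\bar x)\mid\varphi\in\Aut(R(G)[[X^{\ast}]])\}$.

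For the count, Lemma~\ref{la:restricted-automorphism-size} gives $|\Aut(R(G)[[X^{\ast}]])|\geq 2^{(1-\alpha)|X^{\ast}|}$, while the structure of these automorphisms (Lemmas~\ref{la:cfi-gadget} and~\ref{la:size-automorphism-group}) shows that each one is determined by the set of pairs $F(w)$, $w\in X^{\ast}$, that it swaps. An automorphism fixing $\bar x$ must swap none of $F(w_1),\dots,F(w_m)$ and is therefore determined by a subset of $X^{\ast}\setminus X$, so the stabilizer of $\bar x$ in $\Aut(R(G)[[X^{\ast}]])$ has size at most $2^{|X^{\ast}|-m}$. By orbit--stabilizer, $|\mathcal{O}|\geq 2^{(1-\alpha)|X^{\ast}|}/2^{|X^{\ast}|-m}=2^{m-\alpha|X^{\ast}|}$, and inserting $|X^{\ast}|<\frac{1}{1-d\alpha}m$ yields $m-\alpha|X^{\ast}|>m-\frac{\alpha}{1-d\alpha}m=\frac{1-\alpha(d+1)}{1-d\alpha}m$, which is the claimed bound. (The case $m=0$ is trivial.)

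The step I expect to be the main obstacle is the prefix/monotonicity step, i.e.\ making rigorous that $(R(G),\bar x^{\ast})\simeq_k(R(G),\varphi(\bar x^{\ast}))$ implies $(R(G),\bar x)\simeq_k(R(G),\varphi(\bar x))$ for the length-$m$ prefixes; this is needed precisely because $X$ itself need not be $d$-closed and so Lemma~\ref{la:equivalent-tuples-from-automorphisms} cannot be applied to it directly. Everything else is essentially bookkeeping assembled from Lemmas~\ref{la:size-closure}, \ref{la:equivalent-tuples-from-automorphisms} and~\ref{la:restricted-automorphism-size} together with an elementary orbit--stabilizer computation.
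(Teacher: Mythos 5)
Your proof is correct and follows essentially the same route as the paper's: pass to the $d$-closure $X^{\ast}=\cl_G^d(X)$, apply Lemma~\ref{la:equivalent-tuples-from-automorphisms} together with Lemma~\ref{la:restricted-automorphism-size} to the extended tuple, then project back to the length-$m$ prefix and count. The only cosmetic difference is that you phrase the count as an orbit--stabilizer computation (with stabilizer size at most $2^{|X^{\ast}|-m}$) whereas the paper bounds the fibers of the projection by $2^{|X^{\ast}|-m}$ directly; these are the same estimate, and both yield $2^{m-\alpha|X^{\ast}|}\geq 2^{\frac{1-\alpha(d+1)}{1-d\alpha}m}$, while the prefix/monotonicity step you flag as the potential obstacle is exactly the step the paper asserts with ``Clearly,\ldots'' and your justification of it is sound.
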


\begin{proof}
 Let $\widehat{X} = \cl_G^{d}(X)$ be the $d$-closure of $X$.
 By Lemma \ref{la:size-closure} we get that $|\widehat{X}| < \frac{1}{1-d\alpha}|X| \leq \ell$.
 Suppose $\widehat{X} = X \cup \{u_1,\dots,u_s\}$ and let $\bar z = (x_1,\dots,x_m,z_1,\dots,z_{s})$ be an extension of $\bar x$ with $z_i \in F(u_i)$.
 By Lemmas \ref{la:equivalent-tuples-from-automorphisms} and \ref{la:restricted-automorphism-size} we conclude that
 \[|\{\bar y \in  F(W)^{m+s} \mid (R(G),\bar z) \simeq_k (R(G),\bar y)\}| \geq |\Aut(R(G)[[\widehat{X}]])| \geq 2^{(1-\alpha)(m+s)}\;.\]
 Let $A = \{\bar y \in  F(W)^{m+s} \mid (R(G),\bar z) \simeq_k (R(G),\bar y)\}$ and for $\bar a \in A$ let $\pi_{m}(\bar a)$ be the projection onto the first $m$ components. 
 Clearly, for $\bar a,\bar b \in A$, it holds that $(R(G),\pi_m(\bar a)) \simeq_k (R(G),\pi_m(\bar b))$.
 So
 \[|\{\bar y \in  F(W)^{m} \mid (R(G),\bar x) \simeq_k (R(G),\bar y)\}| \geq |A|\cdot2^{-s} \geq 2^{(1-\alpha)(m+s) - s} = 2^{(1-\alpha)m - \alpha s}\;.\]
 Since $s \leq m - \frac{1}{1-d\alpha}m =  \frac{d\alpha m}{1-d\alpha}$ we conclude that
 \[2^{(1-\alpha)m - \alpha s} \geq 2^{(1-\alpha)m - \frac{d\alpha^{2} m}{1-d\alpha}} = 2^{\frac{1-\alpha(d+1)}{1-d\alpha}m}\;.\]
\end{proof}

\begin{theorem}
 \label{thm:lower-bound-from-meager}
 Let $k \in \mathbb{N}$, $\ell \geq \max\{9r,45s, 18k\}$ and $\alpha\leq \frac{1}{10k}$.
 Suppose $G = (V,W,E)$ is a bipartite graph with $n = |V| = |W|$ such that
 \begin{enumerate}
  \item $G$ is $(\ell,\alpha)$-meager,
  \item $\deg(v) = r$ for all $v \in V$,
  \item $|N(v_1) \cap N(v_2)| < 3$ for all distinct $v_1,v_2 \in V$ and
  \item $n - \rk(A_G) \leq s$.
 \end{enumerate}
 Then there is a subset $I \subseteq W$ with $|I| \leq s$ such that
 \begin{enumerate}
  \item $R^{I}(G)$ is rigid and
  \item for every $k$-realizable cell selector $\sel$, every $k$-realizable node invariant $\inv$ and every $k$-realizable refinement operator $\refine$  it holds that
   \begin{equation}
    |\mathcal{T}_{\inv}^{\refine,\sel}(R^{I}(G))| \geq 2^{\frac{1}{36}\cdot \ell}.
   \end{equation}
 \end{enumerate}
\end{theorem}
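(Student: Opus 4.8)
The plan is to assemble the pieces that have been prepared in the preceding sections. First I would produce the set $I$: since $n - \rk(A_G) \le s$, Lemma~\ref{la:make-rigid} yields a set $I \subseteq W$ with $|I| \le |W| - \rk(A_G) \le s$ such that $R^{I}(G)$ is rigid. This takes care of the first conclusion. For the second conclusion, fix $k$-realizable $\sel$, $\inv$, $\refine$ and consider the search tree $\mathcal{T} = \mathcal{T}_{\inv}^{\refine,\sel}(R^{I}(G))$. By Lemma~\ref{la:tree-nodes-from-equivalent-tuples}, if $\bar v$ is any node of $\mathcal{T}$ of length $m$, then \emph{every} tuple $\bar w$ of length $m$ with $(R^{I}(G),\bar v) \simeq_k (R^{I}(G), \bar w)$ is also a node of $\mathcal{T}$; hence it suffices to exhibit one node whose $\simeq_k$-equivalence class (among tuples of that fixed length) is exponentially large.

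The two ingredients for finding such a node are: (a) the search tree has linear height, so there is a node $\bar v$ with $|\bar v|$ at least a constant times $\ell$; and (b) once $|\bar v|$ is large enough and, after possibly replacing $\bar v$ by a suitable tuple from $F(W)$, we may apply Lemma~\ref{la:number-equivalent-tuples} to get exponentially many equivalent tuples. For (a), the key point is that a discrete coloring cannot be reached until enough vertices have been individualized: as long as $X \subseteq W$ is not $1$-closed (equivalently, as long as some pair $F(w)$ is not yet split), by Lemma~\ref{la:color-refinement-on-bipartite-cfi} the two vertices of $F(w)$ are $\simeq_1$-equivalent, hence (since $k \ge 1$ and any $k$-realizable refinement is coarser than $\WL_k$, which is finer than $\WL_1$) the refinement operator cannot separate them, so the coloring is not discrete. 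Using the size bound on the $1$-closure from Lemma~\ref{la:size-closure} (applied with $d = 1$, which is legitimate since $\alpha \le \tfrac{1}{10k} < 1$), individualizing a sublinear number of vertices can only $1$-close a sublinear set, so a node of length $\ge c\,\ell$ must occur before any leaf; concretely $|\bar v|$ can be taken of order $\ell$ with a constant like $\tfrac{1}{9}$ or $\tfrac{1}{18}$ after accounting for the $|I| \le s$ pre-individualized vertices and the closure blow-up factor $\tfrac{1}{1-\alpha}$.

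For (b) I would take $d = \lceil \ell/(9m)\rceil$ or more simply a constant $d$ with $d \ge k$ and $d\alpha < 1$ — the hypotheses $\alpha \le \tfrac{1}{10k}$ and $\ell \ge 18k$ give room to choose, say, $d = 2k$, so that $d\alpha \le \tfrac{1}{5}$ and $\alpha(d+1) \le \tfrac{2k+1}{10k} < \tfrac{1}{3}$. The almost-disjointness hypothesis $|N(v_1) \cap N(v_2)| < 3$ for distinct $v_1, v_2$ gives, by a union bound over $v_1, \dots, v_k$, that $|N(v) \cap (N(v_1) \cup \dots \cup N(v_k))| \le 2k \le d$, hence $\le d - k$ once we pick $d$ large enough (e.g.\ $d = 3k$), so the hypothesis of Lemma~\ref{la:number-equivalent-tuples} is met. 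With $m \asymp \ell/36$ (ensuring $m \le (1-d\alpha)\ell - d + 1$ via $\ell \ge 9r, 45s, 18k$), Lemma~\ref{la:number-equivalent-tuples} gives at least $2^{\frac{1-\alpha(d+1)}{1-d\alpha}m} \ge 2^{m}$ equivalent tuples in $F(W)^m$, and since each is a node of $\mathcal{T}$ by Lemma~\ref{la:tree-nodes-from-equivalent-tuples}, we conclude $|\mathcal{T}| \ge 2^{m} \ge 2^{\ell/36}$. The main obstacle is purely bookkeeping: I must verify that the node $\bar v$ of length $\asymp \ell$ produced in step (a) can be chosen (or replaced by an equivalent tuple) to consist of vertices lying in distinct pairs $F(w_i)$ with $w_i$ ranging over a $d$-closed set of the right size, so that Lemma~\ref{la:number-equivalent-tuples} literally applies; this requires reconciling the $1$-closure used for the height argument with the $d$-closure used for counting, and tracking all the constants ($9r$, $45s$, $18k$, the factor $\tfrac{1}{36}$) through the two estimates. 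None of this is conceptually hard, but it is where the stated numerical hypotheses get consumed.
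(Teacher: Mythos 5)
Your high-level outline is on the right track — produce $I$ by Lemma~\ref{la:make-rigid}, locate a deep node, count its $\simeq_k$-equivalence class via Lemma~\ref{la:number-equivalent-tuples}, and apply Lemma~\ref{la:tree-nodes-from-equivalent-tuples}. But there are two genuine gaps, both of which the paper's proof resolves by machinery you either misuse or only mention without addressing.

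First, your height argument (a) is flawed. You argue that as long as some pair $F(w)$ is unsplit with $w \notin \cl^1_G(I)$, the refinement operator cannot separate $a(w)$ from $b(w)$ because they are $\simeq_1$-equivalent by Lemma~\ref{la:color-refinement-on-bipartite-cfi} and ``the refinement operator is coarser than $\WL_k$, which is finer than $\WL_1$.'' This reasoning does not go through: being coarser than something finer than $\WL_1$ says nothing about the relationship to $\WL_1$. A $k$-realizable refinement is allowed to be $\WL_k$ itself, which for $k>1$ can split pairs that $\WL_1$ cannot. To show the coloring is not discrete after too few individualizations, you need control over $\WL_k$, not $\WL_1$. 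The paper gets this by running the $\simeq_k$-equivalence count of Lemma~\ref{la:number-equivalent-tuples} (built on Lemma~\ref{la:equivalent-tuples-from-automorphisms}) \emph{before} the height argument: if there are at least two tuples $\simeq_k$-equivalent to the current node, then by rigidity of $R^I(G)$ two vertices of the graph must share a $\refine$-color, so the coloring is not discrete. Your two ``steps'' (a) and (b) cannot be kept independent; (a) must be deduced from (b)-style counting.

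Second, you flag but do not resolve the fact that a node $\bar v$ of the search tree is a sequence of \emph{arbitrary} vertices of $R^I(G)$ — including middle vertices $m_A(v) \in M(V)$ — whereas Lemma~\ref{la:number-equivalent-tuples} only counts tuples taken from $F(W)$ with one representative per $w_i$. The paper defines a projection $\Pi$ that replaces each middle vertex $m_A(v)$ by the ordered sequence of its neighbours in $F(W)$, concatenates, and removes duplicates, together with a ``lifting'' argument (Claim~2 in the proof) showing that the number of sequences $\simeq_k$-equivalent to $\bar v$ is at least $2^{-s}$ times the number of $F(W)$-sequences $\simeq_k$-equivalent to $\Pi(\bar v)$ (the $2^{-s}$ accounts for passing between $R(G)$ and $R^I(G)$, and an extra factor $1/2$ in the exponent arises because $\Pi(\bar v)$ may contain both elements of some $F(w)$). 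This projection also controls the growth rate $|\Pi(\cdot)|$ by at most $r$ per individualization step, which is exactly what makes the bookkeeping with $\ell \ge 9r$ close. Without something like $\Pi$, you cannot apply Lemma~\ref{la:number-equivalent-tuples} to a generic search-tree node, and ``replacing $\bar v$ by a suitable tuple from $F(W)$'' is precisely the missing step rather than routine bookkeeping.

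Your choice $d = 3k$ and the verification $|N(v) \cap (N(v_1) \cup \cdots \cup N(v_k))| \le 2k \le d - k$ are correct and match the paper.
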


\begin{proof}
 Set $d \coloneqq 3k$.
 By Lemma \ref{la:make-rigid} there is a set $I \subseteq W$ of size $|I| \leq s$ such that $R^{I}(G)$ is rigid.
 Suppose $I = \{w_1,\dots,w_s\}$.
 \begin{claim}
  For distinct $v,v_1,\dots,v_k \in V$ we have~$|N(v) \cap (N(v_1)\cup N(v_2)\cup \dots \cup N(v_k))| \leq d-k $.
  \proof
  We have~$N(v) \cap (N(v_1)\cup N(v_2)\cup \dots \cup N(v_k)) \leq 2k\leq d-k$. 
  \uend
 \end{claim}
 
 We choose an arbitrary linear order on the set of vertices in~$W$.
 For a vertex~$m_A(v)\in R(G)$ define the projection~$\Pi(m_A(v))$ to be the sequence~$(x_1,\ldots,x_{|N(m_A(v))|})$ of the vertices in~$N(m_A(v))$ ordered according to the linear order of~$W$ (observe that for each $w \in N(v)$ either $a(w)$ or $b(w)$ occurs in the sequence).
 We extend~$\Pi$ to~$V(R(G))$ by defining for~$w\in W$ that~$\Pi(a(w))=(a(w))$ and~$\Pi(b(w))=(b(w))$.
 
 For a sequence~$\bar{x}=(x_1,\dots,x_t)$ of vertices of~$R(G)$ we define~$\widehat{\Pi(\bar{x})}$ as the concatenation of the sequences~$\Pi(x_1),\ldots,\Pi(x_t)$.
 We let~${\Pi(\bar{x})}$ be the subsequence of~$\widehat{\Pi(\bar{x})}$ in which all duplicates are removed starting from right. (I.e, for a sequence $\bar y = (y_1,\dots,y_t)$ we inductively the sequence $\rho(\bar y)$ by setting $\rho(\varepsilon) = \varepsilon$ and letting~${\rho(y_1,\ldots,y_t)}$ be equal to~${\rho(y_1,\ldots,y_{t-1})}$, if~$y_i=y_t$ for some~$i<t$, and is equal to the concatenation of~${\rho(y_1,\ldots,y_{t-1})}$ and~$y_t$, otherwise. Then $\Pi(\bar x) = \rho(\widehat{\Pi(\bar{x})})$.)
 
 \begin{claim}[resume] For every sequence~$\bar{x}$ of vertices of~$R(G)$ it holds that
  \begin{align*}
       |\{\bar y \mid (R^{I}(G),\bar y) \simeq_k (R^{I}(G),\bar x)\}| 
       & \geq   |\{\bar z \mid (R^{I}(G),\bar z) \simeq_k (R^{I}(G),\Pi(\bar{x}))\}| \\ 
         &\geq  2^{-s}\cdot |\{\bar z \mid (R(G),\bar z) \simeq_k (R(G),\Pi(\bar{x}))\}|.
  \end{align*}
  \proof
  Observe first that the second inequality holds
  since there are only~$2^s$ color preserving permutations of~$F(I)$.
  
  For the first inequality suppose for~$\bar{z}$ we have~$(R^{I}(G),\bar z) \simeq_k (R^{I}(G),\Pi(\bar x))$ then we can find a lift~$\bar{y} = (y_1,\ldots,y_i)$ with~$\Pi(\bar{y}) = \bar{z}$ so that~$y_j$ and~$x_j$ have the same color for all~$j\in \{1,\ldots,i\}$. For this lift we have~$(R^{I}(G),\bar y) \simeq_k (R^{I}(G),\bar x)$. Since lifts of distinct sequences must be distinct we conclude the claim.
  \uend
 \end{claim}
 
 Now let $t = \lfloor(1-d\alpha)\ell - d + 1\rfloor \geq \lfloor(1-\frac{3k}{10k} )\ell - d + 1\rfloor \geq   \lfloor \frac{1}{2}\ell \rfloor - d + 1 \geq \frac{1}{3}\ell$ and let~$c \coloneqq \frac{1-\alpha(d+1) }{1-d\alpha}  = \frac{1-\alpha(3k+1) }{1-3k\alpha}  \geq \frac{1}{2}$.
 
 By Lemma \ref{la:number-equivalent-tuples} and Claim 2 we conclude that for every vertex sequence~$\bar{x}$ with~$|\Pi(\bar{x})|\leq t-s$ we have
 \begin{equation}
  \label{eq:many-equivalent-tuples}
  |\{\bar z \mid (R^{I}(G),\bar z) \simeq_k (R^{I}(G),\Pi(\bar x))\}| \geq 2^{c{|\bar{x}|/2}-s} \geq 2^{{|\bar{x}|/4}-s}.
 \end{equation}
 (Here the extra~$1/2$ in the exponent comes from the fact that in Lemma \ref{la:number-equivalent-tuples} for each~$w_i$ only one $x_i \in F(w_i)$ can be chosen but here~$\Pi(\bar{x})$ can contain both vertices from~$ F(w_i)$.)
 \begin{claim}[resume] There is a sequence~$\bar{y}$ in~$\mathcal{T}_{\inv}^{\refine,\sel}(R^{I}(G))$ with~$ t-s-r < |\Pi(\bar y)| \leq t-s$.
  \proof
  Let $\bar x$ be a leaf of $\mathcal{T}_{\inv}^{\refine,\sel}(R^{I}(G))$ and let $m = |\bar x|$ be the length of $\bar x$. Observe that $\pi_i(\bar x) \coloneqq (x_1,\dots,x_i) \in V(\mathcal{T}_{\inv}^{\WL_k,\sel}(R^{I}(G)))$ for every $i \in [m]$.  Define~$t_i$ as~$|\Pi(\pi_i(\bar x))|$.
  Note that~$t_i\leq t_{i-1} + r$ since~$|\Pi(x_i)| \leq r$. It thus suffices to show that~$|\Pi(\bar x)| > t-s-r$.
  Assume otherwise. We show that~$\refine(R^{I}(G),\bar{x})$ is not discrete and thus~$\bar{x}$ is not a leaf.
  Let~$\bar{x}'$  be a sequence of which~$\bar{x}$ is a prefix that satisfies~$4s < |\Pi(\bar{x}')| \leq t-s$ (possibly~$\bar{x}'= \bar{x}$). It suffices now to show that~$\refine(R^{I}(G),\bar{x}')$ is not discrete. 
  Indeed, by Equation (\ref{eq:many-equivalent-tuples}) we have~$|\{\bar z \mid (R^{I}(G),\bar z) \simeq_k (R^{I}(G),\Pi(\bar x'))\}| \geq 2^{1/4|\Pi(\bar x')|-s} > 2^{s -s} =1$. 
  Since~$R^{I}$ is rigid, this implies that~$\refine(R^{I}(G),\bar{x}')$ is not discrete.  
  \uend
 \end{claim}
 Applying the sequence~$\bar{y}$ from Claim 3 to Equation (\ref{eq:many-equivalent-tuples}) we obtain that~$|\{\bar z \mid (R^{I}(G),\bar z) \simeq_k (R^{I}(G),\Pi(\bar y))\}| \geq 2^{c{|\bar{y}|/2}-s} \geq 2^{{|\bar{y}|/4}-s}\geq 2^{t/4-r/4-5/4s} \geq 2^{\ell/12-\ell/36-\ell/36}\geq 2^{\ell/36} $. By Claim 2 this means that~$|\{\bar z \mid (R^{I}(G),\bar z) \simeq_k (R^{I}(G), \bar y)\}|\geq 2^{\ell/36} $. We conclude the proof by an application of Lemma~\ref{la:tree-nodes-from-equivalent-tuples}.
\end{proof}

Having already shown  in the previous section the existence of graphs that satisfy the requirements of the theorem we just proved, we can now prove the main theorem. 

\begin{proof}[Proof of Theorem~\ref{thm:main}]
Theorem~\ref{thm:main} now follows by combining Theorems \ref{thm:meager-existence} and \ref{thm:lower-bound-from-meager}.
\end{proof}

While the graphs we have constructed are colored graphs, we should remark that it is not difficult to turn them into uncolored graphs while preserving the exponential size of the search tree. 
Indeed, we form the disjoint union of the graph~$R^{I}(G)$ with a path of length~$t+1$ where~$t$ is the number of colors in~$R^{I}(G)$.
We then order the colors and connect the~$i$-th vertex of the path with all vertices of color~$i$.
Finally we add a vertex adjacent to all but the last vertex of the path to obtain a graph~$\widetilde{R^{I}(G)}$.
In the resulting uncolored graph~$\widetilde{R^{I}(G)}$, the last vertex of the path is the only vertex with degree~1.
Moreover if we apply color-refinement then all newly added vertices are singletons and the partition induced by color classes on~$V(R^{I}(G))$ in~$\widetilde{R^{I}(G)}$ is the same as it the one in~$R^{I}(G)$.
This shows that the graph is still rigid.
It also implies that each search tree of~$\widetilde{R^{I}(G)}$ corresponds to a search tree of~$R^{I}(G)$ of the same size.

\section{Component Recursion}

Component recursion is a mechanism that can be used as addition to an individualization-refinement algorithm improving their performance.
With the right cell selection strategy, using component recursion,  individualization-refinement algorithms have exponential upper bounds~\cite{Goldberg1983229}.

Some form of component recursion is for example implemented in the newest version of bliss and was demonstrated to yield significant improvements in practice~\cite{DBLP:conf/tapas/JunttilaK11}.
In this section we argue that for our examples it is not possible to beat the exponential lower bounds even with the use of component recursion, which we explain first. 

For a graph $G=(V,E)$ we say that two disjoint vertex sets $X,Y \subseteq V$ are \emph{uniformly joined} if~$E\cap (X\times Y) = \emptyset$ or~$E\cap (X\times Y) = X\times Y$. 
By definition the sets are uniformly joined if $X = \emptyset$ or $Y = \emptyset$.

\begin{definition}
 Let $G = (V,E,c)$ be a colored graph and let $S \subseteq V$ be a set of vertices.
 We say that $S$ is a \emph{color-component of $G$} if for all colors $i,j \in [n]$ the sets~$S \cap c^{-1}(i)$ and~$c^{-1}(j) \setminus S$ are uniformly joined.
\end{definition}

We will not go into great detail on how color-components can be exploited by individualization-refinement algorithms and rather refer to~\cite{DBLP:conf/tapas/JunttilaK11}.
Intuitively, components can be used to treat parts of the input graph independently. 
For us it will be sufficient to note the following.
If a color-component is a union of color classes (of the stable coloring under 1-dimensional Weisfeiler-Leman) then a suitable cell selector can ensure that the component is explored entirely before the search progresses into vertices outside of the color-component~(see~\cite{DBLP:conf/tapas/JunttilaK11}).
We will show now that for the graphs that appear as nodes in the search tree of our graphs there are no color-components besides those that are unions of color classes.

\begin{definition}
 Let $G = (V,E,c)$ be a colored graph and let $X \subseteq V$ be a set of vertices.
 We say that $X$ \emph{respects color $i \in [n]$} if either $c^{-1}(i) \cap X = \emptyset$ or $c^{-1}(i) \subseteq X$.
\end{definition}

Let $G = (V,E,c)$ be a colored graph and let $\chi \colon V \rightarrow [n]$ be a second coloring of the vertices.
We call $\chi$ an \emph{equitable} coloring of $G$ if $\chi \preceq c$ and for all $i,j \in [n]$ and all $v,w \in \chi^{-1}(i)$ it holds that \[|N(v) \cap \chi^{-1}(j)| = |N(w) \cap \chi^{-1}(j)|.\]
Observe that a coloring $\chi$ is equitable if and only if it is not further refined by $1$-dimensional Weisfeiler-Leman, that is, $\chi$ is stable with respect to the $1$-dimensional Weisfeiler-Leman algorithm.

\begin{lemma}
 Let $G =(V,W,E)$ be a bipartite graph and let $\chi$ be an equitable coloring of $R(G)$.
 Suppose $S \subseteq V(R(G))$ is a color-component of $R^{\chi}(G) = (V(R(G)),E(R(G)),\chi)$.
 Then $S$ is a union of color classes of~$\chi$ or the graph $R^{\chi}(G)$ has a non-trivial automorphism.
\end{lemma}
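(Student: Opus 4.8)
The plan is to establish the contrapositive: assuming that $S$ is a color-component of $R^{\chi}(G)$ that is \emph{not} a union of $\chi$-color classes, we construct a non-trivial automorphism of $R^{\chi}(G)$. First I would pin down the shape of the automorphisms of $R^{\chi}(G)$. Since $\chi \preceq c$, every automorphism $\gamma$ of $R^{\chi}(G)$ stabilizes each pair $F(w)$ and each set $M(v)$ setwise, and restricts on every gadget $X_{N(v)}$ to a colored automorphism of that gadget, which by Lemma~\ref{la:cfi-gadget} swaps an even-size subset of $N(v)$. Hence $\gamma$ swaps exactly the pairs of some set $X_\gamma \subseteq W$ which is even (that is, $|X_\gamma \cap N(v)|$ is even for all $v \in V$), which is disjoint from $I \coloneqq \{w \in W \mid \chi(a(w)) \neq \chi(b(w))\}$ (so that $\chi$ is preserved on $F(W)$), and for which $\gamma$ additionally preserves $\chi$ on the middle vertices; conversely every map of this form is an automorphism of $R^{\chi}(G)$. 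So it suffices to produce one such non-empty $X_\gamma$.

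Because $\chi$ refines $c$, every $\chi$-color class lies inside some $F(w)$ or inside some $M(v)$, so the hypothesis splits into two cases: either (A) some pair $F(w)$ is cut by $S$, i.e.\ $w \notin I$ and $|F(w)\cap S| = 1$; or (B) no pair is cut but some $\chi$-class $D \subseteq M(v^{\ast})$ satisfies $\emptyset \neq D \cap S \subsetneq D$. I would dispatch (B) first. If some $w \in N(v^{\ast})$ were not in $I$, then $F(w)$ is not cut, so $F(w) \subseteq S$ or $F(w)\cap S = \emptyset$; since every middle vertex of $v^{\ast}$ is adjacent to \emph{exactly one} of $a(w), b(w)$, applying the ``all-or-nothing'' color-component condition to the color of $D$ against the color(s) of $F(w)$, in the direction appropriate to the sub-case, forces the vertices of $D\cap S$ (resp.\ of $D\setminus S$) to be adjacent to both of $a(w),b(w)$, a contradiction. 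So $N(v^{\ast}) \subseteq I$. But then, as $\chi \preceq c$ keeps the color sets of distinct pairs disjoint, any two middle vertices $m_A(v^{\ast}) \neq m_{A'}(v^{\ast})$ have different $\chi$-colored neighborhoods---a split pair $w \in A\triangle A' \subseteq N(v^{\ast})$ contributes $a(w)$ to one of them and $b(w)$ to the other---so by equitability of $\chi$ they receive different $\chi$-colors. Thus $D$ is a singleton, contradicting $D \cap S \neq \emptyset \neq D \setminus S$, and case (B) cannot occur.

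Hence we may assume (A): $F(w) = \{a(w),b(w)\}$ is a single $\chi$-color class with, say, $a(w)\in S$ and $b(w)\notin S$. The core of the argument is a local analysis at every $v \in N(w)$. Write $M^{+}_{w}(v) \coloneqq N(a(w))\cap M(v)$ and $M^{-}_{w}(v) \coloneqq N(b(w))\cap M(v)$. Since $a(w)$ and $b(w)$ share a $\chi$-color, equitability of $\chi$ gives $|D\cap M^{+}_{w}(v)| = |D\cap M^{-}_{w}(v)|$ for every $\chi$-class $D\subseteq M(v)$; combined with the color-component condition applied to $F(w)$ against $D$, in both directions, this forces $D\cap S$ to be exactly one of the two halves $D\cap M^{+}_{w}(v)$, $D\cap M^{-}_{w}(v)$ (and $D\setminus S$ the other) for every non-empty $D$. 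In particular every non-empty $\chi$-class of $M(v)$ is cut by $S$, and the same description holds at $v$ relative to any other cut pair $w'\in N(v)$, not just relative to $w$. Comparing, on each such $D$, the descriptions ``$D\cap S = D\cap M^{\pm}_{w}(v)$'' and ``$D\cap S = D\cap M^{\pm}_{w'}(v)$'' for different cut pairs $w,w'\in N(v)$ is what should pin down the parity of $|N(v)\cap X^{\ast}|$ for every $v$, where $X^{\ast}\coloneqq\{w\in W\setminus I \mid |F(w)\cap S| = 1\}$; that is, $X^{\ast}$ is an even set. As $w\in X^{\ast}$, the set $X^{\ast}$ is non-empty and disjoint from $I$, so, exactly as in the proof of Lemma~\ref{la:size-automorphism-group}, there is an automorphism $\gamma$ of $R(G)$ swapping exactly the pairs of $X^{\ast}$; a final check---again using that each cut $\chi$-class of a middle-class is the union of its two $M^{\pm}$-halves---shows $\gamma$ preserves $\chi$, so $\gamma$ is the desired non-trivial automorphism of $R^{\chi}(G)$.

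The step I expect to be the main obstacle is the global conclusion in case (A): proving that the set $X^{\ast}$ of cut pairs is even---equivalently, that the consistent flip forced along $N(w)$ really extends to an automorphism of all of $R(G)$---and that the resulting flip respects $\chi$ on the middle vertices. Both amount to controlling how the single set $S$ interacts with the $\mathbb{F}_2$-affine structure of all gadgets simultaneously, and the bookkeeping is delicate because $\chi$ is only assumed equitable rather than the stable refinement of $c$, so $\chi$ may split pairs or subdivide middle-classes in ways that no individualization would produce; this is also what makes the exclusion of case (B) somewhat technical.
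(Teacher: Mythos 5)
Your case (B) is handled correctly, and your overall plan (contrapositive, build a nontrivial automorphism) is the right one, but your case (A) has a genuine gap, which you yourself flag: you never actually prove that $X^{\ast} = \{w \in W\setminus I : |F(w)\cap S|=1\}$ satisfies the even-intersection condition $|X^{\ast}\cap N(v)|$ even for all $v$, and you never verify that the automorphism of $R(G)$ produced by Lemma~\ref{la:size-automorphism-group} preserves $\chi$ on middle vertices. Both are stated as ``what should pin down the parity'' and ``a final check,'' and the bookkeeping you defer is exactly the content of the lemma. Moreover, a step you assert in passing is itself unclear: after concluding that at a vertex $v\in N(w)$ every $\chi$-class $D\subseteq M(v)$ satisfies $D\cap S\in\{D\cap M^{+}_{w}(v),\ D\cap M^{-}_{w}(v)\}$, you write that the same holds relative to any other cut $w'\in N(v)$ and suggest ``comparing'' the signs; but nothing you have shown forces the choice of sign to be independent of $D$, so the comparison is not obviously a single $\mathbb{F}_2$ datum per pair $(v,w')$ that one can sum over $N(v)$.

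The paper avoids the parity issue entirely by not routing the automorphism through the $\mathbb{F}_2$-structure of Lemma~\ref{la:size-automorphism-group}. Its first step (Claim 1) shows directly that \emph{every} $\chi$-class of size at least $3$ is respected by $S$ --- not only when no $F(w)$ is cut, so this is a strictly stronger and more delicate statement than your case-(B) argument, and is proved by a case analysis on three middle vertices $m_{A_1},m_{A_2},m_{A_3}$ of the same $\chi$-color. Consequently every cut class has exactly two elements. The paper then defines $\gamma$ to transpose each cut two-element class (whether it is an $F(w)$ or a pair of middle vertices) and fix everything else, and verifies edge-preservation locally: for a swapped $a(w)\mapsto b(w)$ and $m\in N(a(w))$, equitability produces $m'\in N(b(w))\setminus N(a(w))$ with $\chi(m')=\chi(m)$, the color-component condition forces $|S\cap\{m,m'\}|=1$, and Claim 1 forces $\chi^{-1}(\chi(m))=\{m,m'\}$, so $\gamma(m)=m'$ is adjacent to $b(w)=\gamma(a(w))$. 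The evenness of $|X^{\ast}\cap N(v)|$ then follows \emph{a posteriori} from the existence of $\gamma$ and Lemma~\ref{la:cfi-gadget}; it is never established from the color-component hypothesis alone. To complete your argument along your route, you would in effect have to reprove Claim 1 (you need it anyway to know that the middle-vertex permutation forced by swapping $X^{\ast}$ is $\chi$-preserving), at which point the paper's direct construction of $\gamma$ is shorter.
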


\begin{proof}
 Suppose that $S$ is not a union of color classes.
 Then there is some color $i \in [n]$ such that $C_1 := S \cap \chi^{-1}(i) \neq \emptyset$ and $C_2 := \chi^{-1}(i) \setminus S \neq \emptyset$.
 Let $c$ be the (original) coloring of $R(G)$.
 Since $\chi \preceq c$, there either is a vertex $w \in W$ with $C_1 \cup C_2 \subseteq F(w)$ or there is some $v \in V$ with $C_1 \cup C_2 \subseteq M(v)$.
 
 \begin{claim}
  $S$ respects all colors $i \in [n]$ for which~$|\chi^{-1}(i)| \geq 3$.
  \proof
  Suppose otherwise that~$S$ does not respect~$i$ and choose $v \in V$ such that $\chi^{-1}(i) \subseteq M(v)$.
  Furthermore let $A_1,A_2,A_3 \subseteq N(v)$ be distinct subsets of even size such that $\chi(m_{A_j}(v)) = i$ for all $j \in \{1,2,3\}$, $m_{A_1}(v) \in S$ and $m_{A_2}(v) \notin S$.
  Choose $w_1 \in A_1 \setminus A_2$ and $w_2 \in A_2 \setminus A_1$.
  First observe that $F(w_1)$ and $F(w_2)$ form color classes with respect to $\chi$, because $\chi$ is equitable.
  Furthermore, $S$ does not respect the color classes $F(w_1)$ and $F(w_2)$.
  
  Now assume without loss of generality that $w_1 \in A_3$. We distinguish two cases.
  First suppose that $w_2 \in A_3$.
  If $m_{A_3}(v) \in S$ then $S$ has to respect the color the color class $F(w_2)$.
  Otherwise $m_{A_3}(v) \notin S$ and $S$ has to respect the color the color class $F(w_1)$.
  In both cases we obtain a contradiction.
  
  So $w_2 \notin A_3$. Let $w_3 \in A_1 \setminus A_3$.
  If $m_{A_3}(v) \in S$ then $S$ has to respect the color the color class $F(w_3)$.
  But $m_{A_2}(v)$ is only connected to one of the vertices $\{a(w_3),b(w_3)\}$ which again leads to a contradiction.
  So $m_{A_3}(v) \notin S$.
  But then, looking at $m_{A_2}(v)$ and $m_{A_3}(v)$, $S$ has to respect the color class $F(w_2)$.
  Once again, this is a contradiction.
  \uend
 \end{claim}
 
 Now let $M = \{i \in [n]\mid \text{$S$ does not respect color $i$}\}$.
 For each $i \in M$ we have $\chi^{-1}(i) = \{a_i,b_i\}$ where $a_i \in S$ and $b_i \in \overline{S}$.
 We define \[\gamma \colon V(R(G)) \rightarrow V(R(G))\colon v \rightarrow \begin{cases}
                                                                            v & \text{if } \chi(v) \notin M\\
                                                                            b_i & \text{if } v = a_i \text{ for some } i \in M\\
                                                                            a_i & \text{if } v = b_i \text{ for some } i \in M
                                                                           \end{cases}.\]
 \begin{claim}[resume]
  $\gamma \in \Aut(R^{\chi}(G))$.
  \proof
  Let $\{u,m\} \in E(R(G))$ where $u \in F(w)$ for some $w \in W$ and $m \in M(v)$ for some $v \in N(w)$.
  First suppose $\gamma(u) \neq u$.
  Then $\chi(u) \in M$ and $S$ does not respect the color class $F(w)$. Without loss of generality assume $u = a(w)$ and thus, $\gamma(u) = b(w)$.
  Since $\chi$ is equitable there is some $m' \in N_{R(G)}(b(w)) \setminus N_{R(G)}(a(w))$ with $\chi(m) = \chi(m')$.
  But then $|S \cap \{m,m'\}| = 1$ because $S$ is a color-component.
  Hence, $S$ does not respect color $\chi(m)$ and $\chi^{-1}(\chi(m)) = \{m,m'\}$ by Claim 1.
  So $\gamma(m) = m'$ and thus, $\{\gamma(u),\gamma(m)\} \in E(R(G))$.
  
  Similarly, it also follows that $\{\gamma(u),\gamma(m)\} \in E(R(G))$ if $\gamma(m) \neq m$. So $\gamma$ preserves the edge relation of $R(G)$ and thus, $\gamma \in \Aut(R^{\chi}(G))$.
  \uend
 \end{claim}
\end{proof}

\begin{corollary}
Let $G =(V,W,E)$ be a bipartite graph. If $R^{I}(G)$ is rigid then every color-component of $R^{I}(G)$ of an equitable coloring is a union of color classes.
\end{corollary}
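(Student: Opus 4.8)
The plan is to obtain this corollary as an almost immediate consequence of the preceding lemma, the only real work being to check that the hypotheses transfer correctly between the colored graphs $R(G)$, $R^{I}(G)$ and $R^{\chi}(G)$.

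First I would fix an equitable coloring $\chi$ of $R^{I}(G)$ together with a color-component $S$ of the colored graph $(V(R(G)),E(R(G)),\chi)$. The key observation is that $\chi$ is then also an equitable coloring of $R(G)$: the coloring of $R^{I}(G)$ refines that of $R(G)$ (it merely splits some classes $F(w)$ with $w \in I$ into singletons), so $\chi \preceq c_{R^{I}(G)} \preceq c_{R(G)}$, and the degree condition defining equitability depends only on the partition $\chi$ and the edge set $E(R(G))$, not on the underlying base coloring. For the same reason, being a color-component depends only on $\chi$ and $E(R(G))$, so $S$ is a color-component of $R^{\chi}(G)$ exactly in the sense required by the preceding lemma.

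Next I would apply the preceding lemma, which yields the dichotomy that either $S$ is a union of color classes of $\chi$ — which is precisely the desired conclusion — or $R^{\chi}(G)$ has a non-trivial automorphism. To exclude the second alternative I would invoke the hypothesis that $R^{I}(G)$ is rigid: since $\chi$ refines the coloring of $R^{I}(G)$, every color-preserving bijection with respect to $\chi$ is in particular color-preserving with respect to the coloring of $R^{I}(G)$, hence $\Aut(R^{\chi}(G)) \subseteq \Aut(R^{I}(G))$, which is trivial by assumption. Thus $R^{\chi}(G)$ is rigid, the second alternative of the lemma cannot occur, and $S$ is a union of color classes of $\chi$, as claimed.

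There is essentially no genuine obstacle here; the single point demanding care is the bookkeeping of which coloring refines which. One must confirm that an equitable coloring of $R^{I}(G)$ is legitimate input to the preceding lemma (which is phrased for $R(G)$) and that rigidity is inherited under color refinement — both of which follow from the monotonicity principle that a finer coloring admits fewer automorphisms, combined with the fact that both equitability and the color-component property are properties of the pair consisting of the partition and the edge set alone.
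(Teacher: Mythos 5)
Your argument is correct and is exactly the intended derivation: the paper states the corollary without proof because it follows from the preceding lemma by precisely the bookkeeping you carry out (an equitable coloring $\chi$ of $R^{I}(G)$ is also equitable for $R(G)$ since the edge sets coincide and $\chi \preceq c_{R^{I}(G)} \preceq c_{R(G)}$, and $\Aut(R^{\chi}(G)) \subseteq \Aut(R^{I}(G)) = \{\mathrm{id}\}$ rules out the second alternative of the lemma). No gaps.
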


As we mentioned before, color-components that are unions of color classes can be handled with a suitable cell selector.
We conclude that the lower bound of Theorem~\ref{thm:main} also applies to individualization-refinement algorithms that apply component recursion.

\section{Discussion}

We presented a construction resulting in graphs for which the search tree of individualization-refinement algorithms has exponential size.
As a consequence algorithms based on this paradigm have exponential worst case complexity.
In particular, this includes the to date fastest practical isomorphism solvers such as nauty, traces, bliss, saucy and conauto.

While the analysis in this paper is theoretical, constructions related to the ones presented in this paper produce graphs that constitute the practically most difficult instances to date~\cite{benchmark-paper}.

Our construction gives, for each constant $k$, a family of graphs that lead to exponential size search trees when the $k$-dimensional Weisfeiler-Leman algorithm is used as a refinement operator.
However, it is not clear whether we can obtain similar statements if the Weisfeiler-Leman dimension may depend on the number of vertices of the input graph.
The recent quasi-polynomial time algorithm due to Babai \cite{DBLP:conf/stoc/Babai16} for example repeatedly benefits from performing the $\Theta(\log n)$-dimensional Weisfeiler-Leman algorithm where $n$ is the number of vertices of the input graphs.
In particular, it is an interesting question whether we still obtain exponential size search trees if we allow the Weisfeiler-Leman dimension to be linear in the number of vertices.
A related question asks for the maximum Weisfeiler-Leman dimension of rigid graphs.

Another interesting question concerns the complexity of individualization-refinement algorithms for other types of structures.
Here, we are particularly interested in the group isomorphism problem (for groups given by multiplication table). What is the running time of the individualization-refinement paradigm for groups?
In fact, it is not even known whether groups have bounded Weisfeiler-Leman dimension. That is, whether there is a~$k$ for which the~$k$-dimensional Weisfeiler-Leman algorithm solves group isomorphism.

Finally, all known constructions leading to graphs with large Weisfeiler-Leman dimension are in some way based on the CFI-construction, but it would be desirable to have constructions that are conceptually different.
Let us remark again that isomorphism for the graphs we constructed can be decided im polynomial time using techniques from algorithmic group theory.
It would be interesting to find explicit constructions that are difficult for both group theoretic techniques and methods based on the Weisfeiler-Leman algorithm.

\bibliographystyle{abbrv}
\bibliography{literature}

\end{document}